\documentclass[pdflatex,sn-chicago]{sn-jnl}

\usepackage{graphicx}%
\usepackage{multirow}%
\usepackage{amsmath,amssymb,amsfonts}%
\usepackage{amsthm}%
\usepackage[scr=rsfs,scaled=0.96]{mathalfa}

\usepackage[title]{appendix}%
\usepackage{xcolor}%
\usepackage{textcomp}%
\usepackage{manyfoot}%
\usepackage{booktabs}%
\usepackage{algorithm}%
\usepackage{algorithmicx}%
\usepackage{algpseudocode}%
\usepackage{listings}%
\usepackage{lscape}

\theoremstyle{thmstyleone}%
\theoremstyle{thmstyletwo}%

\theoremstyle{thmstylethree}%

\theoremstyle{definition}
\newtheorem{Ass}{Assumption}
\newtheorem{thm}{Theorem}

\newtheorem{Remark}{Remark}%

\newtheorem{assump}{Assumption}

\begin{document}

\title[Short Title]{Focused Weighted-Average Least Squares Estimator}

\author*[1]{\fnm{Shou-Yung} \sur{Yin}}\email{syyin@mail.ntpu.edu.tw}

\affil[1]{\orgdiv{Department of Economics}, \orgname{National Taipei University}, \city{New Taipei}, \country{Taiwan}}

\abstract{
We propose a focused weighted-average least squares (FWALS) estimator that addresses the computational burden of focused model averaging. By semi-orthogonalizing auxiliary regressors, the weighting problem is reduced from $2^{k_2}$ sub-models to at most $k_2$ regressor-wise weights, yielding a tractable sub-optimal procedure. Under local-to-zero conditions, we derive the limiting distribution of FWALS for smooth focused functions and provide a plug-in AMSE criterion for data-driven weight selection. Simulations show that FWALS closely matches the focused information criterion (FIC) benchmark and delivers stable performance when focused function is designed for impulse response function. Prior-based WALS can be competitive in some settings, but its performance depends on the signal regime and the design of focused parameter. Overall, FWALS offers a practical and robust alternative with substantial computational savings.
}

\keywords{Model average, Focused information criterion, Orthogonal transformation}

\maketitle


\maketitle

\newpage
\section{Introduction}\label{Sec:Intro}

Model uncertainty has long been recognized as a central challenge in econometrics and applied statistics. In empirical research, investigators are often confronted with the difficulty of selecting among a wide range of candidate models, each of which may provide a plausible description of the data. Traditional model selection methods, such as those based on information criteria, are designed to select a single ``best'' model. However, it has been repeatedly documented that relying on a single selected model may ignore relevant information from other plausible specifications and lead to biased inference, particularly when the candidate models are of comparable quality. As a remedy, a large body of literature has developed around model averaging (MA) methods, which combine information across multiple models to deliver more robust parameter estimates and predictive performance; see, for example, \citet{steel2020model} for a comprehensive survey.

In the MA literature, one common approach is to define averaging estimators by minimizing a measurable asymptotic risk function for the entire parameter vector of the unconstrained model, with the risk evaluated with respect to estimators from candidate models. Representative studies include \citet{hansen2007least}, \citet{hansen2012jackknife}, \citet{zhang2023model-fa9}, \citet{zhang2023optimal-142}, and \citet{chen2025bregman-529}, among others. While these methods have been shown to be effective for whole model fit, empirical researchers are often primarily interested in a particular subset of structural parameters or functions of these parameters, rather than the nuisance parameters associated with control variables. This observation naturally motivates the development of focused model averaging methods, where the averaging weights are chosen to minimize the asymptotic risk of estimating a focused parameter vector, rather than the full parameter space.

The focused information criterion (FIC), originally proposed by \citet{claeskens2003focused} in the likelihood framework, provides a seminal foundation for this line of research. Subsequent work has extended the FIC idea to various frameworks, leading to a rich literature on focused model averaging; see, for example, \citet{hjort2006focused}, \citet{claeskens2008model}, \citet{zhang2012focused}, \citet{Liu2015}, \citet{Lu2015}, \citet{ditraglia2016using}, \citet{kitagawa2016model}, \citet{Lohmeyer2019}, and \citet{yin2021focused}. More recently, \citet{zhang2022unified} developed a unified representation of the asymptotic bias and variance, offering a general framework applicable to diverse settings. A recurring theme in this literature is the delicate task of accurately estimating the asymptotic bias and variance components, as these form the basis for both model selection and the determination of averaging weights. Despite important progress, implementing these methods in practice can be computationally demanding, particularly when the number of auxiliary regressors is large and the exponential growth of sub-models makes the calculation of weights numerically unstable.

The computational burden associated with traditional model averaging cannot be overlooked. For $k_2$ auxiliary regressors, the number of possible sub-models is $2^{k_2}$, and in order to evaluate the risk function or to calculate averaging weights, one must estimate all candidate sub-models. This requirement quickly becomes prohibitive even when $k_2$ is moderate, and the challenge is further compounded by the need to compute bias and variance terms for each sub-model in order to implement focused model averaging procedures. In practice, researchers may face a situation where evaluating the full model space is simply infeasible, and this limitation has motivated the search for computationally efficient alternatives.

Our paper builds on the orthogonalization idea of \citet{Magnus2010} and \citet{de2018weighted}, who proposed transforming auxiliary regressors to reduce correlations and thereby simplify the structure of model averaging estimators. By employing orthogonalized auxiliary regressors, we are able to recast the focused model averaging estimator into a form that depends only on a reduced number of weights associated with the auxiliary regressors, rather than weights defined over the full model space. This transformation drastically reduces the computational cost. Instead of evaluating exponentially many sub-models, we only need to work with at most $k_2$ terms, and the resulting weights can be interpreted as partial sums of the original model weights. While this construction does not yield the exact optimal solution implied by the full sub-model averaging, it provides a computationally attractive sub-optimal solution that remains closely aligned with the spirit of focused model averaging. As a result, the proposed approach facilitates empirical implementation in situation where traditional methods are infeasible due to computational constraints.

Our approach differs in important respects from two recent strands of the literature that also address this concern. First, \citet{charkhi2016minimum-d33} considered a minimum mean squared error model averaging estimator in likelihood models, where the averaging weights are derived from singleton equations and are only constrained to sum to one, without the non-negativity restrictions typically imposed in the MA literature. In contrast, our procedure requires that all weights lie within $[0,1]$, thereby ensuring interpretability as convex combinations and avoiding the instability that may arise from negative weights. Second, \citet{zhu2023scalable-a73} proposed a scalable frequentist model averaging method that employs a singular value decomposition to reduce the model space and proved the asymptotic equivalence of the minimum loss between the scalable and the traditional averaging estimator. Their focus was primarily on achieving statistical and computational efficiency without relying on local-to-zero assumption. Our contribution differs by targeting the focused parameter framework, where the primary interest lies in structural parameters and their transformations, and by demonstrating that orthogonalization-based weighting schemes can serve as a practical and theoretically coherent sub-optimal alternative to traditional FIC-based averaging.

In addition, our approach departs from the prior-based approaches in
\citet{Magnus2010,luca2022sampling-9be,luca2025bayesian-64a}. Prior-based approach employs posterior mean shrinkage induced by a chosen prior within the Normal location framework, which implies shrinkage weights for the coefficients. By contrast, our approach determines the weights by minimizing a plug-in asymptotic mean squared error~(AMSE) constructed for the focused parameter, so the weights are chosen to optimize the bias and variance trade-off of the focused estimand rather than a generic risk criterion.

Our simulation studies provide further support for the proposed method. In the baseline designs, the proposed approach, FIC from \citet{Liu2015}, and minimum mean squared error model averaging estimator from \citet{charkhi2016minimum-d33} perform comparably well, demonstrating that the transformation does not compromise the risk. However, when examining impulse response function (IRF) horizons, notable differences emerge. While our method and FIC remain closely aligned and exhibit stable risk performance across horizons, the approach suggested by \citet{charkhi2016minimum-d33} shows relatively unstable performance depending on the chosen horizon for IRF. These findings confirm that the proposed approach delivers stable performance and is computationally efficient, offering a strong alternative to traditional focused averaging estimators.

Taken together, the contribution of this paper is twofold. First, we extend the focused model averaging framework by introducing a weighted-average least squares estimator based on orthogonalized auxiliary regressors, which provides a computationally tractable solution for focused parameter estimation. Second, we clarify the relationship between our approach and existing methods in the literature, highlighting both the advantages of using the weights obtained from the AMSE and the role of focusing on structural parameters. 

The remainder of the paper is organized as follows. Section~\ref{Sec:Model} introduces the model specification and formalizes the averaging estimators. Section~3 introduces the proposed sub-optimal averaging estimator. Section~4 discusses its properties and investigates the limiting behavior. Section~5 presents simulation results to illustrate the computational and statistical advantages of the proposed method. Section~6 concludes.

\section{Model Specification}\label{Sec:Model}
In this section, we describe the model specification for the following discussion. We follow \citet{Magnus2010} and \citet{Liu2015}, and consider the regression framework as:
\begin{align}
y_i = \MBx_{i}^{\tp}\Bbeta+\epsilon_i= \MBx_{1i}^{\tp}\Bbeta_1+\MBx_{2i}^{\tp}\Bbeta_2+\epsilon_i,\quad i=1,...,N,\label{eq:main1}
\end{align}
where $y_i$ denotes the target variable of interest, $\MBx_{i}=[\MBx_{1i}^{\tp}\;\MBx_{2i}^{\tp}]^{\tp}$, and $\Bbeta=[\Bbeta_1^{\tp}\;\Bbeta_2^{\tp}]^{\tp}$. $\MBx_{1i}$ and $\MBx_{2i}$ represent the core regressors and the auxiliary regressors, respectively; $\epsilon_i$ is the random error and $N$ represents the sample size. The dimensions of $\MBx_{1i}$ and $\MBx_{2i}$ are $k_1$ and $k_2$ with the corresponding slope coefficients $\Bbeta_1$ and $\Bbeta_2$, and we further define that $k=k_1+k_2$ and $k$ is finite. In the literature, this model specification allows researchers to keep the core regressors for all possible sub-models by considering different combinations of the auxiliary regressors for statistical inference.

To have an easy interpretation for all sub-models estimation, we first rewrite Equation \eqref{eq:main1} as a matrix notation:
\begin{align}
\MBy = \MBX\Bbeta+\Bepsilon=\MBX_1\Bbeta_1+\MBX_2\Bbeta_2+\Bepsilon,
\end{align}
where $\MBy=[y_1\;\dots\;y_N]^{\tp}$, $\MBX=[\MBX_{1}\;\MBX_{2}]$, $\MBX_1=[\MBx_{11}\;\dots\;\MBx_{1N}]^{\tp}$, $\MBX_2=[\MBx_{21}\;\dots\;\MBx_{2N}]^{\tp}$ and $\Bepsilon=[\epsilon_1\;\dots\;\epsilon_N]^{\tp}$. Furthermore, we define a $k_{2m}\times k_2$ selection matrix $\BPi_m$ which selects the auxiliary regressors for sub-model $m$. For example, the included auxiliary regressors can be represented as $\MBX_{2m}=\MBX_2\BPi_m^{\tp}$. Let 
\begin{align}
\MBS_{m}=\begin{bmatrix}
\MBI_{k_1} & \MBzero_{k_1\times k_{2m}}\\
\MBzero_{k_2\times k_1} & \BPi_m^{\tp}
\end{bmatrix}.
\end{align}
Then we can define the least squares estimator of $\Bbeta=(\Bbeta_1^{\tp},\Bbeta_2^{\tp})^{\tp}$ for sub-model $m$:
\begin{align}
\hat{\Bbeta}_m=\MBS_m\left(\MBS_m^{\tp}\MBX^{\tp}\MBX\MBS_m\right)^{-1}\MBS_m^{\tp}\MBX^{\tp}\MBy.
\end{align}
The dimension of the above estimator is $k\times 1$.

Following \citet{Liu2015}, we define the parameter of interest by introducing a smooth real-valued function $\mu(\Bbeta_1)$. Accordingly, for any sub-model, we can obtain the estimate of this function $\hat{\mu}_m=\mu(\hat{\Bbeta}_{1m})$. Based on these estimates the averaging estimator for the focused parameter suggested by \citet{Liu2015} follows that
\begin{align}
\hat{\mu}(\MBw)=\sum_{m=1}^Mw_m\hat{\mu}_m,\label{eq:fic_avg}
\end{align}
where the weight vector, $\MBw=[w_1\;\dots\;w_M]^{\tp}$, satisfies the conditions:
\begin{align}
\mathcal{H}=\left\{\MBw\in[0,1]^{M}:\sum_{m=1}^Mw_m=1\right\},\label{eq:weights}
\end{align}
where $M$ denotes the number of total sub-models. To obtain the averaging estimator, we need to estimate $M$ sub-models. In the above case, $M = 2^{k_2}$. This number increases exponentially and results in substantial computational burden even when $k_2$ is moderate. In the next section, we introduce an alternative approach that substantially reduces computation time while preserving the advantages of the averaging procedure, as demonstrated in our simulation study.

\section{Sub-optimal Averaging Estimator}

\subsection{Focused Weighted-Average Least Squares Estimator}
In this section, we propose an averaging estimator of the focused parameter based on orthogonalized auxiliary regressors. The idea of orthogonalized auxiliary regressors is introduced by \citet{Magnus2010} and \citet{de2018weighted}, and this method defines new auxiliary regressors as
\begin{align}
\MBX_2^* = \MBX_2\hat{\BLambda}\hat{\MBP}^{-1/2},\label{eq:ortho}
\end{align}
where $\hat{\BLambda}=\DIAG\left(\DIAG\left(\frac{\MBX_2^{\tp}\MBM_1\MBX_2}{N}\right)\right)^{-1/2}$, $\hat{\MBP}=\hat{\BLambda}\frac{\MBX_2^{\tp}\MBM_1\MBX_2}{N}\hat{\BLambda}$ and $\MBM_1=\MBI_{N}-\MBX_1(\MBX_1^{\tp}\MBX_1)^{-1}\MBX_1^{\tp}$.
Given this semi-orthogonalization, following \citet{Magnus2010}, we can define the weighted-average least squares estimator~(WALS) of $\Bbeta_1$ as:
\begin{align}
\hat{\Bbeta}_{1\mathrm{WALS}}=\sum_{m=1}^Mw_m\hat{\Bbeta}_{1m},\label{eq:WALS}
\end{align}
and the sub-model estimate follows:
\begin{align}
\hat{\Bbeta}_{1m}=\hat{\Bbeta}_{1\mathrm{narrow}}-\hat{\BXi}\hat{\BLambda}\hat{\MBP}^{-1/2}(\BPi_m^{\tp}\BPi_m)\hat{\Bbeta}_2,\label{eq:b1m}
\end{align}
where $\hat{\BXi} = (\MBX_1^{\tp}\MBX_1)^{-1}\MBX_1^{\tp}\MBX_2$, $\hat{\Bbeta}_{1\mathrm{narrow}}= (\MBX_1^{\tp}\MBX_1)^{-1}\MBX_1^{\tp}\MBy$ and $\hat{\Bbeta}_2=\frac{\MBX_2^{*\tp}\MBM_1\MBy}{N}$.

The above results imply that the WALS for $\Bbeta_1$ defined in Equation \eqref{eq:WALS} can be rewritten as:
\begin{align}
\hat{\Bbeta}_{1\mathrm{WALS}} =& \hat{\Bbeta}_{1\mathrm{narrow}}-\hat{\BXi}\hat{\BLambda}\hat{\MBP}^{-1/2}\sum_{m=1}^Mw_m(\BPi_m^{\tp}\BPi_m)\hat{\Bbeta}_2\notag\\
                              =& \hat{\Bbeta}_{1\mathrm{narrow}}-\hat{\BXi}\hat{\BLambda}\hat{\MBP}^{-1/2}\tilde\MBW \hat{\Bbeta}_2\label{eq:WALS2}
\end{align}
where $\tilde\MBW=\sum_{m=1}^Mw_m(\BPi_m^{\tp}\BPi_m)$, and $\tilde\MBW$ is a diagonal matrix because of the property of orthogonalized auxiliary regressors. Accordingly, we can further obtain the following result that
\begin{align}
\tilde\MBW \hat{\Bbeta}_2 = \begin{bmatrix} \tilde{w}_1 \hat{\beta}_{21}&\cdots&\tilde{w}_{k_2} \hat{\beta}_{2k_2}\end{bmatrix}^{\tp},\label{WB2}
\end{align}
where $\tilde{w}_j$s for $j=1,...,k_2$ are the diagonal elements of $\tilde\MBW$. These diagonal elements play a different role of weights compared with $w_m$ defined in Equation \eqref{eq:weights} because $\tilde{w}_j$ is a partial sum of $w_m$s depending on the selection matrix $\BPi_m$. Based on the imposed conditions of $w_m$, we can also impose a weak condition on $\tilde{w}_j$s as:
\begin{align}
\tilde{\mathcal{H}}=\left\{\tilde\MBw\in[0,1]^{k_2}\right\}.\label{eq:weights2}
\end{align}
As discussed in \citet{Magnus2010}, while the WALS for $\Bbeta_1$ can be simplified as the estimator involving only $k_2$ $\tilde{w}_j$s, this approach does not consider the structure weights, $w_m$, and therefore it cannot be the optimal solution compared with the approach taking all possible sub-models into account. However, because of the number of weights to be optimized is $k_2$, it reduces the computation time especially when simulated confidence interval approach is used for providing the statistical inference.

Now we can define the focused WALS as follows:
\begin{align}
\hat{\mu}(\tilde\MBw) = \mu(\hat{\Bbeta}_{1\mathrm{WALS}}).\label{eq:fic_wals}
\end{align}
The key distinction between Equations \eqref{eq:fic_avg} and \eqref{eq:fic_wals} lies in the sequence by which the averaging estimator is constructed. In the focused WALS, we first form the averaging estimator of $\Bbeta_1$, denoted $\hat{\Bbeta}_{1\mathrm{WALS}}$, and then obtain the estimate of $\mu$ by directly applying the focused function to $\hat{\Bbeta}_{1\mathrm{WALS}}$. In contrast, the focus averaging estimator of \citet{Liu2015} computes the focused parameter within each sub-model and then averages these results across models. Although the procedures differ, their asymptotic behavior remains equivalent if the sub-models and the weights are the same. With the delta method, applying the focused function either before or after averaging does not alter the limiting distribution. Intuitively, this is because the focused function is smooth, so its linear approximation around the true parameter ensures that the order of applying averaging and transformation becomes irrelevant in large samples. Accordingly, the AMSE from the focused WALS can be treated as an alternative way to evaluate the model and obtain the weights. However, the weights formed in the proposed approach are different from \citet{Liu2015} so the asymptotic properties are also different. The formal results are established in the next section.

\begin{Remark}
The specific construction of the averaging estimator in \citet{Liu2015} prevents a similar reduction in computational burden, even if the auxiliary regressors are transformed via the semi-orthogonalization in Equation \eqref{eq:ortho}. Recall from Equation \eqref{eq:fic_avg} that for a given sub-model $m$, the estimator of the focused parameter is $\hat{\mu}_m=\mu(\hat{\Bbeta}_{1m})$, where $\hat{\Bbeta}_{1m}$ is defined in Equation \eqref{eq:b1m}. For any two distinct sub-models $m$ and $m'$, the structural forms of $\hat{\Bbeta}_{1m}$ and $\hat{\Bbeta}_{1m'}$ share the common base component $\hat{\Bbeta}_{1\mathrm{narrow}}$. Consequently, the sub-model estimators $\hat{\mu}_m$ and $\hat{\mu}_{m'}$ are inherently correlated through this shared component, despite the orthogonality of the auxiliary regressors themselves. Because these sub-model estimators are correlated, the cross-product terms do not vanish. As a result, the weight optimization problem can be simplified, leaving the $2^{k_2}$ computational burden unresolved in the framework adopted in \citet{Liu2015}. However, it is worth noting that this computational bottleneck is strictly tied to the evaluation of a focused parameter. If the objective were instead the global model fit with the core regressors always included considered in \citet{zhang2019inference}, the orthogonal transformation can be effective, thereby eliminating the combinatorial burden entirely.\footnote{When considering the global fit, the fitted value for sub-model $m$ can be decomposed using the Frisch-Waugh-Lovell theorem as $\hat{\MBy}_m = \MBX_1(\MBX_1^{\tp}\MBX_1)^{-1}\MBX_1^{\tp} \MBy + \sum_{j \in m} \MBM_1 \MBx_{2j}^* \hat{\beta}_{2j}^*$, where $\MBx_{2j}^*$ is the $j$th column of $\MBX_2^*$, and $\hat{\beta}_{2j}^*$ is the $j$th element of $\hat{\Bbeta}_2=\frac{\MBX_2^{*\tp}\MBM_1\MBy}{N}$. Because the orthogonalization ensures $\MBx_{2j}^{*\tp}\MBM_1\MBx_{2l}^* = 0$ for $j \neq l$, all cross-product terms in the risk criterion naturally vanish, reducing the optimization from $2^{k_2}$ combinations to $k_2$ combinations problem.}
\end{Remark}

\section{The Asymptotic Risk of Focused WALS}
In this section, we will discuss the limiting properties of the proposed focused WALS under the local-to-zero framework. Before going to the details of the asymptotic framework. We first state the assumptions made in this paper.
\begin{Ass}[Local-to-zero condition]\label{ass:loca}
$\Bbeta_2=\frac{\Bdelta}{\sqrt{N}}$.
\end{Ass}
\begin{Ass}[Sample moments convergence]\label{ass:lln}
$\frac{\MBX^{\tp}\MBX}{N}\CP\MBQ=\mathbb{E}(\MBx_i\MBx_i^{\tp})$, and $\MBQ$ can be partitioned as $\begin{bmatrix}\MBQ_{11}&\MBQ{12}\\ \MBQ_{21}&\MBQ_{22}\end{bmatrix}$ using the fact that $\MBX=[\MBX_{1}\;\MBX_{2}]$.
\end{Ass}
\begin{Ass}[Central limit theorem]\label{ass:clt}
$\frac{\MBX^{\tp}\Bepsilon}{\sqrt{N}}\CD\MBR\sim\mathrm{N}(\MBzero,\BOmega)$.
\end{Ass}

Assumption \ref{ass:loca} is a standard condition in the literature that facilitates the assessment of the trade-off between asymptotic bias and variance. We adopt Assumptions \ref{ass:lln} and \ref{ass:clt} as high-level assumptions for convenience. Furthermore, $\BOmega$ is assumed to be general to encompass cases of heteroskedasticity and mixing processes with different moments requirements in time series data.
Moreover, Assumption \ref{ass:lln} further implies that $\hat{\Blambda}$ and $\hat{\MBP}$ in Equation \eqref{eq:ortho} have well defined limits which follows that $\hat{\Blambda}\CP\DIAG\left(\DIAG\left(\MBQ_{22}-\MBQ_{21}\MBQ_{11}^{-1}\MBQ_{12}\right)\right)^{-1/2}=\BLambda$ and $\hat{\MBP}\CP\BLambda\left(\MBQ_{22}-\MBQ_{21}\MBQ_{11}^{-1}\MBQ_{12}\right)\BLambda=\MBP$.

\subsection{Asymptotic Mean Squared Error}
Under Assumptions \ref{ass:loca}-\ref{ass:clt}, and a non-stochastic $\tilde{\MBW}$, we have the following result as $N\rightarrow\infty$:
\begin{thm}\label{thm:beta}
Under Assumptions \ref{ass:loca}-\ref{ass:clt}, and a non-stochastic $\tilde{\MBW}$:
\begin{align*}
\sqrt{N}(\hat{\Bbeta}_{1\mathrm{WALS}}-\Bbeta_1)&\CD\BXi\MBC\left(\MBI-\tilde{\MBW}\right)\MBC^{-1}\Bdelta+\BPsi\MBR\\
&\equiv \MBR_{\Bbeta_1}(\tilde{\MBW})\\
&\sim\mathrm{N}\left(\BXi\MBC\left(\MBI-\tilde{\MBW}\right)\MBC^{-1}\Bdelta,\BPsi\BOmega\BPsi^{\tp}\right),
\end{align*}
and
\begin{align*}
\sqrt{N}\hat{\Bbeta}_2\CD\MBC^{-1}\Bdelta+\MBB\MBR,
\end{align*}
where $\BXi=p\lim_{N\rightarrow\infty}\hat{\BXi}=p\lim_{N\rightarrow\infty}\hat{\MBQ}_{11}^{-1}\hat{\MBQ}_{12}=\MBQ_{11}^{-1}\MBQ_{12}$, $\MBC=p\lim_{N\rightarrow\infty}\hat{\BLambda}\hat{\MBP}^{-1/2}=\BLambda\MBP^{-1/2}$, $\MBB=\left[-\MBC^{\tp}\BXi^{\tp}\quad\MBC^{\tp}\right]$ and $\BPsi = \left[\MBQ_{11}^{-1}+\BXi\MBC\tilde{\MBW}\MBC^{\tp}\BXi^{\tp}\quad-\BXi\MBC\tilde{\MBW}\MBC^{\tp}\right]$.
\end{thm}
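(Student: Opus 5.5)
The plan is to derive explicit finite-sample representations of $\sqrt{N}(\hat{\Bbeta}_{1\mathrm{narrow}}-\Bbeta_1)$ and $\sqrt{N}\hat{\Bbeta}_2$ as continuous functions of sample moments and of $\MBX^{\tp}\Bepsilon/\sqrt{N}$. I will then pass to the limit with Assumptions \ref{ass:lln}--\ref{ass:clt}, the continuous mapping theorem and Slutsky's lemma. The representation \eqref{eq:WALS2} of $\hat{\Bbeta}_{1\mathrm{WALS}}$, with $\tilde{\MBW}$ non-stochastic, then delivers the result by linearity. Write $\hat{\MBC}=\hat{\BLambda}\hat{\MBP}^{-1/2}$ and $\hat{\MBS}=\MBX_2^{\tp}\MBM_1\MBX_2/N$.

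\textbf{Step 1: the auxiliary estimator.} The key algebraic fact is that, since $\hat{\MBP}=\hat{\BLambda}\hat{\MBS}\hat{\BLambda}$ and $\hat{\MBP}^{-1/2}$ is symmetric, we have $\hat{\MBC}^{\tp}\hat{\MBS}\hat{\MBC}=\hat{\MBP}^{-1/2}\hat{\MBP}\hat{\MBP}^{-1/2}=\MBI$. Equivalently, $\hat{\MBC}^{\tp}\hat{\MBS}=\hat{\MBC}^{-1}=\hat{\MBP}^{1/2}\hat{\BLambda}^{-1}$. Substituting $\MBM_1\MBy=\MBM_1\MBX_2\Bbeta_2+\MBM_1\Bepsilon$ and Assumption \ref{ass:loca} gives
\begin{align*}
\sqrt{N}\hat{\Bbeta}_2=\hat{\MBC}^{\tp}\hat{\MBS}\Bdelta+\hat{\MBC}^{\tp}\frac{\MBX_2^{\tp}\MBM_1\Bepsilon}{\sqrt{N}}=\hat{\MBC}^{-1}\Bdelta+\hat{\MBC}^{\tp}\left[-\hat{\BXi}^{\tp}\quad\MBI\right]\frac{\MBX^{\tp}\Bepsilon}{\sqrt{N}}.
\end{align*}
The second equality uses $\MBX_2^{\tp}\MBM_1\Bepsilon=\MBX_2^{\tp}\Bepsilon-\hat{\BXi}^{\tp}\MBX_1^{\tp}\Bepsilon$. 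By Assumption \ref{ass:lln} and continuity of the matrix square root and inverse, which is valid provided $\MBQ_{22}-\MBQ_{21}\MBQ_{11}^{-1}\MBQ_{12}$ is positive definite, we get $\hat{\MBC}\CP\MBC$ and $\hat{\BXi}\CP\BXi$. Assumption \ref{ass:clt} then yields $\sqrt{N}\hat{\Bbeta}_2\CD\MBC^{-1}\Bdelta+\MBB\MBR$, which is the second claim.

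\textbf{Step 2: the narrow estimator and assembly.} Since $\MBy=\MBX_1\Bbeta_1+\MBX_2\Bdelta/\sqrt{N}+\Bepsilon$, we have
\begin{align*}
\sqrt{N}(\hat{\Bbeta}_{1\mathrm{narrow}}-\Bbeta_1)=\hat{\BXi}\Bdelta+\left[\hat{\MBQ}_{11}^{-1}\quad\MBzero\right]\frac{\MBX^{\tp}\Bepsilon}{\sqrt{N}}.
\end{align*}
Plugging both expansions into \eqref{eq:WALS2} gives $\sqrt{N}(\hat{\Bbeta}_{1\mathrm{WALS}}-\Bbeta_1)$ as
\begin{align*}
\hat{\BXi}\left(\MBI-\hat{\MBC}\tilde{\MBW}\hat{\MBC}^{-1}\right)\Bdelta+\left(\left[\hat{\MBQ}_{11}^{-1}\quad\MBzero\right]-\hat{\BXi}\hat{\MBC}\tilde{\MBW}\hat{\MBC}^{\tp}\left[-\hat{\BXi}^{\tp}\quad\MBI\right]\right)\frac{\MBX^{\tp}\Bepsilon}{\sqrt{N}}.
\end{align*}
Writing $\MBI=\hat{\MBC}\hat{\MBC}^{-1}$, the bias coefficient becomes $\hat{\BXi}\hat{\MBC}(\MBI-\tilde{\MBW})\hat{\MBC}^{-1}$. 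Expanding the second bracket gives exactly the blocks $\hat{\MBQ}_{11}^{-1}+\hat{\BXi}\hat{\MBC}\tilde{\MBW}\hat{\MBC}^{\tp}\hat{\BXi}^{\tp}$ and $-\hat{\BXi}\hat{\MBC}\tilde{\MBW}\hat{\MBC}^{\tp}$, i.e.\ the sample analogue of $\BPsi$. Because $\tilde{\MBW}$ is non-stochastic, Slutsky's lemma gives the limit $\BXi\MBC(\MBI-\tilde{\MBW})\MBC^{-1}\Bdelta+\BPsi\MBR$. Normality of $\MBR$ then gives the stated Gaussian law with mean $\BXi\MBC(\MBI-\tilde{\MBW})\MBC^{-1}\Bdelta$ and variance $\BPsi\BOmega\BPsi^{\tp}$.

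\textbf{Main obstacle.} The only delicate point is the identity $\hat{\MBC}^{\tp}\hat{\MBS}=\hat{\MBC}^{-1}$ in Step 1. It is what turns the bias term into $\MBC^{-1}\Bdelta$ rather than something involving $\MBQ$ explicitly. It relies on choosing $\hat{\MBP}^{-1/2}$ as the symmetric square root and on $\hat{\BLambda}$ being diagonal and invertible. I would verify it carefully first, since everything else is routine bookkeeping of the block structure of $\MBX^{\tp}\Bepsilon/\sqrt{N}$.
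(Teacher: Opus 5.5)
Your proposal is correct and follows essentially the same route as the paper. The paper writes $\mathbf{y}$ in terms of $\mathbf{X}_2^*$ and $\boldsymbol{\beta}_2^*=\hat{\mathbf{P}}^{1/2}\hat{\boldsymbol{\Lambda}}^{-1}\boldsymbol{\beta}_2$ and uses $\mathbf{X}_2^{*\top}\mathbf{M}_1\mathbf{X}_2^{*}/N=\mathbf{I}$, which is exactly your identity $\hat{\mathbf{C}}^{\top}\hat{\mathbf{S}}\hat{\mathbf{C}}=\mathbf{I}$; it then substitutes the expansions of $\sqrt{N}\hat{\boldsymbol{\beta}}_2$ and of the narrow estimator into the representation of $\hat{\boldsymbol{\beta}}_{1\mathrm{WALS}}$, collects the bias and noise blocks, and passes to the limit via Slutsky, with your remarks on the symmetric square root and on positive definiteness of $\mathbf{Q}_{22}-\mathbf{Q}_{21}\mathbf{Q}_{11}^{-1}\mathbf{Q}_{12}$ simply making explicit what the paper uses implicitly.
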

The first result from the above theorem provides the asymptotic bias and variance of $\hat{\Bbeta}_{1\mathrm{WALS}}$ when the WALS is adopted under the local-to-zero framework. It is easy to observe that when $\tilde{\MBW}=\MBI_{k_2}$ the asymptotic bias is zero, and the WALS becomes the estimator considering all auxiliary regressors. If we let $\tilde{\MBW}=\MBzero$, the asymptotic bias becomes that $\BXi\MBC\MBC^{-1}\Bdelta=\MBQ_{11}^{-1}\MBQ_{12}\Bdelta$, and the asymptotic variance is $\MBQ_{11}^{-1}\BOmega_{11}\MBQ_{11}^{-1}$. These results imply the trade-off between the bias and variance via the choice of $\tilde{\MBW}$. Moreover, the asymptotic variance inherits the nice property from the WALS which does not need to involve the calculations between any two sub-models but only focus on the covariance between $\MBX_1$ and $\MBX_2^*$. The second result on the other hand shows that $\sqrt{N}\hat{\Bbeta}_2$ is an unbiased estimate of $\MBC^{-1}\Bdelta$ which plays an important role of recovering the information of the asymptotic bias of $\hat{\Bbeta}_{1\mathrm{WALS}}$.

Given the previous result, we can extend it by incorporating the focused function. Let $\MBD_{\Bbeta_1}=\frac{\partial\mu}{\partial\Bbeta_1}$ and assume the partial derivatives are continuous over all real values of $\Bbeta_1$. Then we can have the following theorem by applying the delta method.
\begin{thm}\label{thm:mu}
Under Assumptions \ref{ass:loca}-\ref{ass:clt}, suppose $N\rightarrow\infty$, we have
\begin{align*}
\sqrt{N}(\mu(\hat{\Bbeta}_{1\mathrm{WALS}})-\mu(\Bbeta_1))&\CD\MBD_{\Bbeta_1}^{\tp}\BXi\MBC\left(\MBI-\tilde{\MBW}\right)\MBC^{-1}\Bdelta+\MBD_{\Bbeta_1}^{\tp}\BPsi\MBR\\
&\equiv R_{\mu}(\tilde{\MBW}).
\end{align*}
\end{thm}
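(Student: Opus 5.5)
The argument is a delta-method step built on Theorem~\ref{thm:beta}, which already gives $\sqrt{N}(\hat{\Bbeta}_{1\mathrm{WALS}}-\Bbeta_1)\CD\MBR_{\Bbeta_1}(\tilde{\MBW})$. The only extra input is the smoothness of $\mu$: its gradient $\MBD_{\Bbeta_1}$ is assumed continuous everywhere. As in Theorem~\ref{thm:beta}, $\tilde{\MBW}$ is taken non-stochastic and $\Bbeta_1$ is fixed. Only $\Bbeta_2$ drifts under Assumption~\ref{ass:loca}, so the derivative is evaluated at a fixed point.

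First I apply a mean-value expansion componentwise (here $\mu$ is real-valued, so a single one suffices):
\begin{align*}
\sqrt{N}\bigl(\mu(\hat{\Bbeta}_{1\mathrm{WALS}})-\mu(\Bbeta_1)\bigr)=\MBD_{\Bbeta_1}(\bar{\Bbeta}_1)^{\tp}\sqrt{N}(\hat{\Bbeta}_{1\mathrm{WALS}}-\Bbeta_1),
\end{align*}
where $\bar{\Bbeta}_1$ lies on the segment between $\hat{\Bbeta}_{1\mathrm{WALS}}$ and $\Bbeta_1$.

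Second, I show consistency: $\sqrt{N}(\hat{\Bbeta}_{1\mathrm{WALS}}-\Bbeta_1)=O_p(1)$ by Theorem~\ref{thm:beta}, so $\hat{\Bbeta}_{1\mathrm{WALS}}\CP\Bbeta_1$. It follows that $\bar{\Bbeta}_1\CP\Bbeta_1$. By continuity of the partial derivatives and the continuous mapping theorem, $\MBD_{\Bbeta_1}(\bar{\Bbeta}_1)\CP\MBD_{\Bbeta_1}(\Bbeta_1)$.

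Third, Slutsky's theorem combines the probability limit of the gradient with the distributional limit from Theorem~\ref{thm:beta}. The left-hand side therefore converges to $\MBD_{\Bbeta_1}^{\tp}\MBR_{\Bbeta_1}(\tilde{\MBW})$. Substituting the representation $\BXi\MBC(\MBI-\tilde{\MBW})\MBC^{-1}\Bdelta+\BPsi\MBR$ and distributing $\MBD_{\Bbeta_1}^{\tp}$ gives exactly $R_\mu(\tilde{\MBW})$.

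The only point needing care, and the expected obstacle, is that the limit must be stated jointly in the single random vector $\MBR$. The bias term is deterministic, so this is not a real difficulty: Theorem~\ref{thm:beta} already expresses the limit as an affine function of $\MBR$, and premultiplying by a constant vector preserves that representation. As a corollary one can record that $R_\mu(\tilde{\MBW})\sim\mathrm{N}\bigl(\MBD_{\Bbeta_1}^{\tp}\BXi\MBC(\MBI-\tilde{\MBW})\MBC^{-1}\Bdelta,\ \MBD_{\Bbeta_1}^{\tp}\BPsi\BOmega\BPsi^{\tp}\MBD_{\Bbeta_1}\bigr)$. This normal form is what the subsequent AMSE criterion uses.
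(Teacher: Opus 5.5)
Your proposal is correct and is the same argument as the paper's: the paper proves Theorem~\ref{thm:mu} by a one-line appeal to the delta method applied to the limit in Theorem~\ref{thm:beta}. You spell out the steps the paper leaves implicit, namely the mean-value expansion of the real-valued $\mu$, consistency of the WALS estimator of the fixed core coefficients (so the gradient at the intermediate point converges in probability), and the Slutsky step that keeps the limit affine in the same normal vector.
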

This theorem implies that the AMSE of the focused WALS $\mu(\hat{\Bbeta}_{1\mathrm{WALS}})$ is
\begin{align}
\mathrm{AMSE}(\mu(\hat{\Bbeta}_{1\mathrm{WALS}}))=&\MBD_{\Bbeta_1}^{\tp}\BXi\Bdelta\Bdelta^{\tp}\BXi^{\tp}\MBD_{\Bbeta_1}+\tilde{\MBw}^{\tp}\MBV\MBC^{-1}\Bdelta\Bdelta^{\tp}\MBC\MBV\tilde{\MBw}-2\tilde{\MBw}^{\tp}\MBV\MBC^{-1}\Bdelta\Bdelta^{\tp}\BXi^{\tp}\MBD_{\Bbeta_1}\notag\\
&+\MBD_{\Bbeta_1}^{\tp}\MBQ_{11}^{-1}\MBD_{\Bbeta_1}+\tilde{\MBw}^{\tp}\MBV\MBB\BOmega\MBB^{\tp}\MBV\tilde{\MBw}+2\tilde{\MBw}^{\tp}\MBV\MBB\BOmega\MBH\MBQ_{11}^{-1}\MBD_{\Bbeta_1},\label{eq:AMSE}
\end{align}
where $\MBV=\DIAG(\MBD_{\Bbeta_1}^{\tp}\BXi\MBC)=\DIAG(\MBC^{\tp}\BXi^{\tp}\MBD_{\Bbeta_1})$ and $\MBH=\left[\MBI\quad\MBzero\right]^{\tp}$; $\MBB$ has been defined in Theorem \ref{thm:beta}. The optimal $\tilde{\MBW}$, $\tilde{\MBW}^o$, can be obtained by minimizing the AMSE over $\tilde{\MBw}\in\tilde{\mathcal{H}}$ when $\BXi$, $\MBD_{\Bbeta_1}$, $\BLambda$, $\MBP$ and $\Bdelta$ are fixed, that is defined as
\begin{align}
\tilde{\MBw}^o=\arg\min_{\tilde{\MBw}\in\tilde{\mathcal{H}}}\mathrm{AMSE}(\mu(\hat{\Bbeta}_{1\mathrm{WALS}})),
\end{align}
and $\tilde{\MBW}^o=\DIAG(\tilde{\MBw}^o)$.

\subsection{Plug-in Focused WALS}
We have discussed the AMSE for the focused WALS; however, the optimal solution for $\tilde{\MBW}$ is infeasible because it depends on the unknown parameters including $\MBD_{\Bbeta_1}$, $\MBQ_{11}$, $\MBB$, $\MBC$, $\MBV$, $\BXi$, and $\Bdelta$ in Equation \eqref{eq:AMSE}. It is easy to observe that $\MBQ_{11}$, $\MBC$, $\BXi$ can be estimated consistently by $\hat{\MBQ}_{11}$, $\hat{\MBC}=\hat{\BLambda}\hat{\MBP}^{-1/2}$, and $\hat{\BXi}$, respectively. As for $\MBD_{\Bbeta_1}$ and $\MBV$, we can use the full model estimator $\hat{\Bbeta}_1=\hat{\Bbeta}_{1\mathrm{narrow}}-\hat{\BXi}\hat{\BLambda}\hat{\MBP}^{-1/2}\hat{\Bbeta}_2$ which is a special case of Equation \eqref{eq:WALS2} as $\tilde{\MBW}=\MBI$, and define $\MBD_{\hat{\Bbeta}_1}$ as $\MBD_{\Bbeta_1}$ evaluated at $\hat{\Bbeta}_1$. Because $\hat{\Bbeta}_1$ is consistent, so do $\MBD_{\hat{\Bbeta}_1}$ and $\hat{\MBV}=\DIAG(\MBD_{\hat{\Bbeta}_1}^{\tp}\hat{\BXi}\hat{\MBC})=\DIAG(\hat{\MBC}^{\tp}\hat{\BXi}^{\tp}\MBD_{\hat{\Bbeta}_1})$. 

However, we can only obtain an unbiased estimate of $\MBC^{-1}\Bdelta$ under the local-to-zero framework based on the second result in Theorem \ref{thm:beta}. Accordingly, we can further define $\hat{\Bdelta}=\sqrt{N}\hat{\MBC}\hat{\Bbeta}_2$, and it follows that:
\begin{align}
\hat{\Bdelta}&\CD\Bdelta+\MBC\MBB\MBR\equiv\MBR_{\Bdelta}\sim\mathrm{N}(\Bdelta,\MBC\MBB\BOmega\MBB^{\tp}\MBC^{\tp}),\label{eq:delta}\\
\hat{\Bdelta}\hat{\Bdelta}^{\tp}&\CD\Bdelta\Bdelta^{\tp}+\Bdelta\MBC\MBB\MBR+\MBR^{\tp}\MBB^{\tp}\MBC^{\tp}\Bdelta^{\tp}+\MBC\MBB\BOmega\MBB^{\tp}\MBC^{\tp}\equiv\BSigma_{\Bdelta}.\label{eq:deltadelta}
\end{align}
Equations \eqref{eq:delta} and \eqref{eq:deltadelta} provide the properties of estimated $\Bdelta$ which is supposed to be used to calculate the asymptotic bias. The first property implied by the above result is that $\hat{\Bdelta}$ is an unbiased estimate of $\Bdelta$, and the second property that we can obtain is that if we tend to calculate $\hat{\Bdelta}\hat{\Bdelta}^{\tp}$, it would not be an unbiased estimator of $\Bdelta\Bdelta^{\tp}$. Instead, it should be modified as:
\begin{equation}
\hat{\Bdelta}\hat{\Bdelta}^{\tp}-\hat{\MBC}\hat{\MBB}\hat{\BOmega}\hat{\MBB}^{\tp}\hat{\MBC}^{\tp}\CD\BSigma_{\Bdelta}-\MBC\MBB\BOmega\MBB^{\tp}\MBC^{\tp},\label{eq:ub_dd}
\end{equation}
and $\mathbb{E}(\BSigma_{\Bdelta}-\MBC\MBB\BOmega\MBB^{\tp}\MBC^{\tp})=\Bdelta\Bdelta^{\tp}$. Therefore, we can plug $\hat{\Bdelta}\hat{\Bdelta}^{\tp}-\hat{\MBC}\hat{\MBB}\hat{\BOmega}\hat{\MBB}^{\tp}\hat{\MBC}^{\tp}$ into Equation \eqref{eq:AMSE} directly and obtain the following estimator of AMSE for focused WALS:
\begin{align}
\widehat{\mathrm{AMSE}}(\mu(\hat{\Bbeta}_{1\mathrm{WALS}}))=&\MBD_{\hat{\Bbeta}_1}^{\tp}\hat{\BXi}\left(\hat{\Bdelta}\hat{\Bdelta}^{\tp}-\hat{\MBC}\hat{\MBB}\hat{\BOmega}\hat{\MBB}^{\tp}\hat{\MBC}^{\tp}\right)\hat{\BXi}^{\tp}\MBD_{\hat{\Bbeta}_1}\notag\\
&+\tilde{\MBw}^{\tp}\hat{\MBV}\hat{\MBC}^{-1}\left(\hat{\Bdelta}\hat{\Bdelta}^{\tp}-\hat{\MBC}\hat{\MBB}\hat{\BOmega}\hat{\MBB}^{\tp}\hat{\MBC}^{\tp}\right)\hat{\MBC}\hat{\MBV}\tilde{\MBw}\notag\\
&-2\tilde{\MBw}^{\tp}\hat{\MBV}\hat{\MBC}^{-1}\left(\hat{\Bdelta}\hat{\Bdelta}^{\tp}-\hat{\MBC}\hat{\MBB}\hat{\BOmega}\hat{\MBB}^{\tp}\hat{\MBC}^{\tp}\right)\hat{\BXi}^{\tp}\MBD_{\hat{\Bbeta}_1}+\MBD_{\hat{\Bbeta}_1}^{\tp}\hat{\MBQ}_{11}^{-1}\MBD_{\hat{\Bbeta}_1}\notag\\
&+\tilde{\MBw}^{\tp}\hat{\MBV}\hat{\MBB}\hat{\BOmega}\hat{\MBB}^{\tp}\hat{\MBV}\tilde{\MBw}+2\tilde{\MBw}^{\tp}\hat{\MBV}\hat{\MBB}\hat{\BOmega}\MBH\hat{\MBQ}_{11}^{-1}\MBD_{\hat{\Bbeta}_1}\notag\\
\CD &\MBD_{\Bbeta_1}^{\tp}\BXi\left(\BSigma_{\Bdelta}-\MBC\MBB\BOmega\MBB^{\tp}\MBC^{\tp}\right)\BXi^{\tp}\MBD_{\Bbeta_1}\notag\\
&+\tilde{\MBw}^{\tp}\MBV\MBC^{-1}\left(\BSigma_{\Bdelta}-\MBC\MBB\BOmega\MBB^{\tp}\MBC^{\tp}\right)\MBC\MBV\tilde{\MBw}\notag\\
&-2\tilde{\MBw}^{\tp}\MBV\MBC^{-1}\left(\BSigma_{\Bdelta}-\MBC\MBB\BOmega\MBB^{\tp}\MBC^{\tp}\right)\BXi^{\tp}\MBD_{\Bbeta_1}+\MBD_{\Bbeta_1}^{\tp}\MBQ_{11}^{-1}\MBD_{\Bbeta_1}\notag\\
&+\tilde{\MBw}^{\tp}\MBV\MBB\BOmega\MBB^{\tp}\MBV\tilde{\MBw}+2\tilde{\MBw}^{\tp}\MBV\MBB\BOmega\MBH\MBQ_{11}^{-1}\MBD_{\Bbeta_1}\notag\\
\equiv&\mathrm{AMSE}^*(\mu(\hat{\Bbeta}_{1\mathrm{WALS}})).
\label{eq:AMSE_sample}
\end{align}
Moreover, the above result also implies that $\mathbb{E}\left(\mathrm{AMSE}^*(\mu(\hat{\Bbeta}_{1\mathrm{WALS}}))\right)=\mathrm{AMSE}(\mu(\hat{\Bbeta}_{1\mathrm{WALS}})$, and therefore the estimated AMSE defined in Equation \eqref{eq:AMSE_sample} is an asymptotically unbiased estimator of the infeasible AMSE defined in Equation \eqref{eq:AMSE}.

Following \citet{Lu2015}, we can apply the argmax continuous mapping theorem to obtain the limiting distribution of the focused WALS given a data-driven $\hat{\MBw}$.
\begin{thm}\label{thm:plug}
Under Assumptions \ref{ass:loca}-\ref{ass:clt}, $\hat{\Bbeta}_{1\mathrm{WALS}}$ is calculated based on $\hat{\MBw}$ and $\hat{\MBw}=\arg\min_{\tilde{\MBw}\in\tilde{\mathcal{H}}}\widehat{\mathrm{AMSE}}(\mu(\hat{\Bbeta}_{1\mathrm{WALS}}))$, as $N\rightarrow\infty$, we have
\begin{align*}
&\hat\MBw\CD\tilde{\MBw}^*=\arg\min_{\tilde{\MBw}\in\tilde{\mathcal{H}}}\mathrm{AMSE}^*(\mu(\hat{\Bbeta}_{1\mathrm{WALS}})),\\
&\sqrt{N}(\hat{\Bbeta}_{1\mathrm{WALS}}-\Bbeta_1)\CD R_{\mu}(\tilde{\MBW}^*),\quad\tilde{\MBW}^*=\DIAG(\tilde\MBw^*).
\end{align*}
\end{thm}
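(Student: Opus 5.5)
The plan is to follow the argmin continuous mapping argument of \citet{Lu2015}, and to proceed in three steps.

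\textbf{Step 1: joint convergence of the criterion process.} View $\widehat{\mathrm{AMSE}}(\mu(\hat{\Bbeta}_{1\mathrm{WALS}}))$ as a random quadratic function of $\tilde{\MBw}$ on the compact set $\tilde{\mathcal{H}}=[0,1]^{k_2}$. Write it as
\[
\hat a+\tilde{\MBw}^{\tp}\hat{\MBA}\tilde{\MBw}+2\tilde{\MBw}^{\tp}\hat{\MBb},
\]
where $\hat a$, $\hat{\MBA}$ and $\hat{\MBb}$ are finite-dimensional random coefficients built from $\hat{\Bdelta}$, $\hat{\MBC}$, $\hat{\BXi}$, $\hat{\MBV}$, $\hat{\MBB}$, $\hat{\BOmega}$, $\hat{\MBQ}_{11}$ and $\MBD_{\hat{\Bbeta}_1}$.
\begin{itemize}
\item By Assumption \ref{ass:lln}, the consistency of $\hat{\Bbeta}_1$, and continuity of $\MBD_{\Bbeta_1}$, every coefficient except $\hat{\Bdelta}$ converges in probability to a constant. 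I assume $\hat{\BOmega}$ is consistent; this is implicit in the paper's construction.
\item By Theorem \ref{thm:beta}, $\sqrt{N}\hat{\Bbeta}_2$ converges jointly with $\MBR$. The limits of both $\hat{\Bdelta}$ and $\sqrt{N}(\hat{\Bbeta}_{1\mathrm{WALS}}-\Bbeta_1)$, for any fixed $\tilde{\MBW}$, are linear functionals of the same Gaussian vector $\MBR$.
\item Slutsky and the continuous mapping theorem then give joint convergence of $(\hat a,\hat{\MBA},\hat{\MBb})$ and of the estimator's stochastic components to limits expressed through $\MBR$.
\item Because the coefficients enter the criterion polynomially in $\tilde{\MBw}$ and $\tilde{\mathcal{H}}$ is bounded, this yields weak convergence of the whole process $\widehat{\mathrm{AMSE}}(\cdot)\Rightarrow\mathrm{AMSE}^*(\cdot)$ in $\ell^\infty(\tilde{\mathcal{H}})$.
\end{itemize}

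\textbf{Step 2: uniqueness of the limiting minimizer and the argmin theorem.} The argmin continuous mapping theorem requires that $\mathrm{AMSE}^*$ have an almost surely unique minimizer on $\tilde{\mathcal{H}}$. The quadratic part of $\mathrm{AMSE}^*$ in $\tilde{\mathcal{H}}$ is
\[
\MBV\left(\MBC^{-1}(\BSigma_{\Bdelta}-\MBC\MBB\BOmega\MBB^{\tp}\MBC^{\tp})\MBC^{\tp -1}+\MBB\BOmega\MBB^{\tp}\right)\MBV .
\]
Using $\BSigma_{\Bdelta}=\MBR_{\Bdelta}\MBR_{\Bdelta}^{\tp}$, this reduces to $\MBV(\MBC^{-1}\MBR_{\Bdelta}\MBR_{\Bdelta}^{\tp}\MBC^{-\tp})\MBV$, which is positive semidefinite. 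The paper's displayed form has $\MBC$ rather than $\MBC^{-\tp}$; this should be harmless.

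The criterion is therefore convex, but only rank one, so strict convexity fails in general. This is the main obstacle. I would handle it in one of two ways:
\begin{itemize}
\item impose, as \citet{Lu2015} does implicitly, that the minimizer is almost surely unique, or
\item argue that with Gaussian $\MBR$ ties occur with probability zero, by treating the minimizer set as a measurable selection.
\end{itemize}
With uniqueness in hand, $\hat{\MBw}=O_p(1)$ holds trivially by compactness. The argmin theorem (van der Vaart and Wellner, Thm 3.2.2) then gives $\hat{\MBw}\CD\tilde{\MBw}^*$, jointly with $\MBR$, because both are measurable functionals of the same converging process.

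\textbf{Step 3: plug the random weights into the estimator.} By equation \eqref{eq:WALS2},
\[
\sqrt{N}(\hat{\Bbeta}_{1\mathrm{WALS}}-\Bbeta_1)=\sqrt{N}(\hat{\Bbeta}_{1\mathrm{narrow}}-\Bbeta_1)-\hat{\BXi}\hat{\MBC}\,\DIAG(\hat{\MBw})\sqrt{N}\hat{\Bbeta}_2 .
\]
This is a continuous function of $\big(\hat{\MBw},\sqrt{N}(\hat{\Bbeta}_{1\mathrm{narrow}}-\Bbeta_1),\sqrt{N}\hat{\Bbeta}_2\big)$ plus $o_p(1)$ terms. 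The joint convergence from Steps 1--2 and the continuous mapping theorem deliver the limit $\BXi\MBC(\MBI-\tilde{\MBW}^*)\MBC^{-1}\Bdelta+\BPsi(\tilde{\MBW}^*)\MBR$.

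Applying the delta method as in Theorem \ref{thm:mu} turns this into $R_\mu(\tilde{\MBW}^*)$ for the focused estimator. The statement appears to intend this focused version, since it writes $\sqrt{N}(\hat{\Bbeta}_{1\mathrm{WALS}}-\Bbeta_1)$ converging to $R_\mu$. The resulting limit is non-normal because $\tilde{\MBW}^*$ is random and depends on $\MBR$.
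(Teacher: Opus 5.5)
Your proposal is correct and takes the same route as the paper, which gives no more than a one-line appeal to the argmin continuous mapping theorem following \citet{Lu2015}. Your three steps fill in the details the paper omits: joint convergence of the finitely many random coefficients of the quadratic criterion; almost-sure uniqueness of the minimizer of the rank-one-plus-linear limit $\mathrm{AMSE}^*$, where your $\Sigma_{\delta}=R_{\delta}R_{\delta}^{\top}$ and $\mathbf{C}^{-\top}$ corrections are right; and substituting $\hat{\mathbf{w}}$ into the exact representation \eqref{eq:WALS2}, then applying the delta method. For uniqueness, take your first option (imposing it, as the statement's single-valued $\tilde{\mathbf{w}}^*$ implicitly does). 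A measurable selection alone does not make $\hat{\mathbf{w}}$ converge, and uniqueness can genuinely fail: if some diagonal entry of $\mathbf{V}$ is zero, then $\tilde{w}_j$ drops out of $\mathrm{AMSE}^*$ entirely.
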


\begin{Remark}
Our computational reduction relies on combining the semi-orthogonalization in \eqref{eq:ortho} with a weighting representation, which collapses the weight selection problem from the $2^{k_2}$ dimensional model space to $k_2$ scalar weights. This idea is related to the scalable averaging framework in \citet{zhu2023scalable-a73}, but the weight construction differs. \citet{zhu2023scalable-a73} consider Mallows model averaging \citep{hansen2007least} and Jackknife model averaging \citep{hansen2012jackknife} criteria, whereas our weights are obtained by minimizing a plug-in AMSE that is explicitly designed for the focused parameter under the local-to-zero condition in Assumption~\ref{ass:loca}. In the MMA/JMA literature, it is also common to restrict attention to a subset of candidate models, and the corresponding asymptotic optimality is established conditional on the restricted set. A similar restriction can be imposed in our framework, in which case the unbiased AMSE characterization and the resulting weights should be interpreted relative to the chosen candidate set. However, AMSEs computed under a restricted set are generally not directly comparable to those under the full set unless additional structure is imposed on $\MBX_2$ (e.g., approximate factor structure) or a screening step (e.g., sure independence screening) is used to justify the reduction in candidate models. We leave a formal analysis of such structured restrictions for future work.
\end{Remark}

\subsection{Comparison with Bayesian Approaches}
While the proposed approach shares the same spirit of orthogonalization adopted in \citet{Magnus2010}, we determine the sub-model weights in a fundamentally different way. To illustrate the main difference, let
\begin{align}
t_j = \frac{\hat{\beta}_{2j}}{\sigma_j},
\end{align}
where $\hat{\beta}_{2j}$ is the $j$th estimate from $\hat{\Bbeta}_2=\frac{\MBX_2^{*\tp}\MBM_1\MBy}{N}$. Suppose $\sigma_j$ is known. Under the distributional assumption used in \citet{Magnus2010} and \citet{luca2022sampling-9be}, or under the asymptotic framework discussed in \citet{luca2025bayesian-64a}, $t_j$ could be regarded as a Normal distribution with mean $\eta_j$ and unit variance. Given this result and a proper choice of prior density $\pi(\eta_j)$, we can apply Tweedie's formula \citep[e.g.,][]{pericchi1992exact} to obtain the posterior mean, $m(t_j)=\mathbb{E}(\eta|t_j)$, which is given by:
\begin{align}
m(t_j) = t_j + \frac{p'(t_j)}{p(t_j)},
\end{align}
where $p(t_j)=\int_{-\infty}^{\infty}\phi(t_j-\eta)\pi(\eta)d\eta$. Accordingly, we can define the WALS estimator with prior $\pi(\eta)$ as:
\begin{align}
\hat{\beta}_{2j,WALS-prior} =& m(t_j)\sigma_j\notag\\
=& \left( t_j + \frac{p'(t_j)}{p(t_j)} \right)\sigma_j\notag\\
=& \hat{\beta}_{2j}\left(1+\frac{p'(t_j)}{t_j p(t_j)}\right)\notag\\
=& \hat{\beta}_{2j}\omega_j.
\end{align}
This formulation matches the standard WALS estimator in \citet{Magnus2010}. By expressing it this way, we highlight that the Bayesian shrinkage acts exactly as a scalar weight $\omega_j$ applied to $\hat{\beta}_{2j}$. The term $\omega_j$ plays a role similar to the weight defined in Equation \eqref{WB2}. However, $\omega_j$ is obtained through the posterior mean, which serves to bound the minimax regret of $\hat{\beta}_{2j,WALS-prior}$ as suggested by \citet{Magnus2010}, \citet{luca2022sampling-9be}, and \citet{luca2025bayesian-64a}. In contrast, the weights in our approach are obtained by directly minimizing the AMSE, which calculates the exact sum of squared bias and variance. More importantly, the AMSE could be designed for the focused parameter in our approach.

To demonstrate the behavioral differences between the weights implied by the Bayesian posterior mean and our AMSE approach, we consider a simple data-generating process $y_i=x_i\beta+\epsilon_i$. We assume the local-to-zero framework $\beta=\frac{\delta}{\sqrt{N}}$, $\mathbb{E}(x_i^2)=1$, and $\mathbb{E}(\epsilon_i^2)=\sigma^2$. Consider the averaging estimator
\begin{align}
\hat{\beta}(w)=w\hat{\beta}_{OLS}+(1-w)\cdot 0,\qquad w\in[0,1],
\end{align}
where $\hat{\beta}_{OLS}$ is unbiased but has estimation variance, while the silly estimator equals zero and therefore has no estimation variance. Suppose we know the true value of $\delta$ in advance. Theoretically, the optimal weight based on AMSE takes the form $\frac{(\delta/\sigma)^2}{(\delta/\sigma)^2+1}$, while the weight implied by the Bayesian posterior mean is $\left(1+\frac{p'(\delta/\sigma)}{(\delta/\sigma) p(\delta/\sigma)}\right)$. 

In Figure \ref{fig:weights}, we plot the dynamics~(exclude the point when $\delta=0$) of the theoretical optimal AMSE weight alongside the weights implied by the four priors discussed in \citet{luca2025bayesian-64a}.\footnote{The four priors are Laplace, Weibull, Pareto, and Cauchy. The full definitions of these priors and their numerical integration are detailed in Appendix B.} As shown, as the signal-to-noise ratio ($\delta/\sigma$) approaches zero, the theoretically optimal AMSE weight strictly converges to zero. In contrast, as $\delta/\sigma \to 0$, the weights implied by the Bayesian posterior means remain positive, providing a milder, more conservative shrinkage. Furthermore, as $\delta/\sigma$ grows large, our AMSE approach rapidly converges to a weight of 1, relying fully on the oracle estimator. While the heavy-tailed priors (Cauchy, Weibull, Pareto) also eventually converge to 1, the standard Laplace prior retains a permanent shrinkage penalty, resulting in a constant bias that cannot be fully removed. Although this numerical example assumes knowledge of the true parameter, which is infeasible in practice, it clarifies the distinct mechanical differences between the two frameworks. We will further investigate how these differences translate to finite sample performance in the simulation section.

\begin{figure}[htbp]
    \centering
    \includegraphics[width=0.75\linewidth]{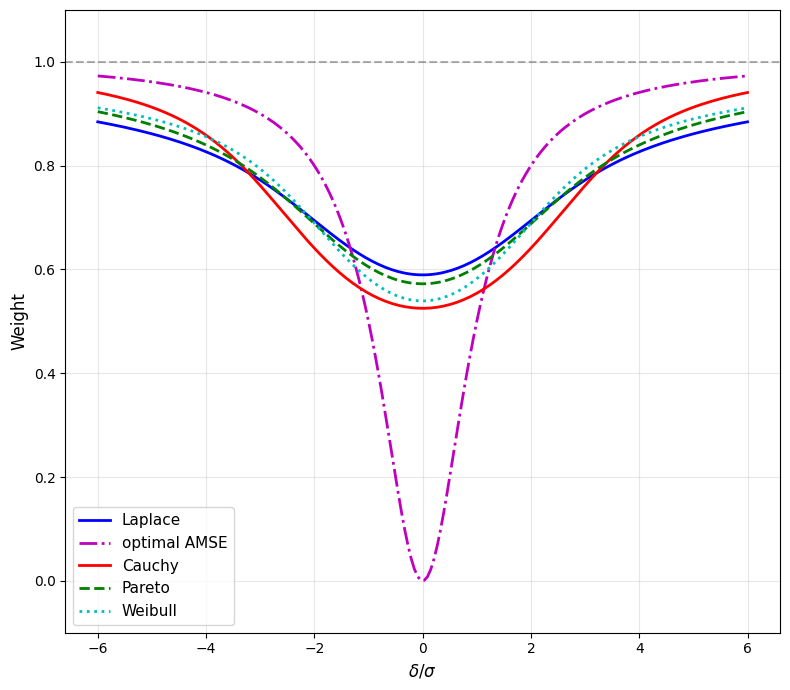}
    \caption{Weights: Theoretical AMSE vs. Bayesian Priors}
    \label{fig:weights}
\end{figure}

\begin{Remark}
\citet{luca2025bayesian-64a} derived the asymptotic distribution of WALS-type estimators under suitable conditions on the prior. In particular, under a local-to-zero specification, they showed that the resulting asymptotic distribution can exhibit substantial non-normality (e.g., skewness and excess kurtosis), and that the limit depends on the prior through the shrinkage rule implied by the posterior mean. Our approach departs from this Bayesian perspective. In our framework, the weights are chosen by minimizing a plug-in AMSE criterion that is explicitly constructed for the focused parameter (Theorem~\ref{thm:plug}), rather than being generated by a pre-specified prior. Consequently, the limiting distribution in Theorem~\ref{thm:plug} is driven by the AMSE optimal weighting mechanism, whereas the Bayesian-WALS limits vary with the prior through the induced shrinkage function. Therefore, even under the same local-to-zero assumption, the two procedures are governed by different weight-selection mappings (prior-induced shrinkage versus AMSE minimization). For practical implementation of WALS with prior, see \citet{luca2025weighteda,luca2025weightebd}, and the corresponding \textsf{R} implementation is available via the \texttt{WALS} package.
\end{Remark}

\section{Simulation Studies}
In this section, we conduct a Monte Carlo simulation to evaluate the performance of the proposed focused averaging method. Two key aspects are examined. We assess the risk by considering the mean squared error (MSE) of the focused parameter across different approaches against the MSE from the infeasible sub-model with two data-generating processes. We use the notation  $N$ to denote the sample size for basic design and $T$ for impulse response function design.

\subsection{Basic Design}
The data-generating process is generally based on the design in \citet{Liu2015} to facilitate comparison. We consider the following model specification: 
\begin{align} y_{i} = \MBx_{1i}^{\tp}\Bbeta_1+ \MBx_{2i}^{\tp}\Bbeta_2+\epsilon_i,\quad i=1,...,N, 
\end{align} 
where $\MBx_{i}=[\MBx_{1i}^{\tp}\;\MBx_{2i}^{\tp}]$ is drawn from a multivariate normal distribution with zero mean and covariance matrix $\BSigma_{\MBx}$, where the diagonal elements are one, and the off-diagonal elements are $\tau$. $\epsilon_i$ is drawn from a standard normal distribution. The dimensions of the core regressors, $\MBx_{1i}$, and the auxiliary regressors, $\MBx_{2i}$, are $k_1$ and $k_2$, respectively. The slope coefficients are specified as follows: \begin{align} 
\Bbeta_1 =& \frac{c_x}{a}[1\;\dots\;1]^{\tp},\\
\Bbeta_2 =& c_x\left[1\;\;\;\frac{k_2-1}{k_2}\;\dots\;\frac{1}{k_2}\right]^{\tp}, 
\end{align} 
where $a$ controls the relative importance of the core and auxiliary regressors. A larger $a$ implies smaller coefficients for the core regressors, making them relatively less important. The parameter $c_x$ governs the total explanatory power and controls the $R^2$ values through different choices of $c_x$.

In our first experimental design, we consider all combinations of $\tau=\{0.3,0.5,0.7\}$, $k_2=\{2,4,7\}$, $R^2=\{0.1,0.2,...,0.9\}$, $N=\{100,200\}$, given $a=12$, and $k_1=3$. This design is intended to examine how the small sample performance is affected by changes in the correlation between regressors and model complexity. The focused parameter considered in this example is defined as
\begin{align}
\mu =& \mu(\Bbeta_1)\notag\\
             =& \beta_{11}+\beta_{12}+\beta_{13}.
\end{align}

\subsection{Impulse Response Function Design}

This data-generating process is designed to examine the performance of the averaging estimators under different horizons of the impulse response function (IRF). We consider a sample size of $T=100$. The dependent variable $y_t$ is generated jointly with $k_2$ auxiliary regressors, where $k=k_1+k_2$ and $k_1=3$ is fixed as the number of autoregressive lags of $y_t$. Accordingly, we let $k\in\{5,7,10\}$ so that $k_2\in\{2,4,7\}$.

The dependent variable is generated from an autoregressive structure with $k_1=3$ lags:
\begin{align}
y_t &= \MBx_{1t}^{\tp}\Bbeta_1 + \MBx_{2t}^{\tp}\Bbeta_2 + u_t,\notag\\
&= \sum_{j=1}^{k_1}\beta_{1j} y_{t-j} + \MBx_{2t}^{\tp}\Bbeta_2 + u_t.\label{eq:dgp_y}
\end{align}
The autoregressive coefficients are given by
\begin{align}
\beta_{11}=0.5,\qquad \beta_{1j} = \frac{d}{\sqrt{T}(j-1)},\;\; j=2,3.\label{eq:dgp_phi}
\end{align}

The auxiliary regressors follow the stationary autoregressive process
\begin{align}
\MBx_{2t} = 0.2\,\MBx_{2,t-1} + \MBe_{\MBx,t},\label{eq:dgp_x}
\end{align}
where $\MBe_{\MBx,t}\sim\mathrm{N}(\MBzero,\BSigma_x$, with $\BSigma_x$ being a $k_2\times k_2$ covariance matrix whose diagonal elements are one and off-diagonal elements are $\tau$. The innovations $\MBe_{\MBx,t}$ are independent of the structural shocks $u_t$, where $u_t\sim \mathrm{N}(0,1)$.

For the auxiliary coefficients, we let  $s=\lfloor k_2/2\rfloor$ and adopt a near-sparse local specification:
\begin{align}
\Bbeta_2=\frac{c_y}{\sqrt{T}}\Btheta_y,\qquad 
\Btheta_y=
\begin{bmatrix}
\underbrace{1\;\;\cdots\;\;1}_{s\ \text{entries}}\;\;
\underbrace{0.05\;\;\cdots\;\;0.05}_{k_2-s\ \text{entries}}
\end{bmatrix}^{\tp},
\label{eq:dgp_beta2_sparse}
\end{align}
and the contribution of the auxiliary regressors is governed by the scaling constant $c_y$.

The initial values $(y_{-2},y_{-1},y_{0})$ are set to zero and $\MBx_{2,0}=\mathbf{0}$. After simulating $T+100$ observations, the first 100 are discarded and the last $T$ are retained as the effective sample. We consider all combinations of $c_y=\{0.1,..,4\}$ with 10 grids with $d=1$ and $\tau=0.2$. For each simulation, we evaluate the performance of the averaging estimators under different IRF horizons $h=\{1,3,5,7\}$.

The focused parameter considered in this design is the impulse response evaluated at different periods ($h$). More specifically, it is defined as
\begin{align}
\mu_{h} =& \mu_h(\Bbeta_1) \notag\\
=& \frac{\partial y_{t+h}}{\partial u_{t}} \notag\\
=& e_1^{\tp}A_{\beta}^{h}e_1, \label{eq:alpha-simulation2}
\end{align}
where $e_1$ is the first standard basis vector, defined as the vector with one in the first component and zeros elsewhere, $A_{\beta}^{h}=(A_{\beta})^h$, and $A_{\beta}$ denotes the companion matrix
\begin{align}
A_{\beta}=\begin{bmatrix}
\beta_{11} & \beta_{12} & \beta_{13} \\
1 & 0 & 0\\
0 & 1 & 0
\end{bmatrix}.\label{eq:companion_matrix}
\end{align}
The corresponding derivative of the focused parameter is given by
\begin{align}
\frac{\partial \mu_{h}}{\partial \beta_{j}} = \sum_{i=0}^h e_1^{\tp}A_{\beta}^{i}e_1 e_j^{\tp}A_{\beta}^{h-1-i}e_1,
\end{align}
where $e_j$ denotes the $j$th standard basis vector. Details are provided in \citet{Lohmeyer2019}.

\subsection{Alternative Methods}
In addition to the proposed focused WALS (denoted by $\mathsf{FWALS}$), we further provide results using the focused averaging estimator proposed by \citet{Liu2015} (denoted by $\mathsf{FIC}$), the smooth Akaike information criterion ($\mathsf{SAIC}$), the smooth Bayesian information criterion ($\mathsf{SBIC}$), the minimized MSE approach from \citet{charkhi2016minimum-d33} (denoted by $\mathsf{mMSE}$), and WALS with Laplace, Cauchy, Pareto, and Weibull priors ($\mathsf{WALS-Lap}$, $\mathsf{WALS-Cau}$, $\mathsf{WALS-Par}$, $\mathsf{WALS-Wei}$).\footnote{The AIC value for candidate model $m$ is calculated by the formula $AIC_m = N\ln{\left(\sum_{i=1}^N\hat{\epsilon}_{mi}^2/N\right)} + 2(k_1 + k_{2m})$, and $\hat{\epsilon}_{mi}$ and $k_{2m}$ denote the residuals and the number of selected auxiliary regressors, respectively. The weights for SAIC can be defined as $\hat{w}_m=\exp{\left(-AIC_m/2\right)}/\sum_{m=1}^M\exp{\left(-AIC_m/2\right)}$.}\footnote{The BIC value for candidate model $m$ is calculated by the formula $BIC_m = N\ln{\left(\sum_{i=1}^N\hat{\epsilon}_{mi}^2/N\right)} + \ln{(N)}(k_1 + k_{2m})$, and $\hat{\epsilon}_{mi}$ and $k_{2m}$ denote the residuals and the number of selected auxiliary regressors, respectively. The weights for SBIC can be defined as $\hat{w}_m=\exp{\left(-BIC_m/2\right)}/\sum_{m=1}^M\exp{\left(-BIC_m/2\right)}$.}\footnote{The details of the priors can be found in Appendix B.} We conduct our simulations with 1{,}000 replications for all cases.

\subsection{Small Sample Properties}
We summarize the simulation results for the first experiment in Figures \ref{fig1} to \ref{fig6}. Each figure displays results for three different values of the number of auxiliary regressors, and, for a given number of auxiliary regressors, we present the risk dynamics in relation to $R^2$ across all methods.

We first discuss the case when $N=100$. In Figure \ref{fig1}, regarding the non-WALS type approaches, we observe that, in most cases, $\mathsf{FWALS}$, $\mathsf{FIC}$ and $\mathsf{mMSE}$ exhibit very similar performance, regardless of the number of auxiliary regressors, when the correlation between regressors is relatively weak. This suggests that the transformation in the proposed method does not negatively affect the asymptotic MSE of the focused parameter. When compared with $\mathsf{SAIC}$ and $\mathsf{SBIC}$, both $\mathsf{FWALS}$, $\mathsf{FIC}$ and $\mathsf{mMSE}$ generally perform better, except when $R^2$ is small and $k_2$ is large. Additionally, $\mathsf{SBIC}$ tends to perform poorly (higher risk) when the true model is complex with many weak predictors, likely due to its preference for simpler models over focusing on the accuracy of the focused parameter. On the other hand, $\mathsf{SAIC}$ performs better, likely because it places more emphasis on relatively complex models. 

As for the WALS-type approaches, regardless of the choice of priors, we observe that $\mathsf{WALS-Lap}$, $\mathsf{WALS-Cau}$, $\mathsf{WALS-Par}$, and $\mathsf{WALS-Wei}$ deliver similar patterns. When $R^2$ is large, WALS with priors perform worse. This is mainly because the design is not used to minimize the risk of the focused parameter. In addition, $\mathsf{WALS-Lap}$ shows the potential drawback of permanent shrinkage (weight less than $1$) as $R^2$ is large, which could bias the focused parameter. In general, we observe that increasing $k_2$ further demonstrates the advantages of using focus-based methods.

Next, we examine the results for different values of $\tau$, as shown in Figures \ref{fig2} and \ref{fig3}. A similar pattern is evident: $\mathsf{FWALS}$, $\mathsf{FIC}$ and $\mathsf{mMSE}$ generally perform better, especially when $\tau=0.7$, $k_2=7$, and $R^2$ is higher. Additionally, we observe that as $\tau$ increases, $\mathsf{FWALS}$ slightly outperforms $\mathsf{FIC}$ when $R^2$ is relatively small. This can be explained by the advantage of the transformation in isolating the contributions of the transformed regressors. However, this advantage diminishes as $R^2$ increases, since higher $R^2$ implies that all auxiliary regressors have a greater impact on the focused parameter. In such cases, imposing more weight restrictions through $\mathsf{FWALS}$ becomes a disadvantage, and considering all combinations yields better results. This trade-off highlights a key aspect of the method. Overall, the simulation results indicate that the differences between $\mathsf{FWALS}$ and $\mathsf{FIC}$ are small across all cases discussed. Furthermore, when $\tau$ is large, the disadvantage of using the pre-specified priors is still obvious as $R^2$ increases. However, we observe that the performance of $\mathsf{FWALS}$, $\mathsf{FIC}$, and $\mathsf{mMSE}$ is slightly poor when $R^2$ is small and $\tau$ is large. This can be regarded as the price we need to pay because we need to estimate the bias, which can distort the estimate of the AMSE and therefore lead to unsatisfactory results. Similar patterns can be found when we consider the case $N=200$, as shown in Figures \ref{fig4} to \ref{fig6}.

In Figures \ref{fig7} to \ref{fig9}, we report the risk performance for different horizons $h$ of the impulse response function under several values of $k_2$. Across all horizons, $\mathsf{FWALS}$ and $\mathsf{FIC}$ exhibit very similar risk dynamics, which confirms that the proposed transformation does not sacrifice risk relative to the traditional focused averaging estimator while delivering substantial computational savings. When $k_2=2$, as shown in Figure \ref{fig7}, almost all methods have comparable performance across the range of horizons considered. The story changes as $k_2$ becomes larger. As $k_2$ increases and the number of relatively weak auxiliary regressors grows, we again find that $\mathsf{FWALS}$ and $\mathsf{FIC}$ are comparable to, or better than, the prior-based WALS approaches when $c_y$ is large (Figures \ref{fig8}--\ref{fig9}). When the signals from the auxiliary regressors become weak (small $c_y$), $\mathsf{FWALS}$ and $\mathsf{FIC}$ may suffer from an imprecise bias estimate, as observed in the first design; overall, however, the two methods still deliver similar performance. In contrast, $\mathsf{mMSE}$ shows unstable performance across horizons. This may reflect a drawback of allowing a broad weight space based on singleton equations, especially when the focused function is nonlinear. Finally, $\mathsf{SBIC}$ and $\mathsf{SAIC}$ display performance patterns broadly similar to $\mathsf{FWALS}$, $\mathsf{FIC}$, and the prior-based WALS methods, although $\mathsf{SBIC}$ appears slightly more volatile in terms of risk.

Taken together, these simulation results show that $\mathsf{FWALS}$ is a competitive alternative to $\mathsf{FIC}$ across horizons and values of $k_2$, combining computational efficiency with stable statistical performance, and providing more stable performance than $\mathsf{mMSE}$. Prior-based $\mathsf{WALS}$ methods can also be competitive in some settings, especially when the signals from the auxiliary regressors are small.

\begin{figure}[!htpb]
    \centering
    \includegraphics[width=0.32\linewidth]{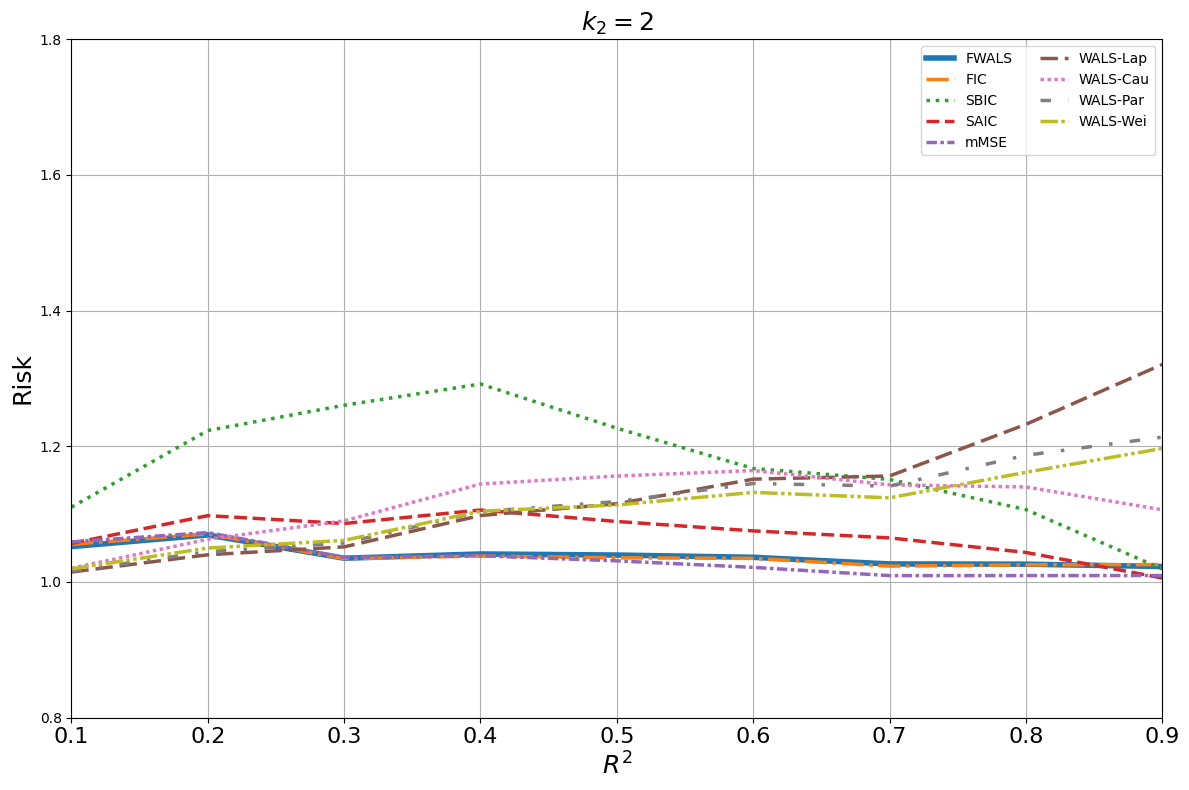}
    \includegraphics[width=0.32\linewidth]{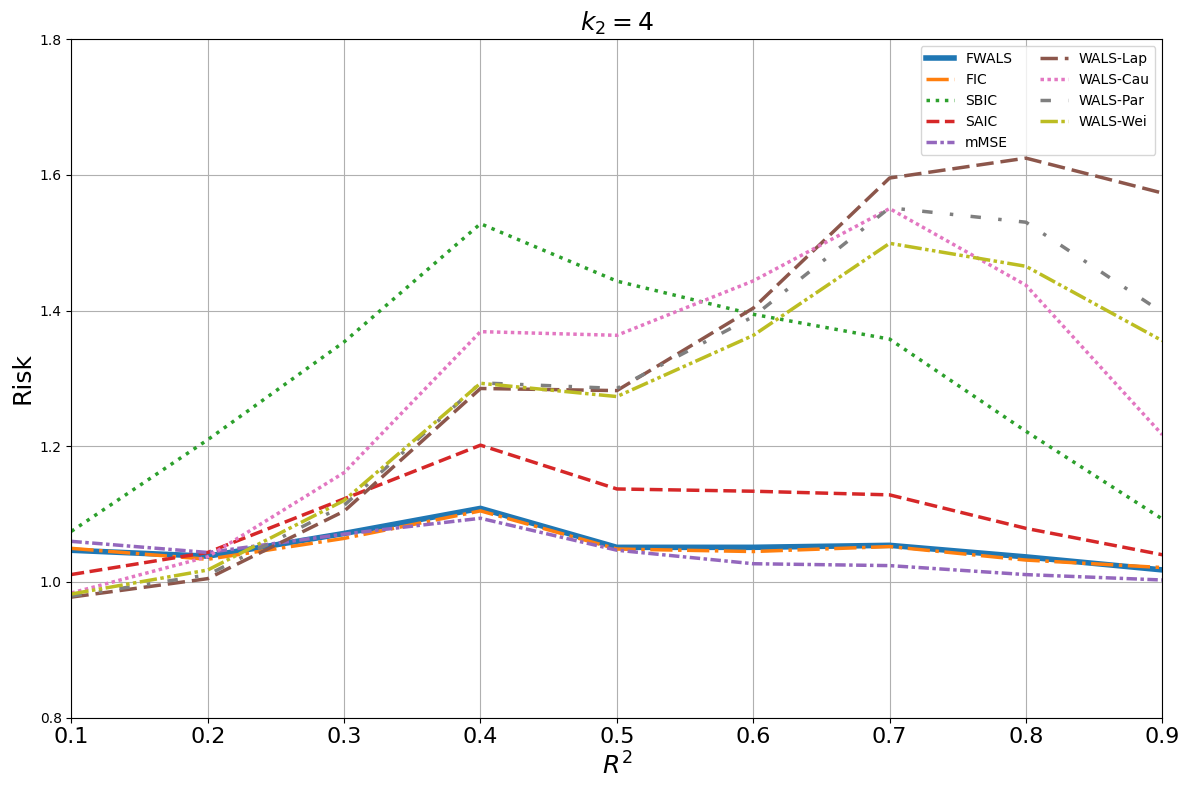}
    \includegraphics[width=0.32\linewidth]{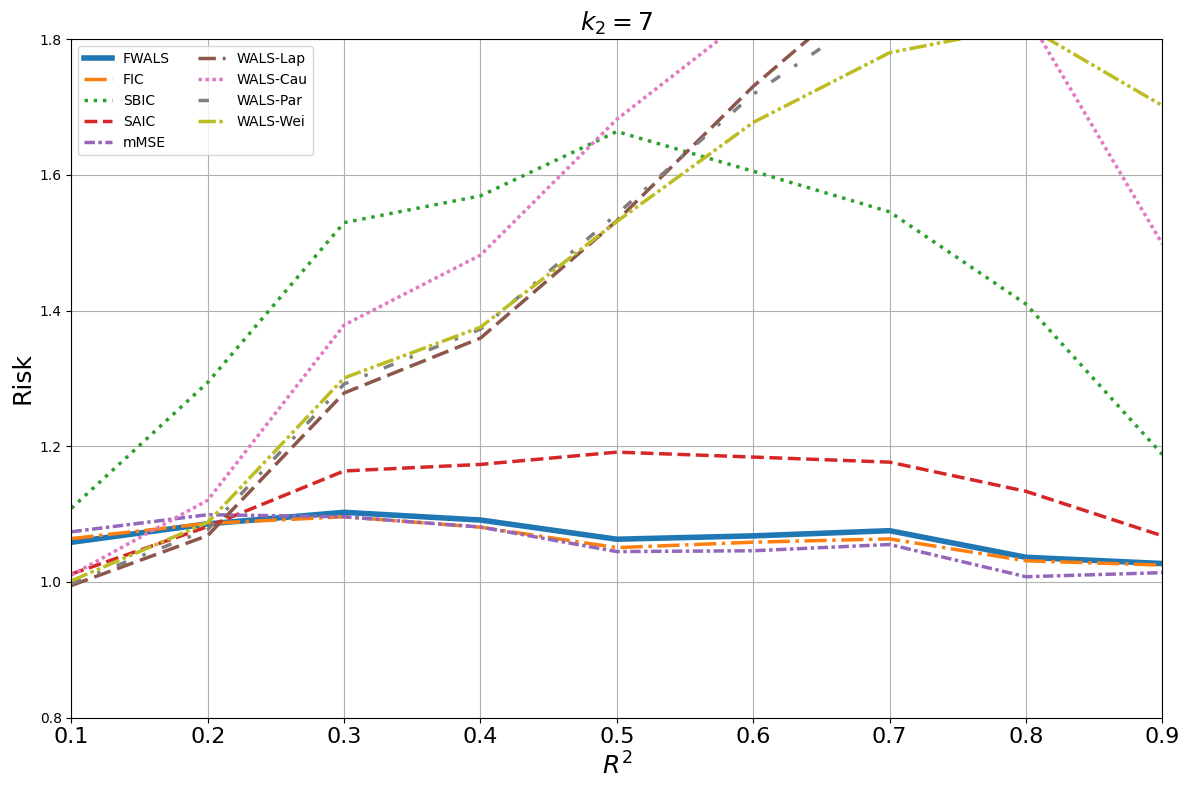}
    \caption{Risk with Different $k_2$s~($N=100$,$\tau=0.3$)}
    \label{fig1}
\end{figure}

\begin{figure}[!htpb]
    \centering
    \includegraphics[width=0.32\linewidth]{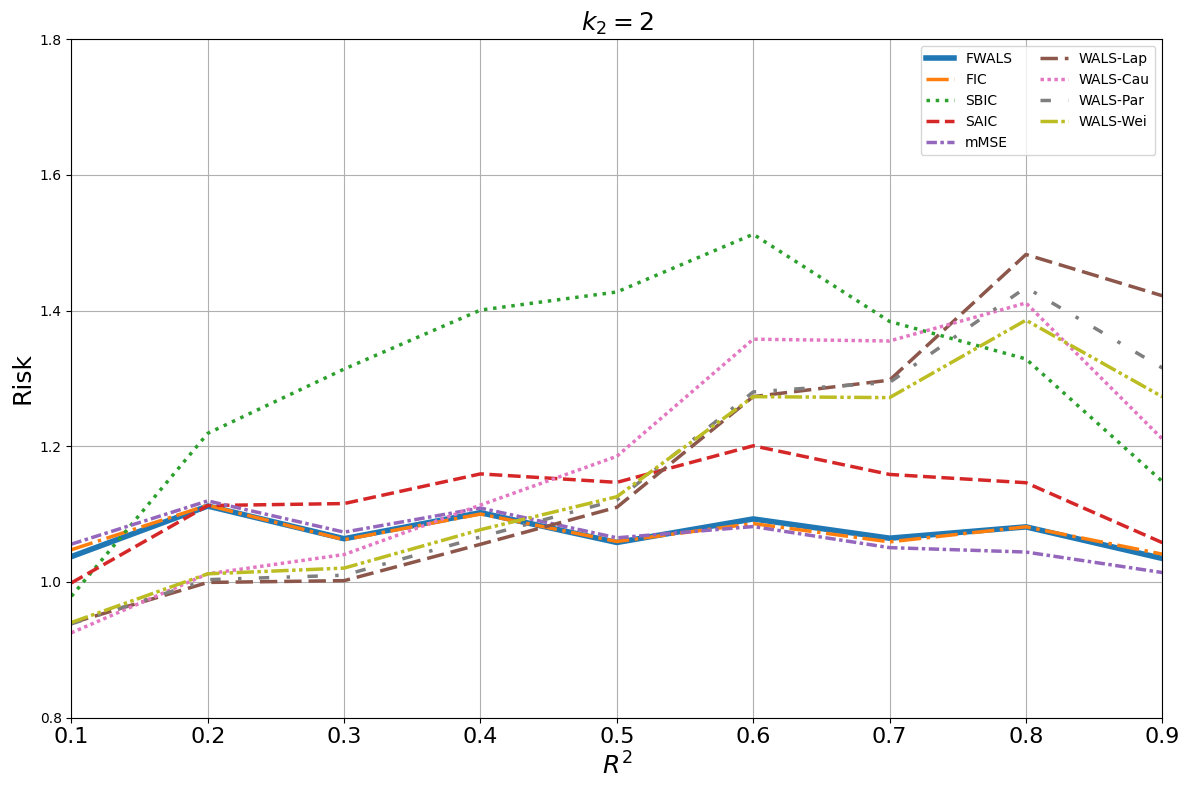}
    \includegraphics[width=0.32\linewidth]{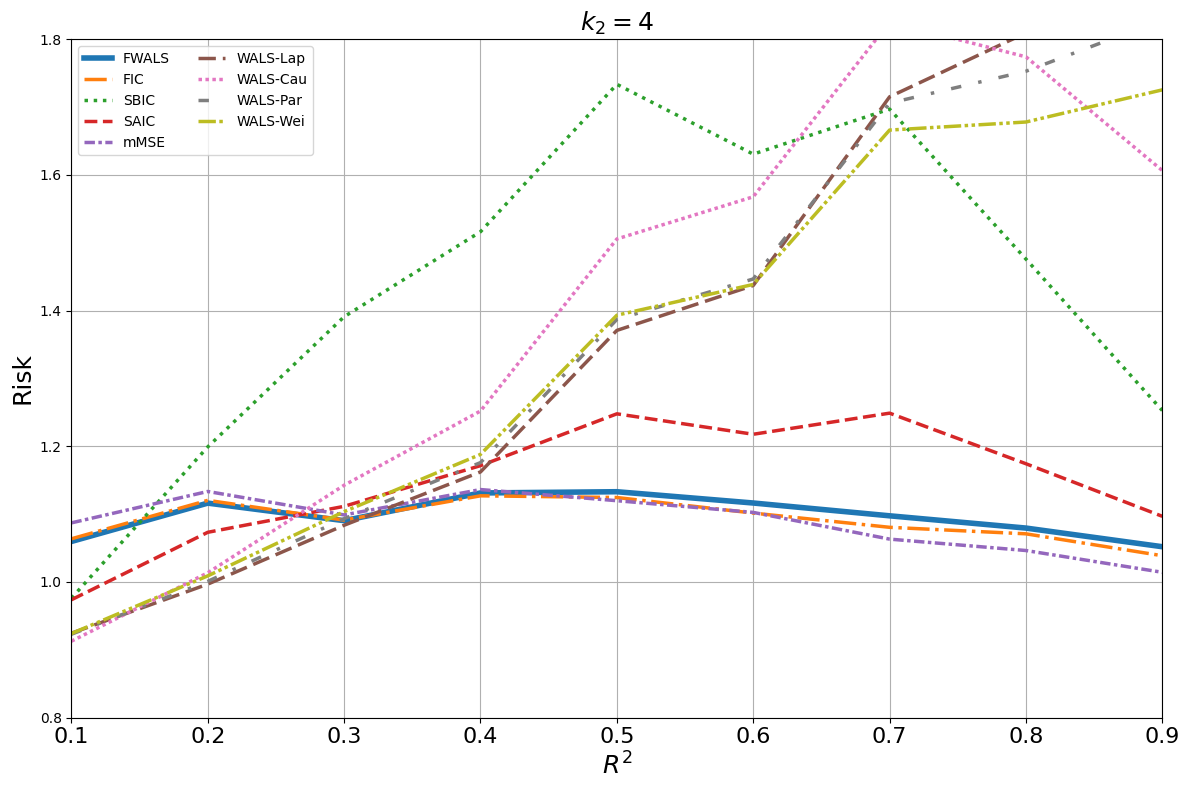}
    \includegraphics[width=0.32\linewidth]{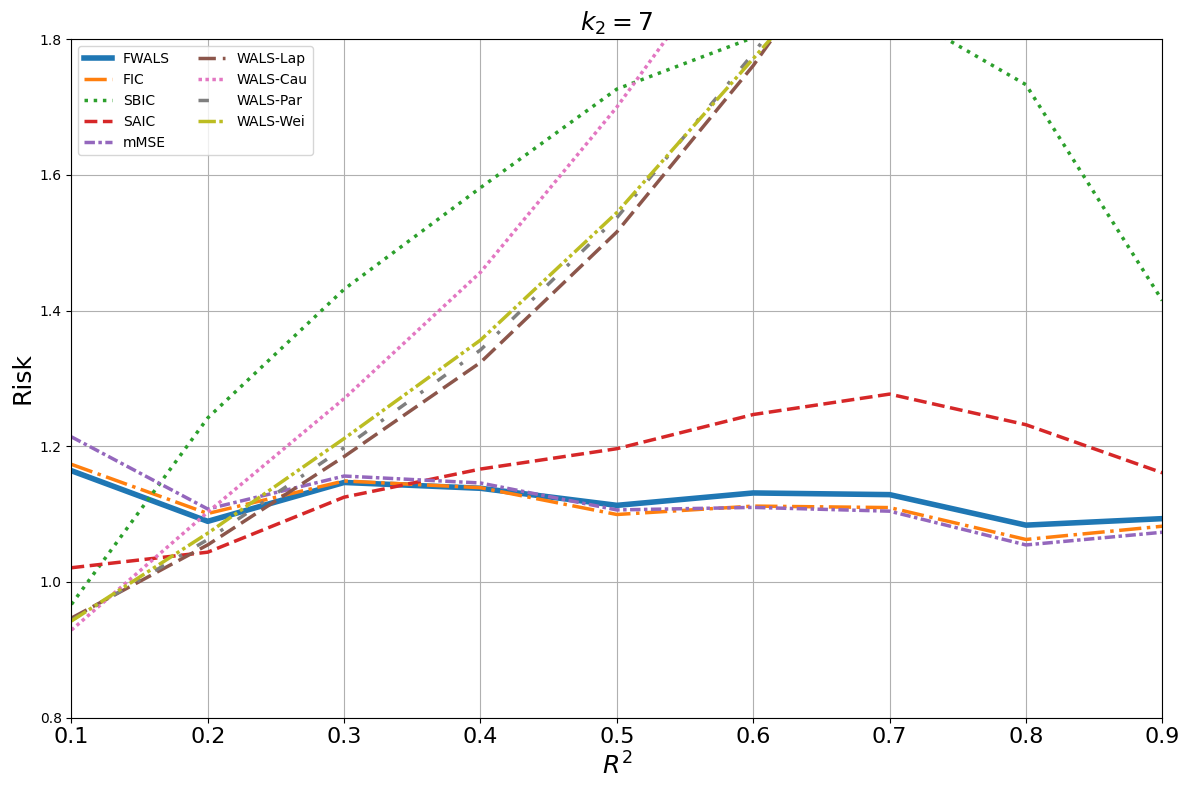}
    \caption{Risk with Different $k_2$s~($N=100$,$\tau=0.5$)}
    \label{fig2}
\end{figure}

\begin{figure}[!htpb]
    \centering
    \includegraphics[width=0.32\linewidth]{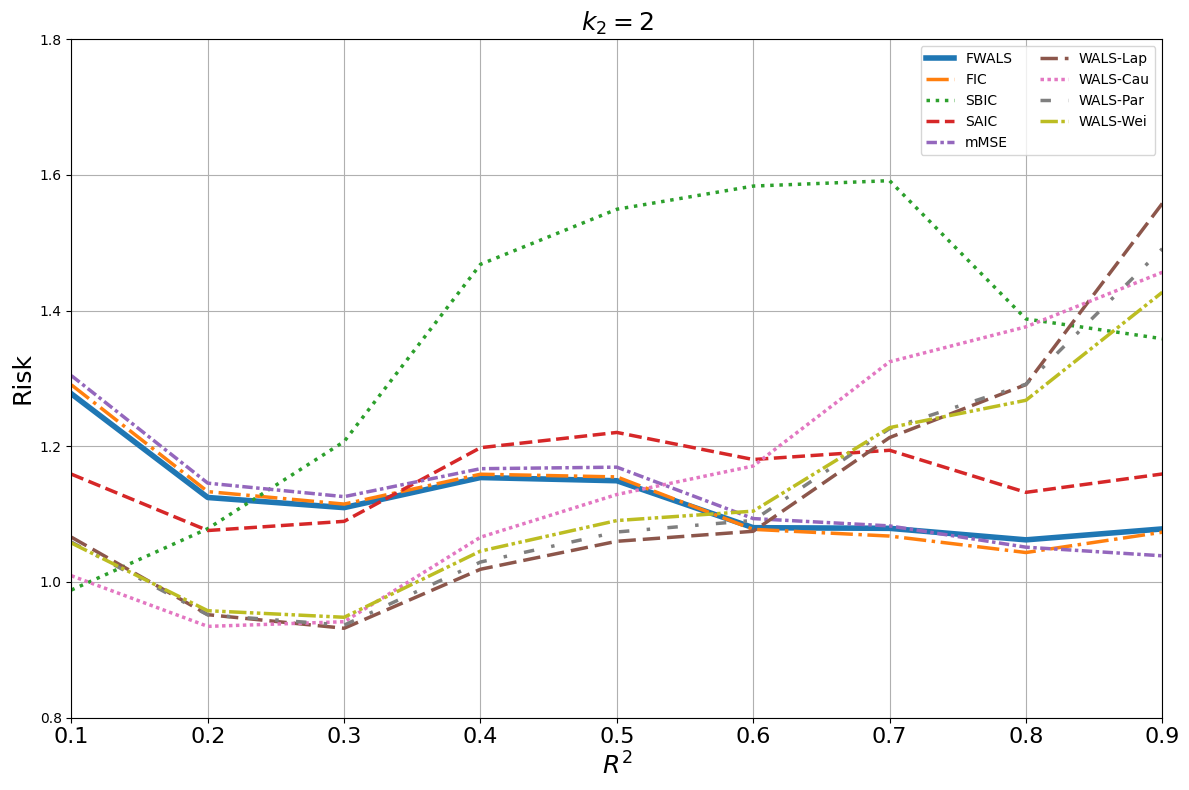}
    \includegraphics[width=0.32\linewidth]{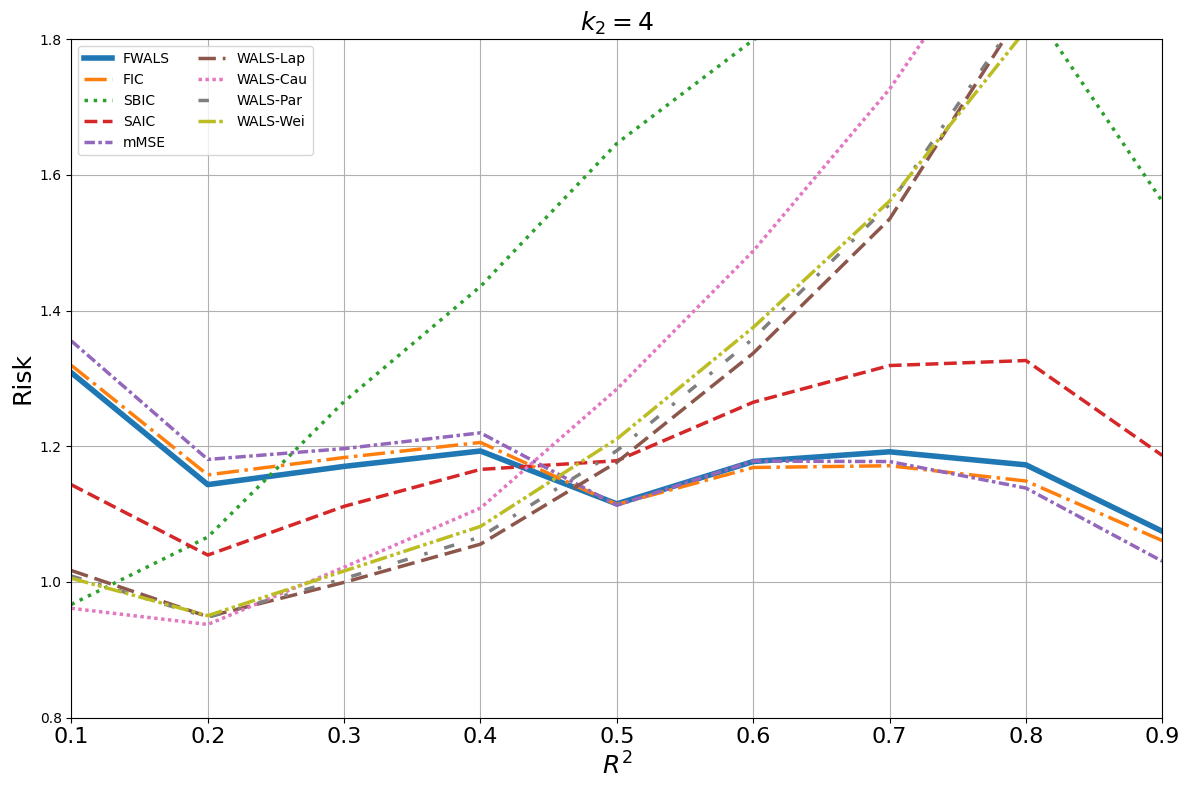}
    \includegraphics[width=0.32\linewidth]{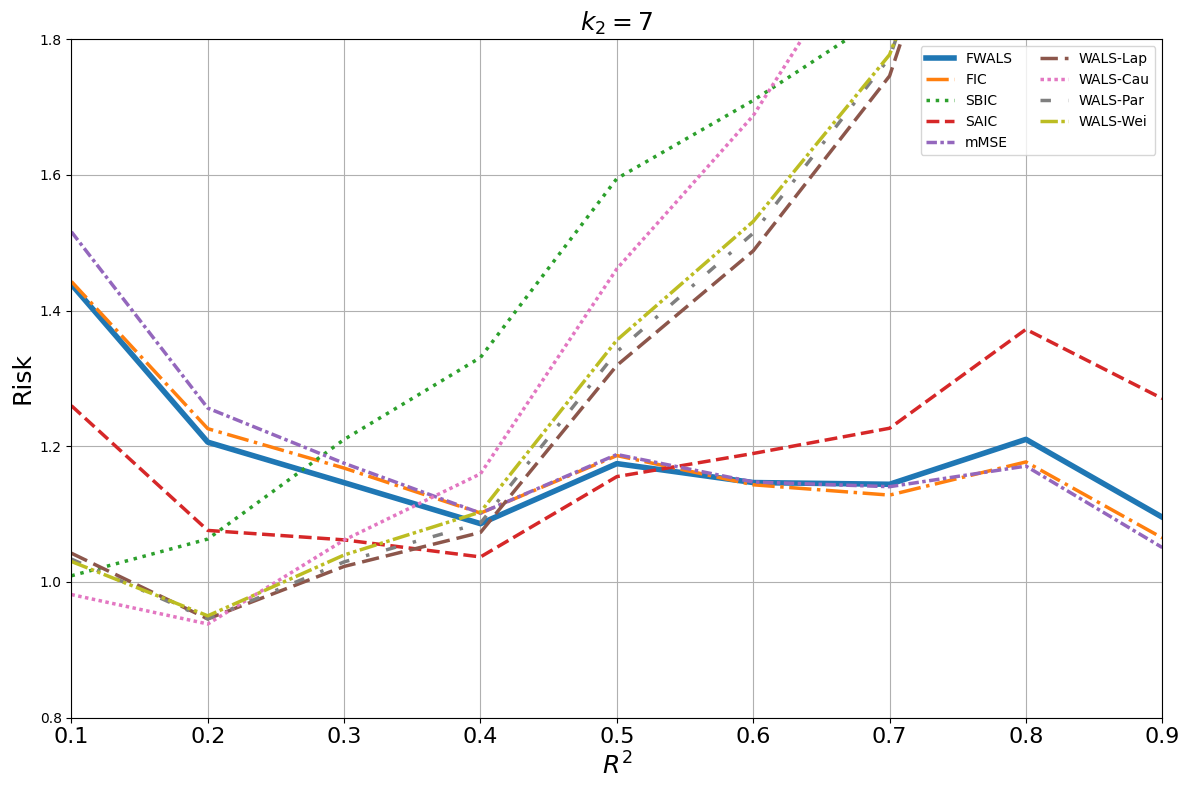}
    \caption{Risk with Different $k_2$s~($N=100$,$\tau=0.7$)}
    \label{fig3}
\end{figure}

\begin{figure}[!htpb]
    \centering
    \includegraphics[width=0.32\linewidth]{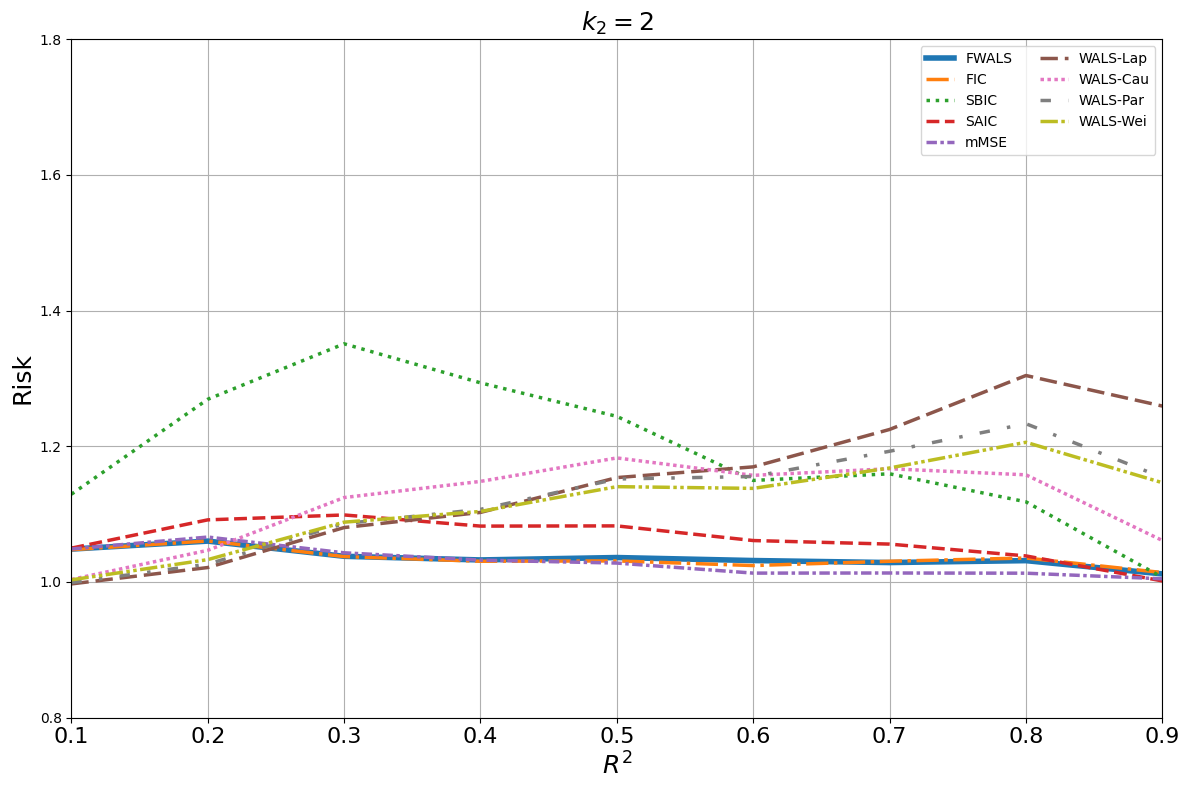}
    \includegraphics[width=0.32\linewidth]{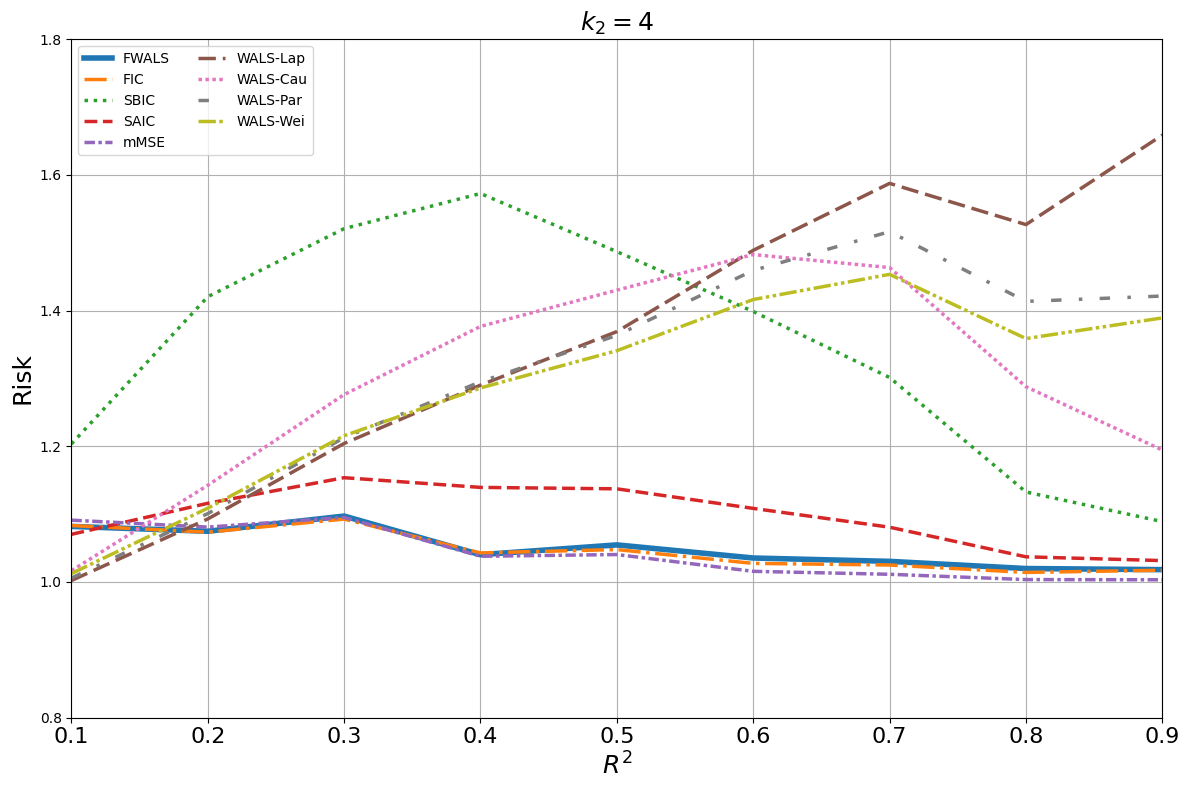}
    \includegraphics[width=0.32\linewidth]{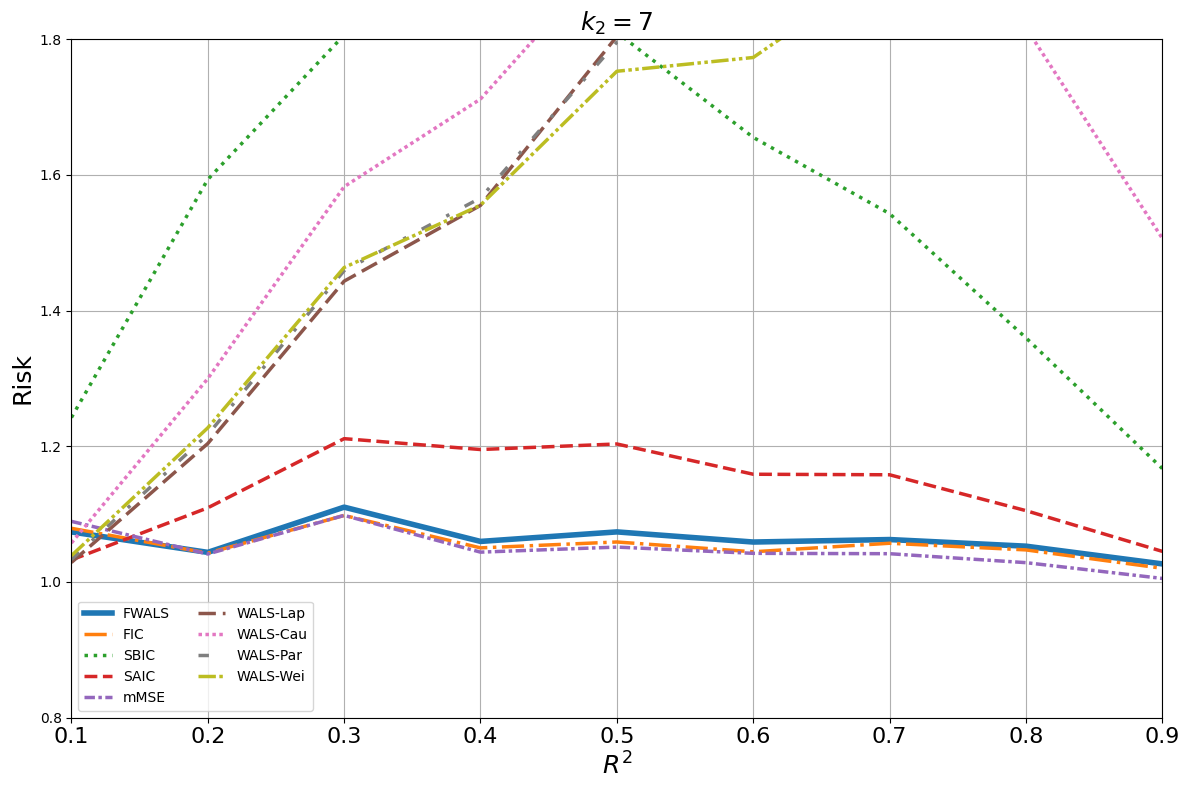}
    \caption{Risk with Different $k_2$s~($N=200$,$\tau=0.3$)}
    \label{fig4}
\end{figure}

\begin{figure}[!htpb]
    \centering
    \includegraphics[width=0.32\linewidth]{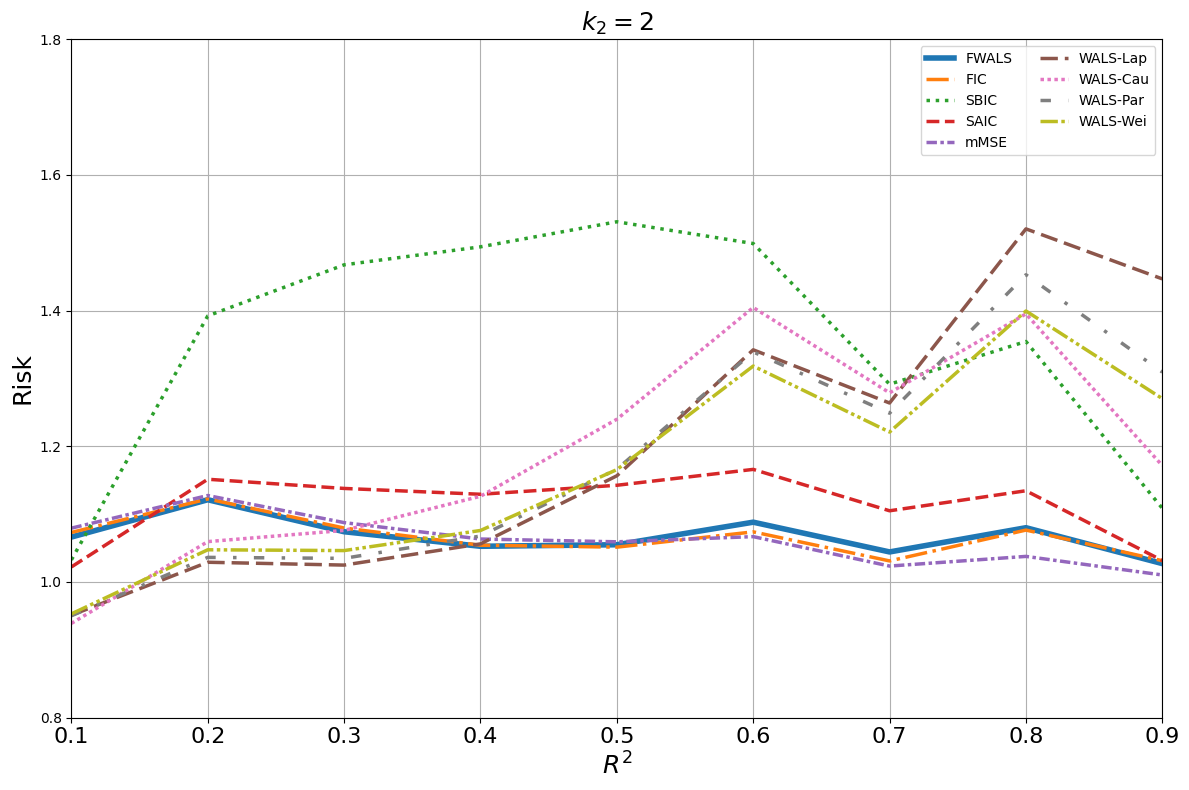}
    \includegraphics[width=0.32\linewidth]{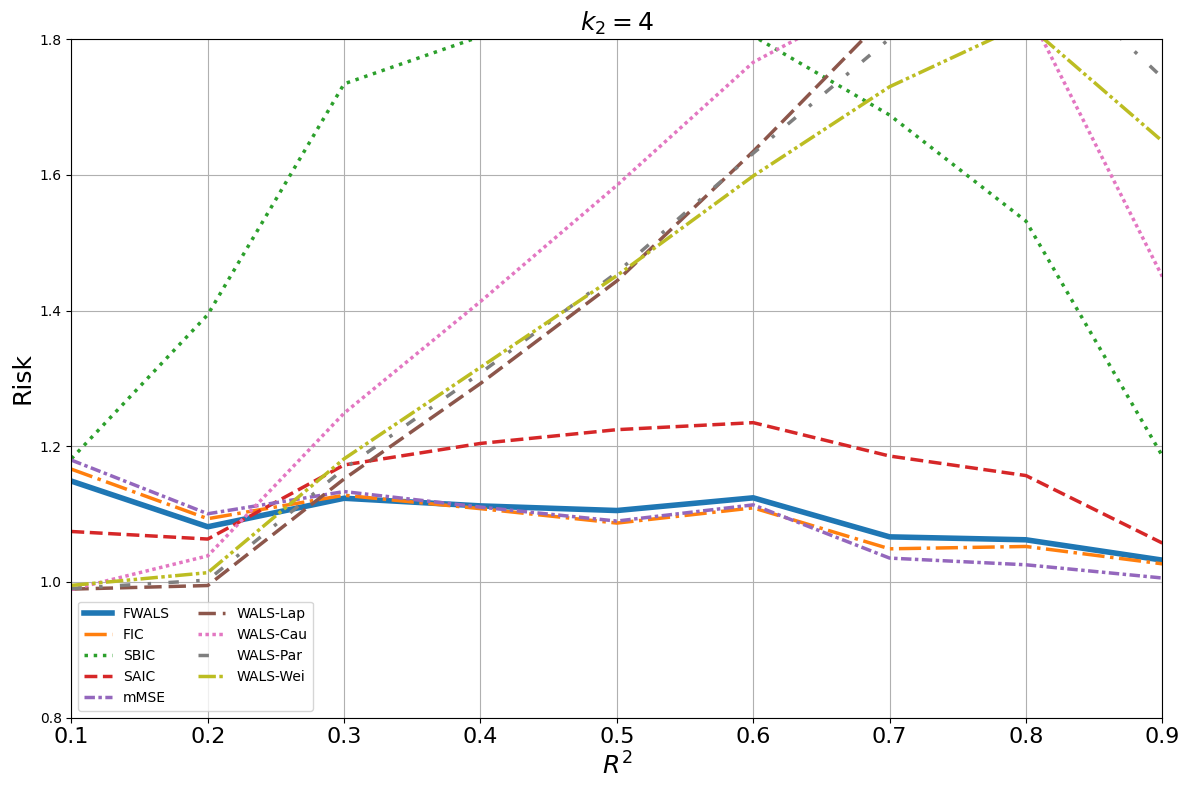}
    \includegraphics[width=0.32\linewidth]{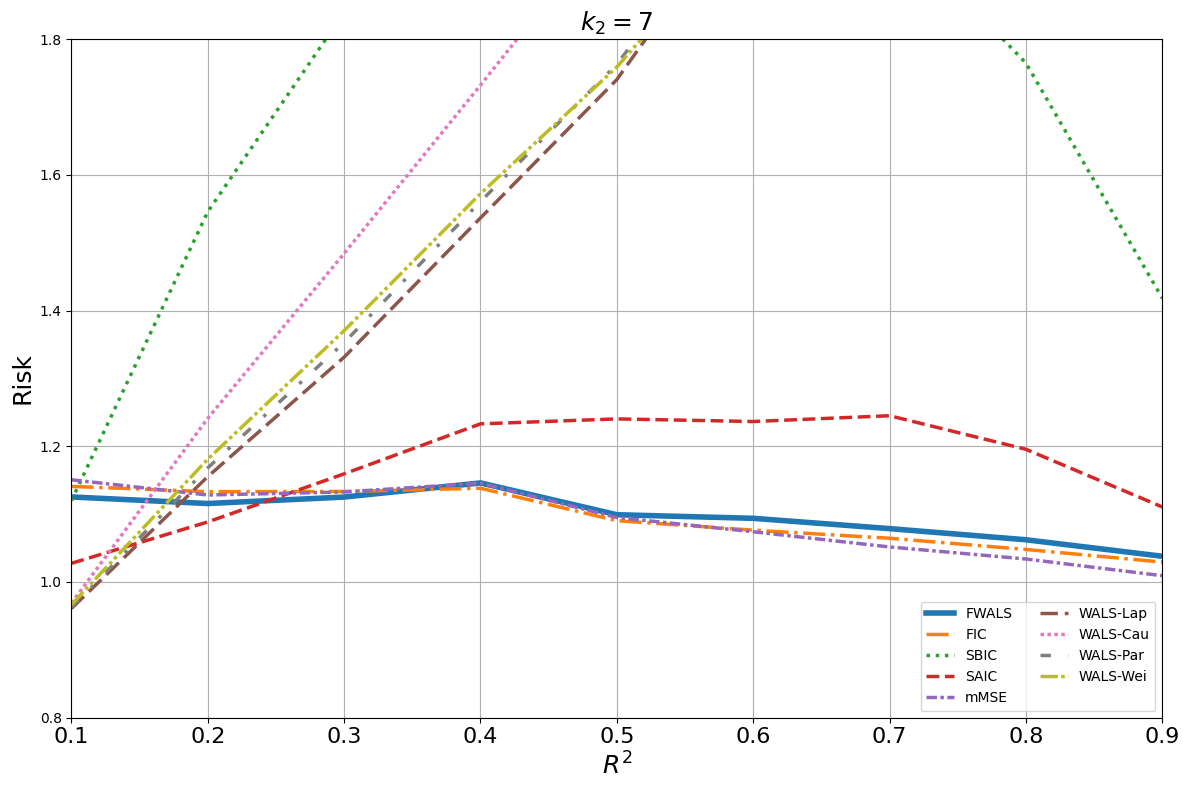}
    \caption{Risk with Different $k_2$s~($N=200$,$\tau=0.5$)}
    \label{fig5}
\end{figure}

\begin{figure}[!htpb]
    \centering
    \includegraphics[width=0.32\linewidth]{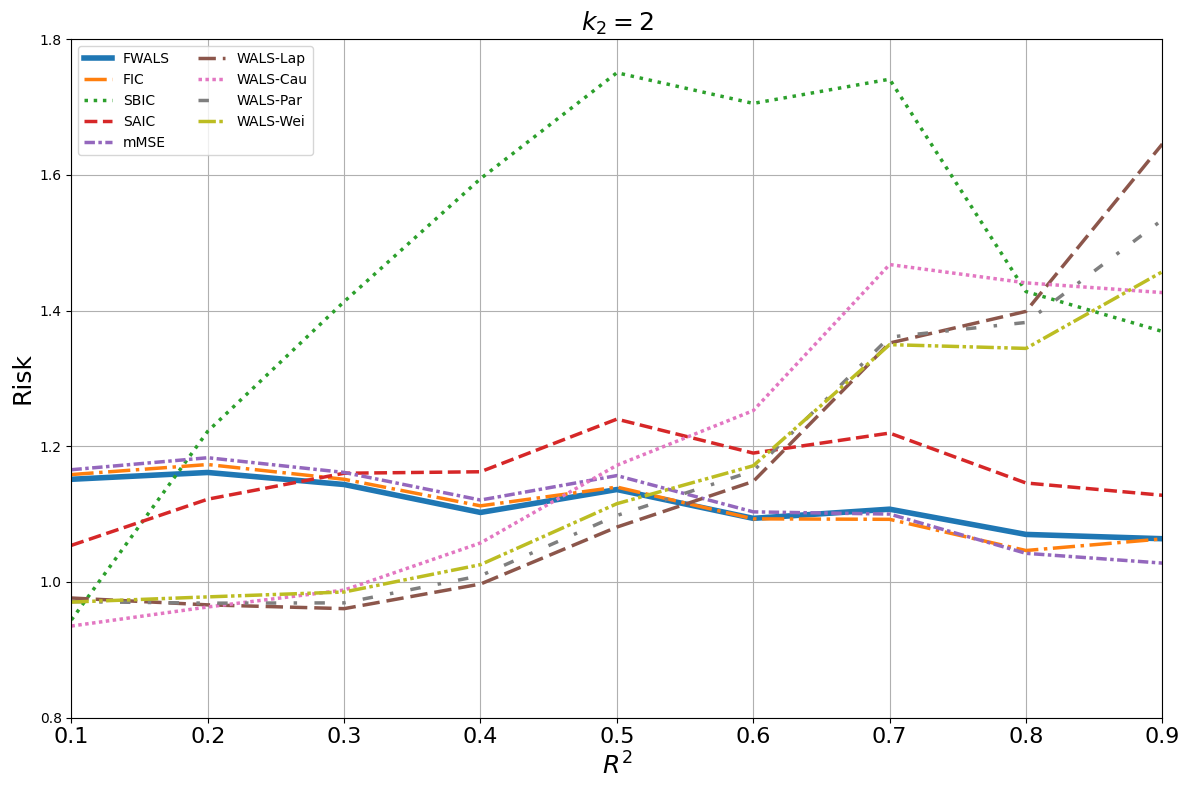}
    \includegraphics[width=0.32\linewidth]{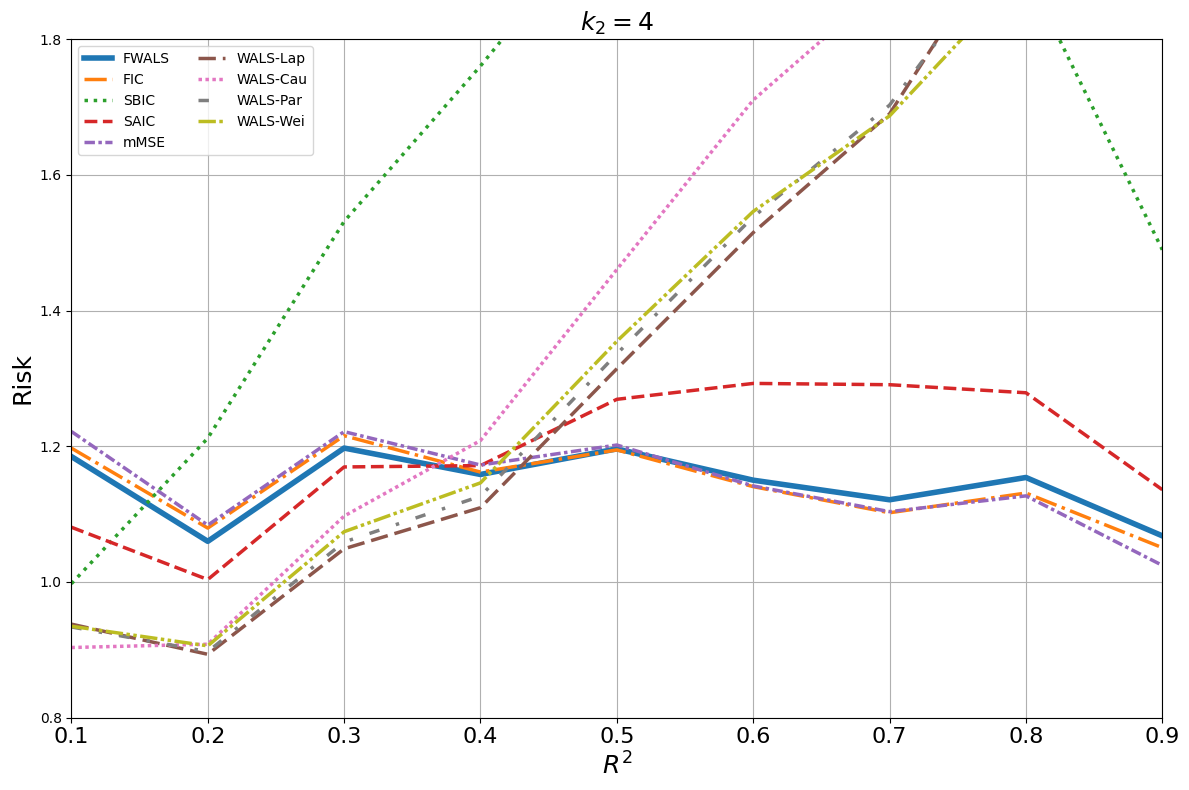}
    \includegraphics[width=0.32\linewidth]{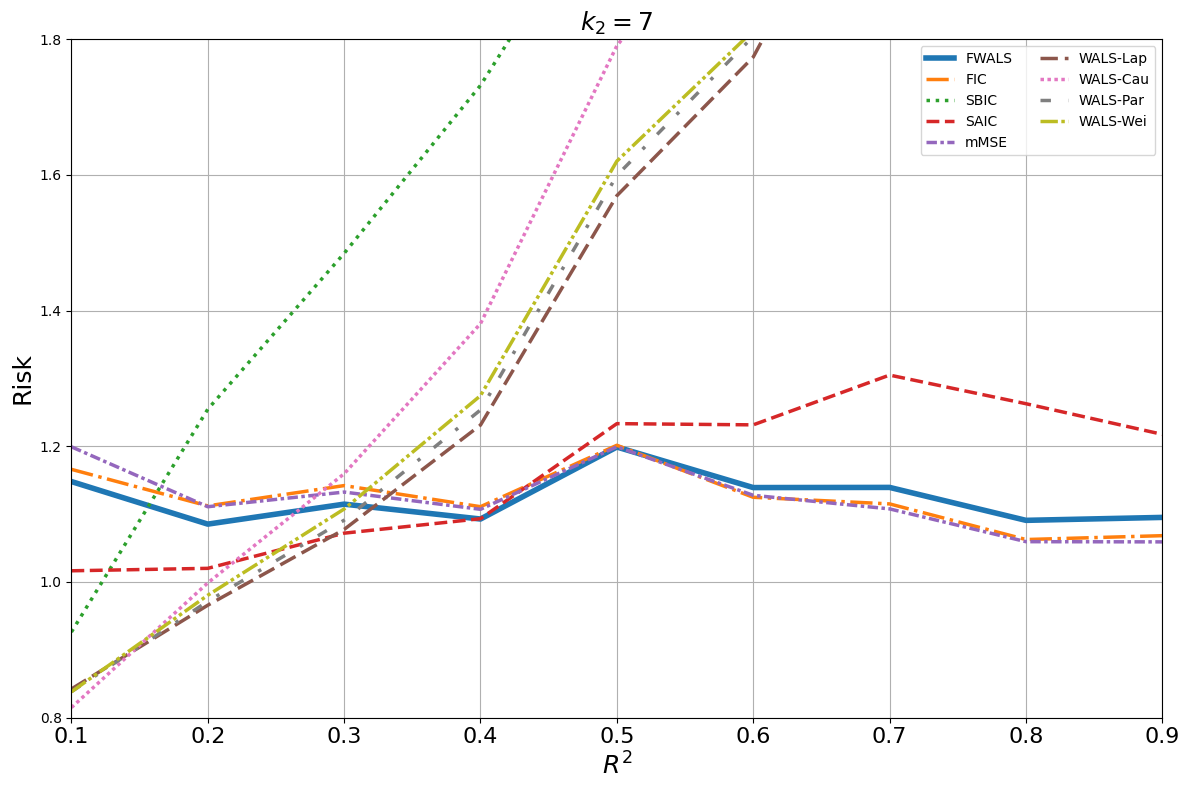}
    \caption{Risk with Different $k_2$s~($N=200$,$\tau=0.7$)}
    \label{fig6}
\end{figure}

\begin{figure}[!htpb]
    \centering
    \includegraphics[width=0.35\linewidth]{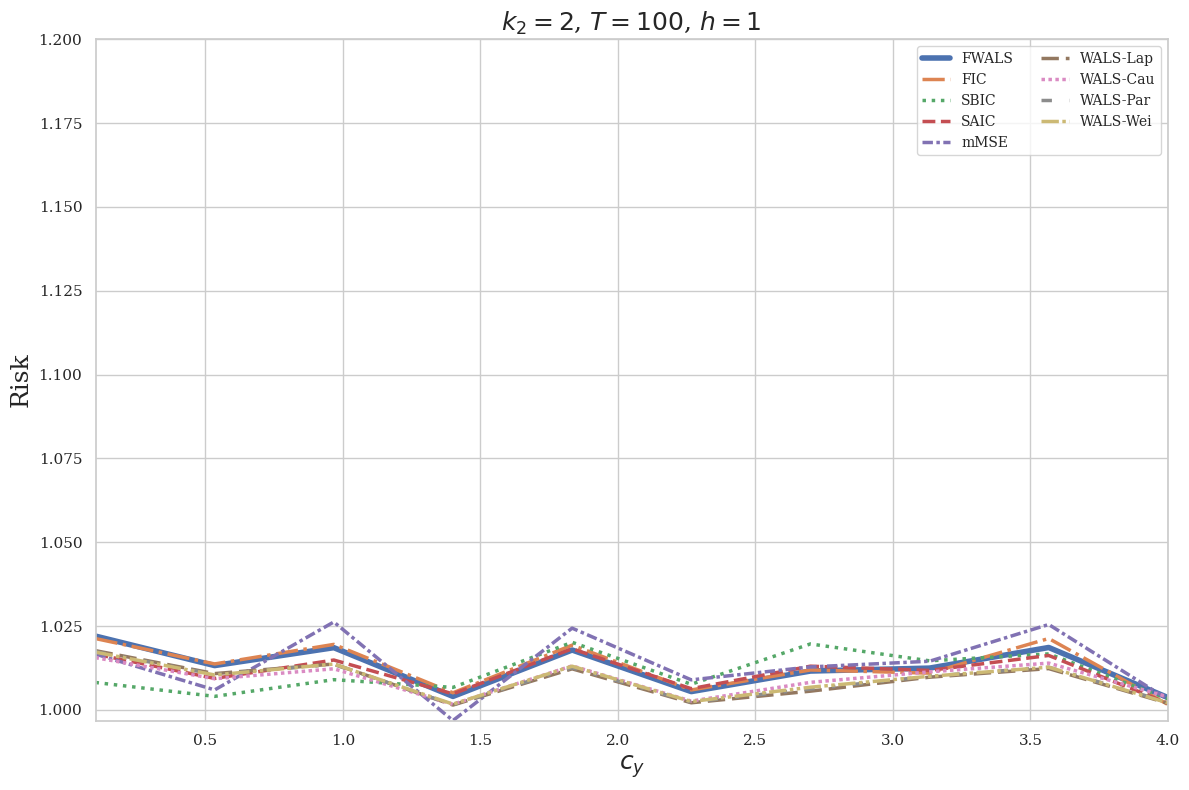}
    \includegraphics[width=0.35\linewidth]{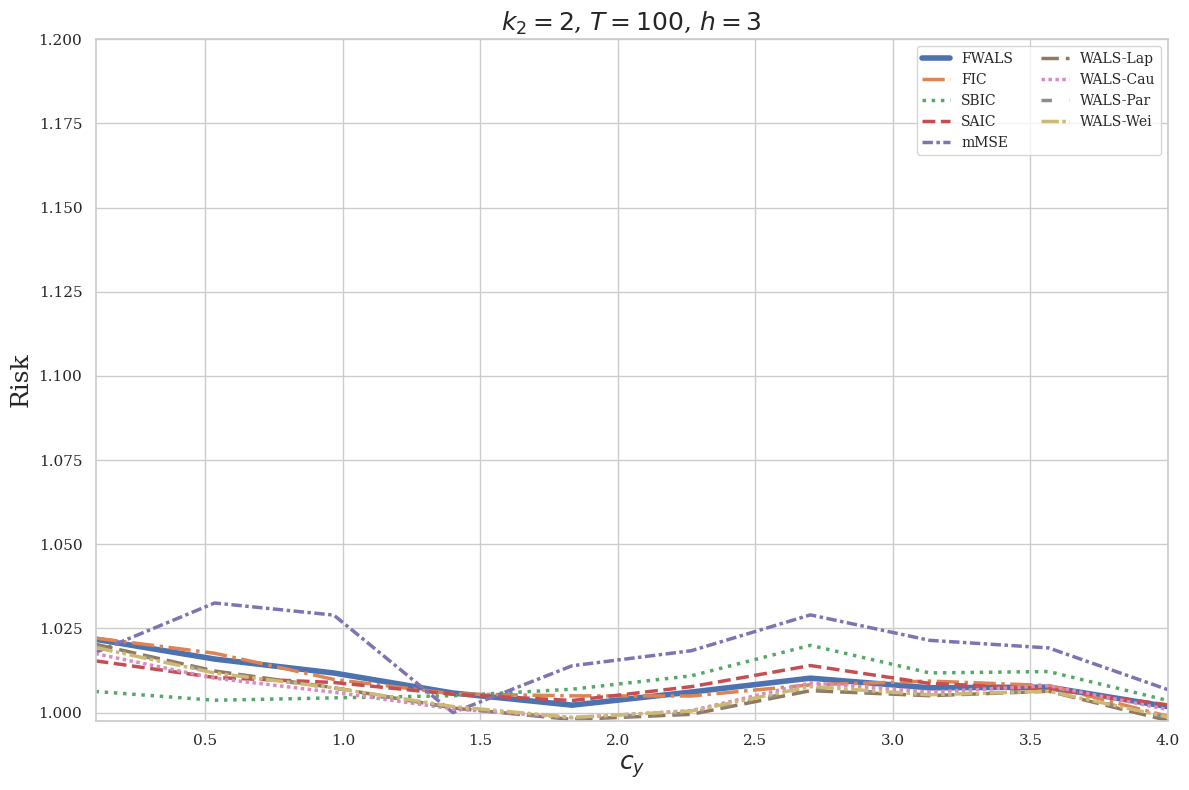}\\
    \includegraphics[width=0.35\linewidth]{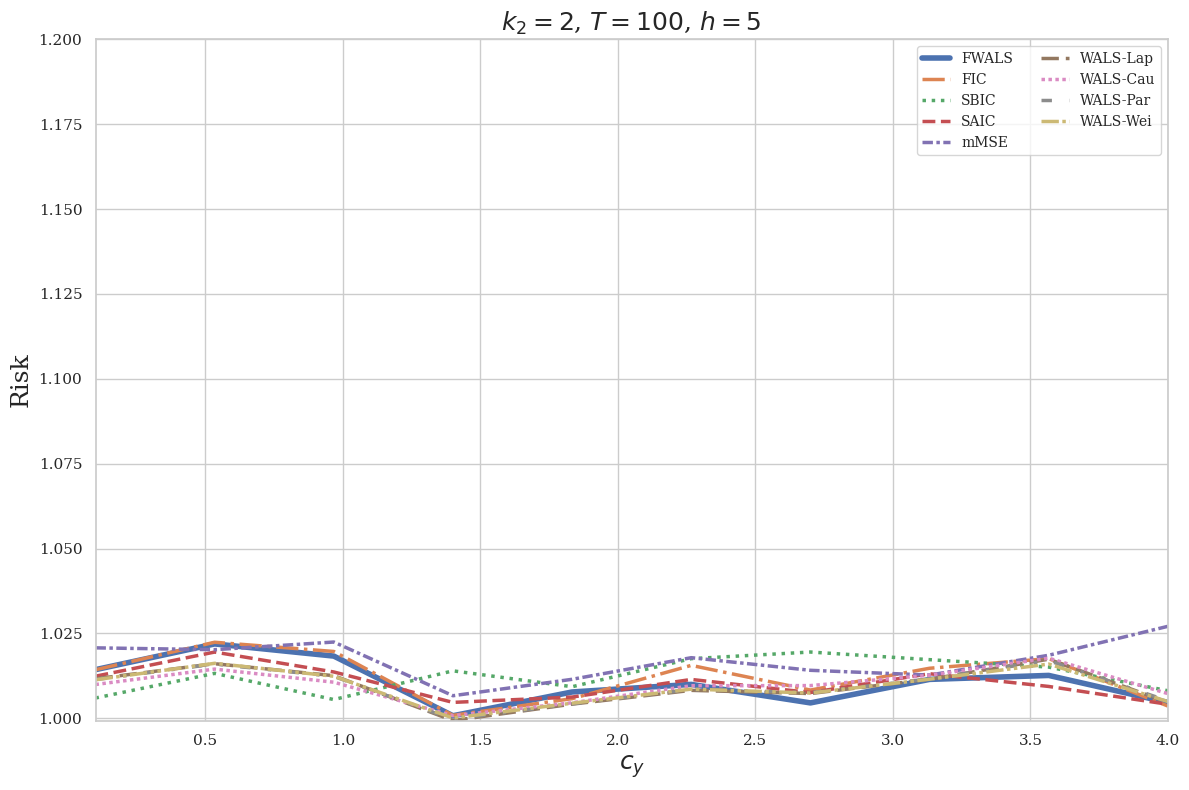}
    \includegraphics[width=0.35\linewidth]{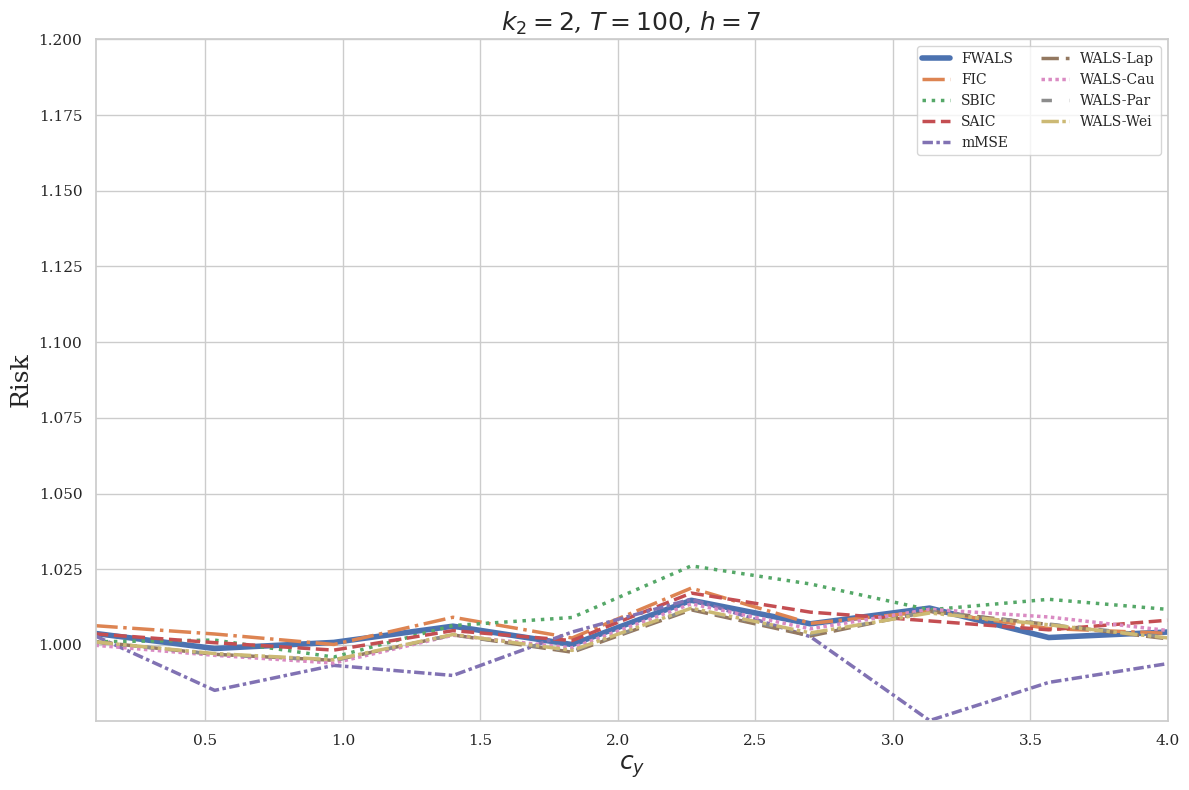}
    \caption{Risk with Different $h$s~($T=100$,$k_2=2$)}
    \label{fig7}
\end{figure}

\begin{figure}[!htpb]
    \centering
    \includegraphics[width=0.35\linewidth]{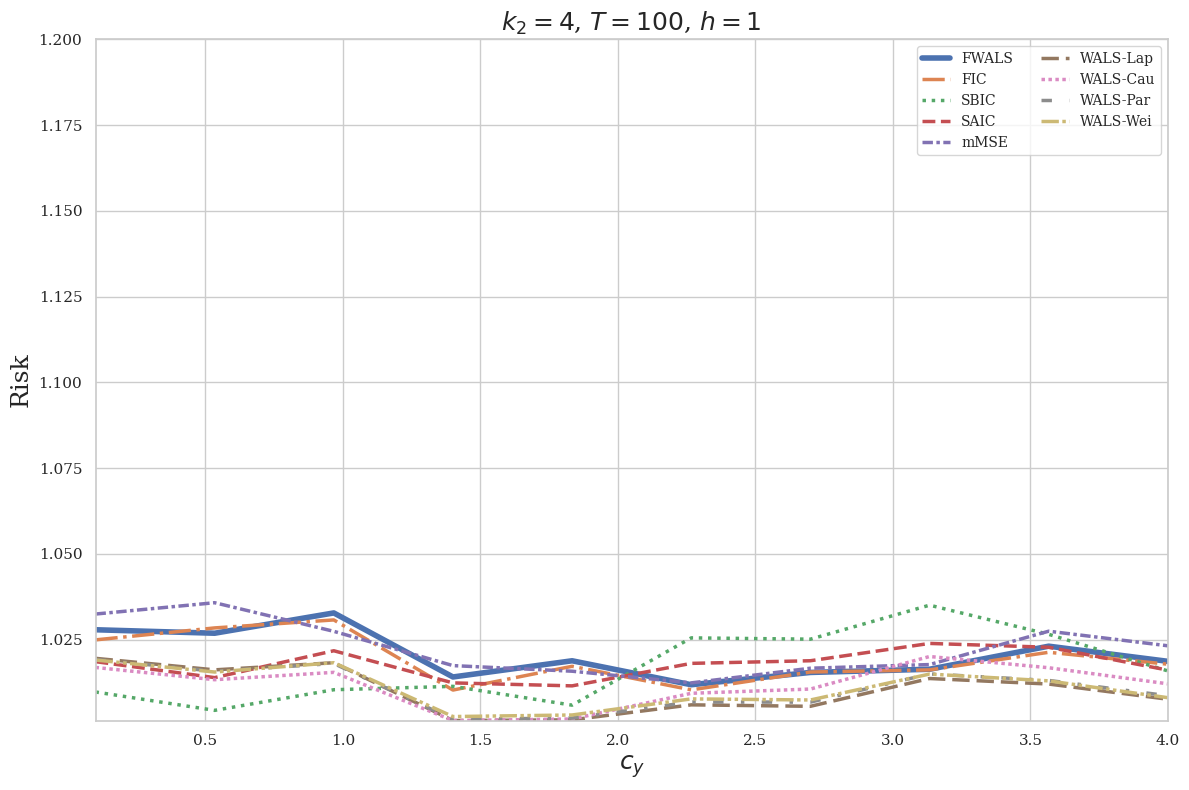}
    \includegraphics[width=0.35\linewidth]{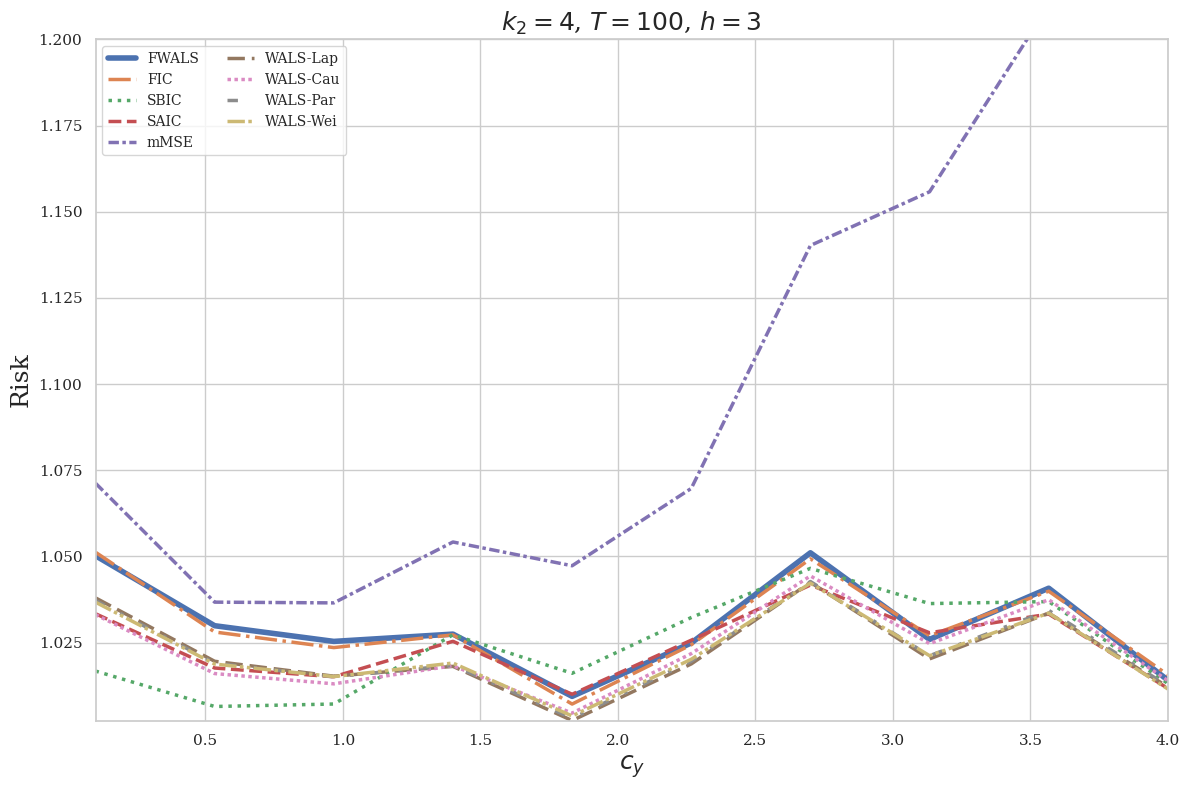}\\
    \includegraphics[width=0.35\linewidth]{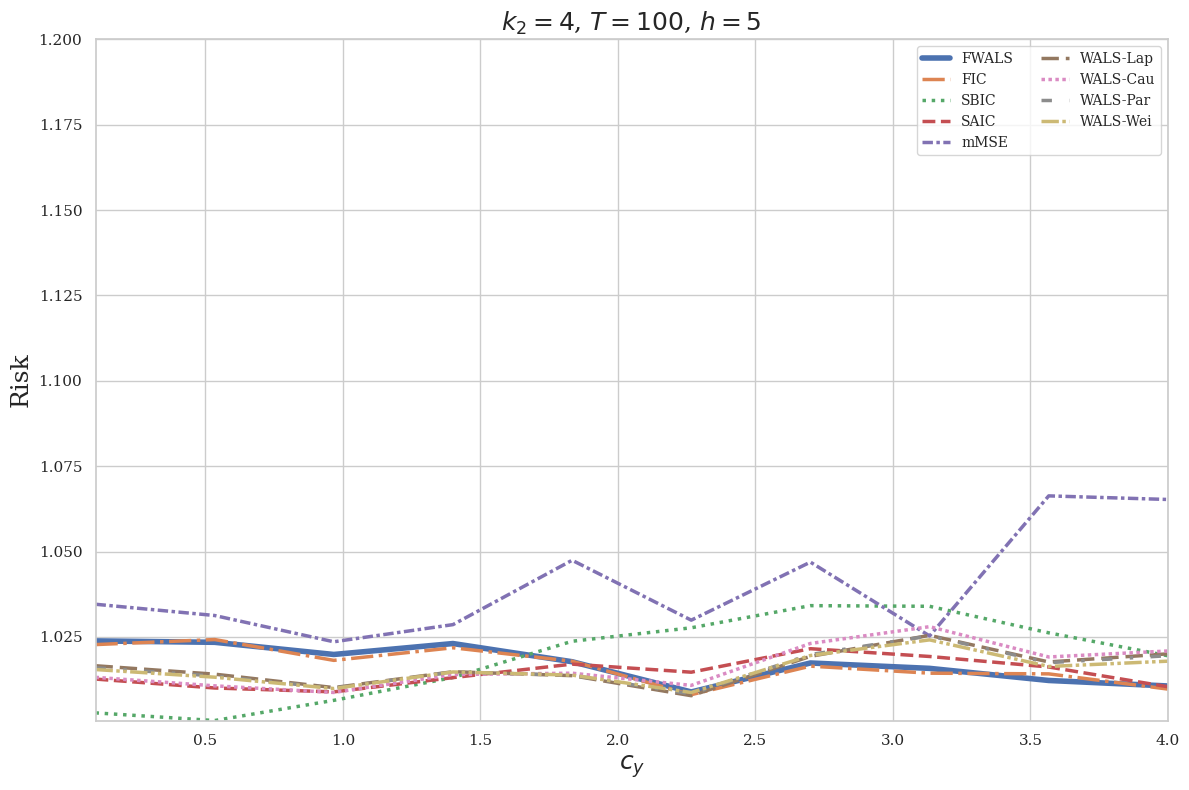}
    \includegraphics[width=0.35\linewidth]{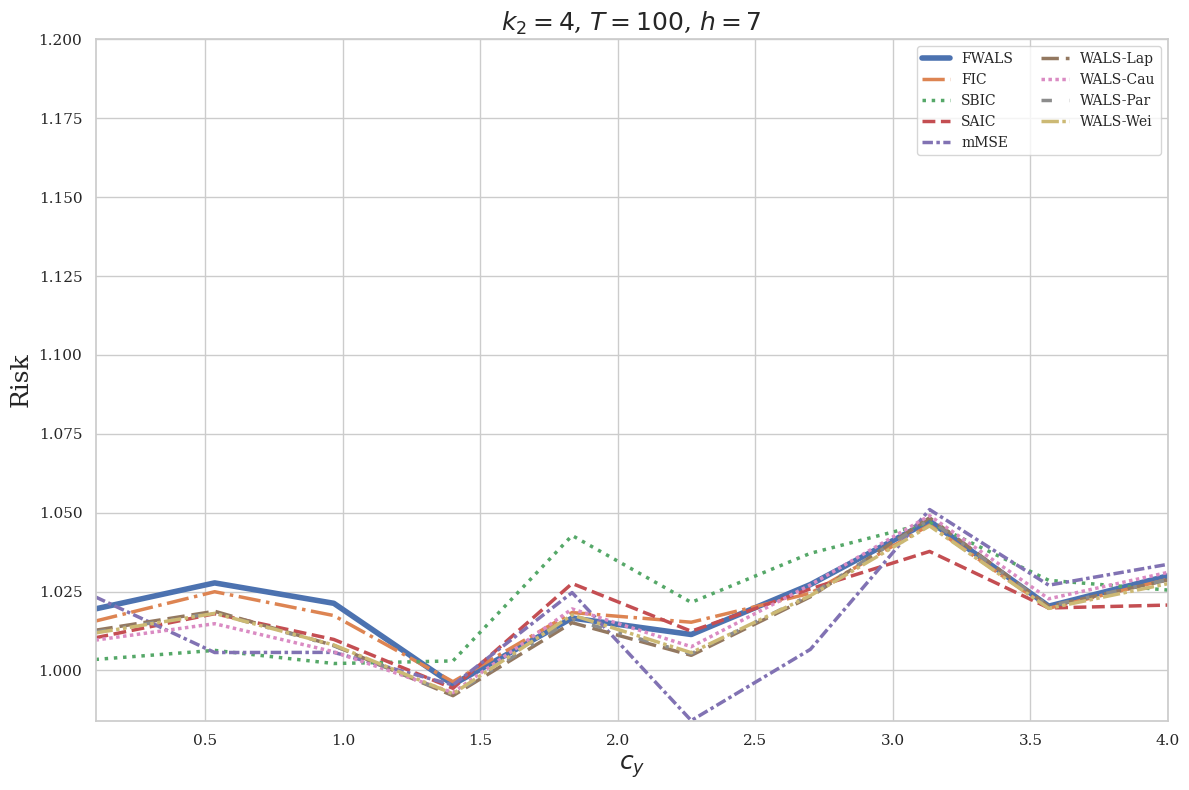}
    \caption{Risk with Different $h$s~($T=100$,$k_2=4$)}
    \label{fig8}
\end{figure}

\begin{figure}[!htpb]
    \centering
    \includegraphics[width=0.35\linewidth]{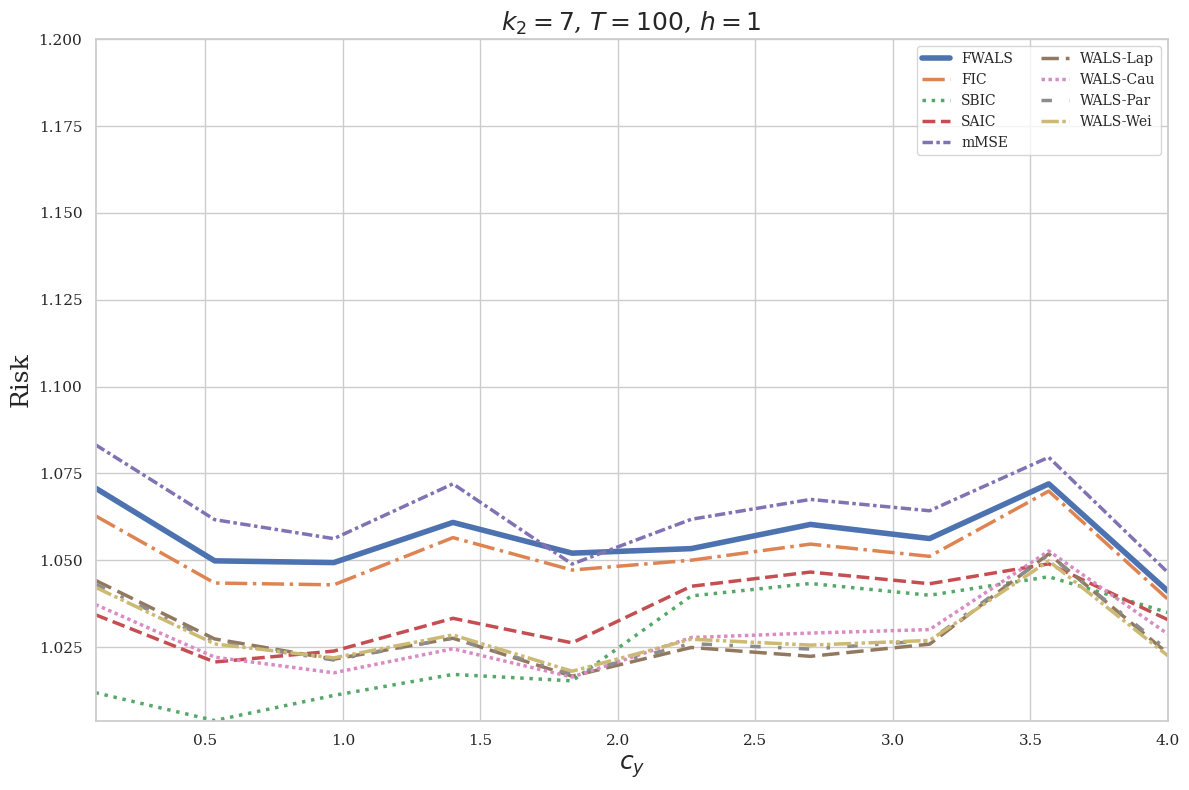}
    \includegraphics[width=0.35\linewidth]{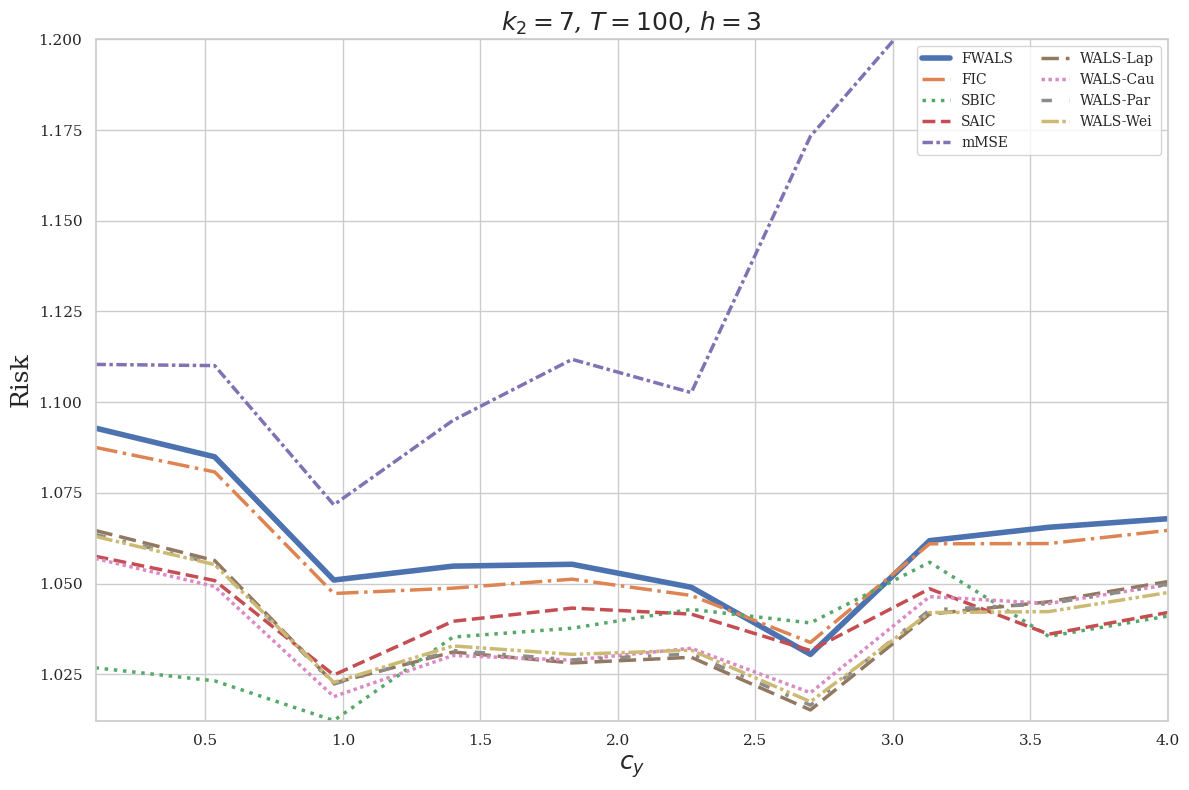}\\
    \includegraphics[width=0.35\linewidth]{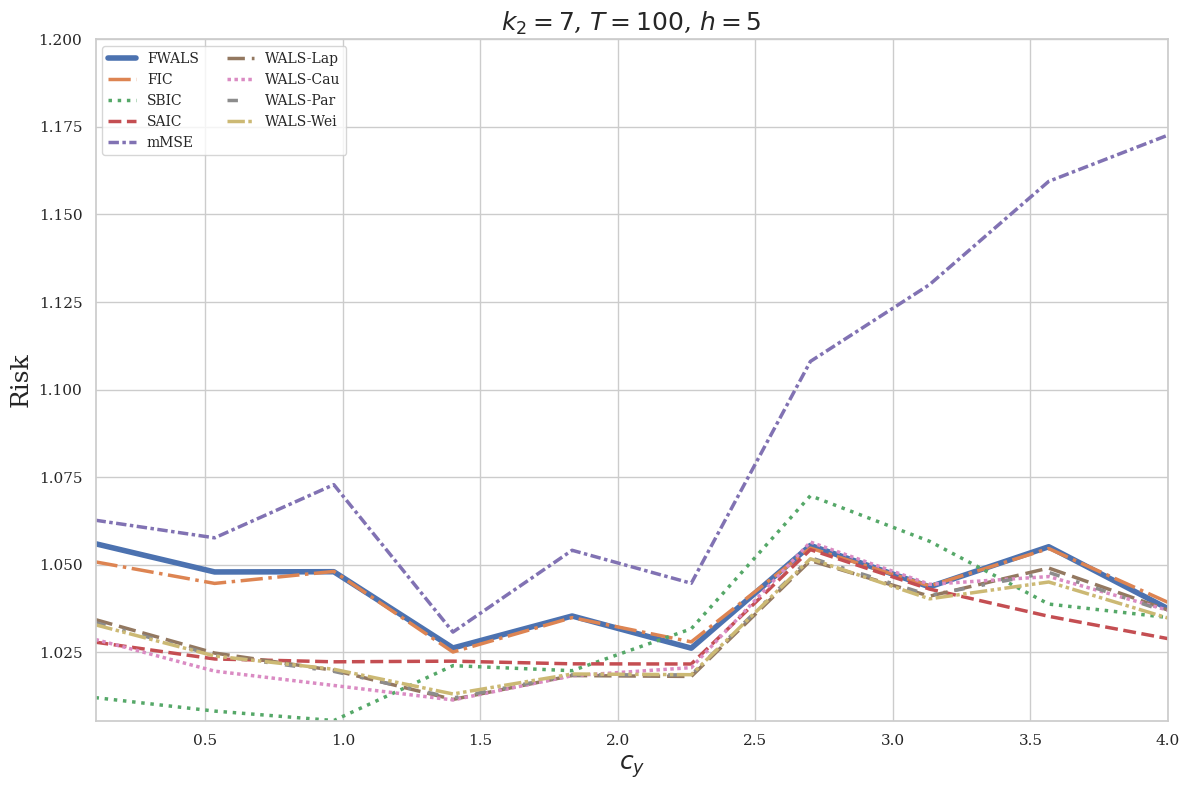}
    \includegraphics[width=0.35\linewidth]{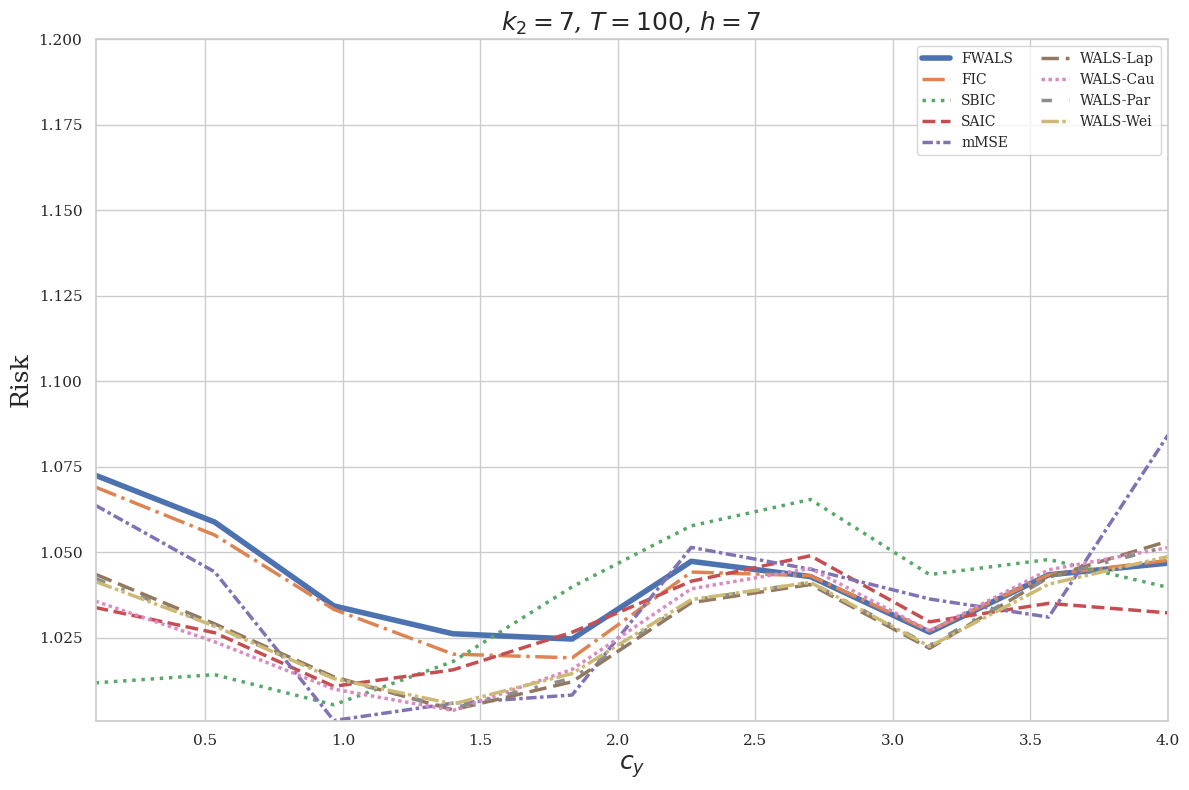}
    \caption{Risk with Different $h$s~($T=100$,$k_2=7$)}
    \label{fig9}
\end{figure}

\subsection{Computational Time Comparison}
To highlight the computational advantage of the proposed approach, we document the actual computational time required to obtain the weights and estimates. As pointed out in the theoretical sections, traditional frequentist model averaging techniques require estimating and combining all possible $2^{k_2}$ sub-models. In contrast, our proposed approach relies on the semi-orthogonal transformation, which effectively reduces the weighting problem from $2^{k_2}$ sub-models down to $k_2$ orthogonalized terms. While the computational burden might not be obvious when the number of auxiliary regressors is small (e.g., $k_2 \le 4$), the difference becomes drastically apparent as $k_2$ increases.

To illustrate this, we conduct an experiment using simulated datasets with $N=100$. We compare our proposed approach against the $\mathsf{FIC}$ and $\mathsf{mMSE}$ approaches across moderately large dimensions of auxiliary regressors, $k_2 \in \{8, 9, 10, 11\}$. All computations are executed on a desktop computer operating on Pop!\_OS 22.04~(linux-based), equipped with an Intel\textsuperscript{\tiny\textregistered} Core\textsuperscript{\tiny\texttrademark} i9-10980XE CPU. The reported computational times are calculated as the average over 100 replications across all designs specified in the first experiment.

\begin{table}[!htpb]
\centering
\begin{tabular}{lccc}
\toprule
 & \multicolumn{3}{c}{$N=100$} \\
\cmidrule{2-4}
$k_2$ & WFIC & FIC & mMSE \\
\midrule
8 & 0.012 & 0.851 & 0.130 \\
9 & 0.015 & 10.330 & 0.992 \\
10 & 0.016 & 137.098 & 9.804 \\
11 & 0.017 & 1407.540 & 97.055 \\
\bottomrule
\end{tabular}
\caption{Computational Time (seconds) for Different Methods}
\label{tab:comp_time}
\end{table}

The results are reported in Table \ref{tab:comp_time}. As we can see, when $k_2=8$, the $\mathsf{FIC}$ takes 0.851 seconds and the $\mathsf{mMSE}$ takes 0.130 seconds, whereas our proposed approach finishes in merely 0.012 seconds. When $k_2$ reaches 11, $\mathsf{FIC}$ requires more than 1,400 seconds (over 23 minutes) to compute the weights. In contrast, the computational time for our proposed approach scales linearly and remains exceptionally low at 0.017 seconds. This evidence demonstrates that the proposed approach is scalable and practically useful when $k_2$ is moderate to large, where enumeration over $2^{k_2}$ candidate models is computationally prohibitive. The finite sample performance are similar to those cases when $k_2$ is small and are reported in the Appendix C~(Figures \ref{fig10}--\ref{fig12}).

\section{Concluding Remarks}

This paper develops a computationally efficient framework for focused model averaging by introducing the focused weighted-average least squares (FWALS) estimator based on orthogonalized auxiliary regressors. The proposed method addresses a key limitation of conventional focused model averaging, namely the exponential growth of sub-models as the number of auxiliary regressors increases. By transforming the auxiliary regressors following \citet{Magnus2010} and \citet{de2018weighted}, we reduce the dimensionality of the weighting problem, thereby providing a tractable alternative without sacrificing statistical validity.

In our simulation study, FWALS, FIC, and the minimum-MSE averaging method based on singleton equations perform similarly in baseline designs, confirming that the orthogonalization step does not compromise efficiency. In the impulse response function setting, FWALS and FIC remain closely aligned and deliver stable risk performance across horizons, whereas the singleton-based approach can exhibit increased risk at longer horizons due to the instability induced by negative weights. We also compare FWALS with prior-based WALS methods (Laplace, Cauchy, Pareto, and Weibull priors). These prior-based procedures can be competitive in some configurations, particularly when the auxiliary signals are weak, but they are not constructed to minimize the risk of the focused parameter. In contrast, FWALS selects regressor-wise weights by minimizing a plug-in AMSE criterion designed for the focused parameter, which provides a principled approach to focused inference. Taken together, the results suggest that FWALS is a competitive and computationally attractive alternative to traditional focused averaging, especially when the full enumeration of $2^{k_2}$ candidate models is infeasible.

\newpage
\bibliography{mfic}

\section*{Appendix A}
For notation simplicity, we let $\BXi=\MBQ_{11}^{-1}\MBQ_{12}$ and $\MBC=\BLambda\MBP^{-1/2}$ throughout the proof.
\begin{proof}[Proof of Theorem 1]
We first discuss the limiting behavior of $\hat{\Bbeta}_2$. Recall that $\hat{\Bbeta}_2=\frac{{\MBX_2^*}^{\tp}\MBM_1\MBy}{N}$, we can have the following representation:
\begin{align*}
\hat{\Bbeta}_2 =& \frac{{\MBX_2^*}^{\tp}\MBM_1(\MBX_2^*\Bbeta_2^*+\Bepsilon)}{N}\\
               =&\Bbeta_2^*+\frac{1}{N}\hat{\MBP}^{-1/2}\hat{\BLambda}\MBX_2^{\tp}\left(\MBI-\MBX_1(\MBX_1^{\tp}\MBX_1)^{-1}\MBX_1^{\tp}\right)\Bepsilon\\
               =&\Bbeta_2^*+\frac{1}{N}\hat{\MBP}^{-1/2}\hat{\BLambda}\MBX_2^{\tp}\Bepsilon-\frac{1}{N}\hat{\MBP}^{-1/2}\hat{\BLambda}\MBX_2^{\tp}\MBX_1(\MBX_1^{\tp}\MBX_1)^{-1}\MBX_1^{\tp}\Bepsilon.
\end{align*}
Accordingly, it can be shown that
\begin{align*}
\sqrt{N}\hat{\Bbeta}_2 =& \sqrt{N}\Bbeta_2^*+\hat{\MBP}^{-1/2}\hat{\BLambda}\frac{1}{\sqrt{N}}\MBX_2^{\tp}\Bepsilon-\hat{\MBP}^{-1/2}\hat{\BLambda}\frac{1}{N}\MBX_2^{\tp}\MBX_1\left(\frac{1}{N}\MBX_1^{\tp}\MBX_1\right)^{-1}\frac{1}{\sqrt{N}}\MBX_1^{\tp}\Bepsilon\\
=&\hat{\MBP}^{1/2}\hat{\BLambda}^{-1}\sqrt{N}\Bbeta_2+\left[\MBzero\quad \hat{\MBP}^{-1/2}\hat{\BLambda}\right]\frac{1}{\sqrt{N}}\MBX^{\tp}\Bepsilon+\left[-\hat{\MBP}^{-1/2}\hat{\BLambda}\frac{1}{N}\MBX_2^{\tp}\MBX_1\left(\frac{1}{N}\MBX_1^{\tp}\MBX_1\right)^{-1}\quad\MBzero\right]\frac{1}{\sqrt{N}}\MBX^{\tp}\Bepsilon\\
=&\hat{\MBP}^{1/2}\hat{\BLambda}^{-1}\Bdelta+\left[\MBzero\quad\hat{\MBP}^{-1/2}\hat{\BLambda}\right]\frac{1}{\sqrt{N}}\MBX^{\tp}\Bepsilon+\left[-\hat{\MBP}^{-1/2}\hat{\BLambda}\hat{\MBQ}_{21}\hat{\MBQ}_{11}^{-1}\quad\MBzero\right]\frac{1}{\sqrt{N}}\MBX^{\tp}\Bepsilon\\
=&\hat{\MBP}^{1/2}\hat{\BLambda}^{-1}\Bdelta+\left[-\hat{\MBP}^{-1/2}\hat{\BLambda}\hat{\MBQ}_{21}\hat{\MBQ}_{11}^{-1}\quad\hat{\MBP}^{-1/2}\hat{\BLambda}\right]\frac{1}{\sqrt{N}}\MBX^{\tp}\Bepsilon.
\end{align*}
The second equality holds because of Equation\eqref{eq:ortho} and third equality comes from Assumption \ref{ass:loca}. Using Assumptions \ref{ass:lln} and \ref{ass:clt}, we can obtain the following result:
\begin{align*}
\sqrt{N}\hat{\Bbeta}_2-\hat{\MBP}^{1/2}\hat{\BLambda}^{-1}\Bdelta\CD\left[-\MBC^{\tp}\BXi^{\tp}\quad\MBC^{\tp}\right]\MBR.
\end{align*}

As for $\hat{\Bbeta}_{1\mathrm{WALS}}$, based on Equation \eqref{eq:WALS2} and result from $\hat{\Bbeta}_2$, we can obtain that
\begin{align*}
\hat{\Bbeta}_{1\mathrm{WALS}} =&\Bbeta_1+\left(\frac{\MBX_1^{\tp}\MBX_1}{N}\right)^{-1}\frac{\MBX_1^{\tp}\MBX_2}{N}\Bbeta_2+\left(\frac{\MBX_1^{\tp}\MBX_1}{N}\right)^{-1}\frac{\MBX_1^{\tp}\Bepsilon}{N}\\
&-\left(\frac{\MBX_1^{\tp}\MBX_1}{N}\right)^{-1}\frac{\MBX_1^{\tp}\MBX_2}{N}\hat{\BLambda}\hat{\MBP}^{-1/2}\tilde{\MBW}\left(\hat{\MBP}^{1/2}\hat{\BLambda}^{-1}\Bbeta_2+\left[-\hat{\MBP}^{-1/2}\hat{\BLambda}\hat{\MBQ}_{21}\hat{\MBQ}_{11}^{-1}\quad\hat{\MBP}^{-1/2}\hat{\BLambda}\right]\frac{1}{N}\MBX^{\tp}\Bepsilon\right)\\
=&\Bbeta_1+\hat{\MBQ}_{11}^{-1}\hat{\MBQ}_{12}\Bbeta_2+\hat{\MBQ}_{11}^{-1}\frac{\MBX_1^{\tp}\Bepsilon}{N}\\
&-\hat{\MBQ}_{11}^{-1}\hat{\MBQ}_{12}\hat{\BLambda}\hat{\MBP}^{-1/2}\tilde{\MBW}\left(\hat{\MBP}^{1/2}\hat{\BLambda}^{-1}\Bbeta_2+\left[-\hat{\MBP}^{-1/2}\hat{\BLambda}\hat{\MBQ}_{21}\hat{\MBQ}_{11}^{-1}\quad\hat{\MBP}^{-1/2}\hat{\BLambda}\right]\frac{1}{N}\MBX^{\tp}\Bepsilon\right)\\
=&\Bbeta_1+\hat{\MBQ}_{11}^{-1}\hat{\MBQ}_{12}\left(\MBI-\hat{\BLambda}\hat{\MBP}^{-1/2}\tilde{\MBW}\hat{\MBP}^{1/2}\hat{\BLambda}^{-1}\right)\Bbeta_2\\
&+\left[\hat{\MBQ}_{11}^{-1}+\hat{\MBQ}_{11}^{-1}\hat{\MBQ}_{12}\hat{\BLambda}\hat{\MBP}^{-1/2}\tilde{\MBW}\hat{\MBP}^{-1/2}\hat{\BLambda}\hat{\MBQ}_{21}\hat{\MBQ}_{11}^{-1}\quad-\hat{\MBQ}_{11}^{-1}\hat{\MBQ}_{12}\hat{\BLambda}\hat{\MBP}^{-1/2}\tilde{\MBW}\hat{\MBP}^{-1/2}\hat{\BLambda}\right]\frac{1}{N}\MBX^{\tp}\Bepsilon.
\end{align*}
Consequently, 
\begin{align*}
\sqrt{N}&(\hat{\Bbeta}_{1\mathrm{WALS}}-\Bbeta_1)=\hat{\MBQ}_{11}^{-1}\hat{\MBQ}_{12}\left(\MBI-\hat{\BLambda}\hat{\MBP}^{-1/2}\tilde{\MBW}\hat{\MBP}^{1/2}\hat{\BLambda}^{-1}\right)\Bdelta\\
&+\left[\hat{\MBQ}_{11}^{-1}+\hat{\MBQ}_{11}^{-1}\hat{\MBQ}_{12}\hat{\BLambda}\hat{\MBP}^{-1/2}\tilde{\MBW}\hat{\MBP}^{-1/2}\hat{\BLambda}\hat{\MBQ}_{21}\hat{\MBQ}_{11}^{-1}\quad-\hat{\MBQ}_{11}^{-1}\hat{\MBQ}_{12}\hat{\BLambda}\hat{\MBP}^{-1/2}\tilde{\MBW}\hat{\MBP}^{-1/2}\hat{\BLambda}\right]\frac{1}{\sqrt{N}}\MBX^{\tp}\Bepsilon\\
\CD& \MBQ_{11}^{-1}\MBQ_{12}\left(\MBI-\BLambda\MBP^{-1/2}\tilde{\MBW}{\MBP^{1/2}}^{\tp}\BLambda^{-1}\right)\Bdelta\\
&+\left[\MBQ_{11}^{-1}+\MBQ_{11}^{-1}\MBQ_{12}\BLambda\MBP^{-1/2}\tilde{\MBW}{\MBP^{-1/2}}^{\tp}\BLambda\MBQ_{21}\MBQ_{11}^{-1}\quad-\MBQ_{11}^{-1}\MBQ_{12}\BLambda\MBP^{-1/2}\tilde{\MBW}{\MBP^{-1/2}}^{\tp}\BLambda\right]\MBR\\
=& \BXi\MBC\left(\MBI-\tilde{\MBW}\right)\MBC^{-1}\Bdelta+\left[\MBQ_{11}^{-1}+\BXi\MBC\tilde{\MBW}\MBC^{\tp}\BXi^{\tp}\quad-\BXi\MBC\tilde{\MBW}\MBC^{\tp}\right]\MBR.
\end{align*}
\end{proof}

\begin{proof}[Proof of Theorem 2 and Equation 14]
In this proof, we derive the limiting behavior of the focused parameter. Because the focused parameter is a function of $\Bbeta_1$, and let $\frac{\partial\mu}{\partial\Bbeta_1}=\MBD_{\Bbeta_1}$ , it is easy to adopt the delta method to have
\begin{align*}
\sqrt{N}&\left(\mu(\hat{\Bbeta}_{1\mathrm{WALS}})-\mu(\Bbeta_1)\right) \CD \MBD_{\Bbeta_1}^{\tp}\BXi\MBC\left(\MBI-\tilde{\MBW}\right)\MBC^{-1}\Bdelta\\
&+\MBD_{\Bbeta_1}^{\tp}\left[\MBQ_{11}^{-1}+\BXi\MBC\tilde{\MBW}\MBC^{\tp}\BXi^{\tp}\quad-\BXi\MBC\tilde{\MBW}\MBC^{\tp}\right]\MBR\equiv R_{\mu}.
\end{align*}

Based on the above result, it can be seen that the mean of $R_{\mu}$ follows that
\begin{align*}
\mathbb{E}(R_{\mu}) = \MBD_{\Bbeta_1}^{\tp}\BXi\MBC\left(\MBI-\tilde{\MBW}\right)\MBC^{-1}\Bdelta.
\end{align*}
The result holds because $\mathbb{E}(\MBR)=\MBzero$ from Assumption \ref{ass:clt}. The variance can be derived immediately with a suitable partition that $\BOmega=\begin{bmatrix}\BOmega_{11}&\BOmega_{12}\\
\BOmega_{21}&\BOmega_{22}\end{bmatrix}$.
\begin{align*}
&\VAR(R_{\mu})=\MBD_{\Bbeta_1}^{\tp}\left[\MBQ_{11}^{-1}+\BXi\MBC\tilde{\MBW}\MBC^{\tp}\BXi^{\tp}\quad-\BXi\MBC\tilde{\MBW}\MBC^{\tp}\right]\BOmega\left[\MBQ_{11}^{-1}+\BXi\MBC\tilde{\MBW}\MBC^{\tp}\BXi^{\tp}\quad-\BXi\MBC\tilde{\MBW}\MBC^{\tp}\right]^{\tp}\MBD_{\Bbeta_1}\\
=&\MBD_{\Bbeta_1}^{\tp}\left[\MBQ_{11}^{-1}+\BXi\MBC\tilde{\MBW}\MBC^{\tp}\BXi^{\tp}\quad-\BXi\MBC\tilde{\MBW}\MBC^{\tp}\right]\begin{bmatrix}\BOmega_{11}&\BOmega_{12}\\
\BOmega_{21}&\BOmega_{22}\end{bmatrix}\left[\MBQ_{11}^{-1}+\BXi\MBC\tilde{\MBW}\MBC^{\tp}\BXi^{\tp}\quad-\BXi\MBC\tilde{\MBW}\MBC^{\tp}\right]^{\tp}\MBD_{\Bbeta_1}\\
=&\MBD_{\Bbeta_1}^{\tp}(\MBQ_{11}^{-1}+\BXi\MBC\tilde{\MBW}\MBC^{\tp}\BXi^{\tp})\BOmega_{11}(\MBQ_{11}^{-1}+\BXi\MBC\tilde{\MBW}\MBC^{\tp}\BXi^{\tp})\MBD_{\Bbeta_1}\\
&-\MBD_{\Bbeta_1}^{\tp}(\MBQ_{11}^{-1}+\BXi\MBC\tilde{\MBW}\MBC^{\tp}\BXi^{\tp})\BOmega_{12}\MBC\tilde{\MBW}\MBC^{\tp}\BXi^{\tp}\MBD_{\Bbeta_1}\\
&-\MBD_{\Bbeta_1}^{\tp}\BXi\MBC\tilde{\MBW}\MBC^{\tp}\BOmega_{12}(\MBQ_{11}^{-1}+\BXi\MBC\tilde{\MBW}\MBC^{\tp}\BXi^{\tp})\MBD_{\Bbeta_1}\\
&+\MBD_{\Bbeta_1}^{\tp}\BXi\MBC\tilde{\MBW}\MBC^{\tp}\BOmega_{22}\MBC\tilde{\MBW}\MBC^{\tp}\BXi^{\tp}\MBD_{\Bbeta_1}\\
=& \MBD_{\Bbeta_1}^{\tp}\MBQ_{11}^{-1}\MBD_{\Bbeta_1}+\tilde{\MBw}^{\tp}\MBV\MBC^{\tp}\BXi^{\tp}\BOmega_{11}\BXi\MBC\MBV\tilde{\MBw}-\tilde{\MBw}^{\tp}\MBV\MBC^{\tp}\BXi^{\tp}\BOmega_{12}\MBC\MBV\tilde{\MBw}-\tilde{\MBw}^{\tp}\MBV\MBC^{\tp}\BOmega_{21}\BXi\MBC\MBV\tilde{\MBw}\\
&+\tilde{\MBw}^{\tp}\MBV\MBC^{\tp}\BOmega_{22}\MBC\MBV\tilde{\MBw}+2\tilde{\MBw}^{\tp}\MBV\MBC^{\tp}\BXi^{\tp}\BOmega_{11}\MBQ_{11}^{-1}\MBD_{\Bbeta_1}-2\tilde{\MBw}^{\tp}\MBV\MBC^{\tp}\BOmega_{21}\MBQ_{11}^{-1}\MBD_{\Bbeta_1}\\
=&\MBD_{\Bbeta_1}^{\tp}\MBQ_{11}^{-1}\MBD_{\Bbeta_1}+\tilde{\MBw}^{\tp}\MBV\MBB\BOmega\MBB^{\tp}\MBV\tilde{\MBw}+2\tilde{\MBw}^{\tp}\MBV\MBB\BOmega\MBH\MBQ_{11}^{-1}\MBD_{\Bbeta_1},
\end{align*}
where $\tilde{\MBw}=(\tilde{w}_1,...,\tilde{w}_{k_2})^{\tp}$, $\MBV=\DIAG(\MBD_{\Bbeta_1}^{\tp}\BXi\MBC)=\DIAG(\MBC^{\tp}\BXi^{\tp}\MBD_{\Bbeta_1})$, $\MBB=\left[-\MBC^{\tp}\BXi^{\tp}\quad\MBC^{\tp}\right]$ and $\MBH=\left[\MBI\quad\MBzero\right]^{\tp}$. 

Taking the mean and the variance of $R_{\mu}$ together, we can have the asymptotic MSE of $R_{\mu}$ by calculating the sum of the squared mean and the variance:
\begin{align*}
\mathrm{AMSE}(\mu(\hat{\Bbeta}_{1\mathrm{WALS}}))=&\MBD_{\Bbeta_1}^{\tp}\BXi\Bdelta\Bdelta^{\tp}\BXi^{\tp}\MBD_{\Bbeta_1}+\tilde{\MBw}^{\tp}\MBV\MBC^{-1}\Bdelta\Bdelta^{\tp}\MBC\MBV\tilde{\MBw}-2\tilde{\MBw}^{\tp}\MBV\MBC^{-1}\Bdelta\Bdelta^{\tp}\BXi^{\tp}\MBD_{\Bbeta_1}\\
&+\MBD_{\Bbeta_1}^{\tp}\MBQ_{11}^{-1}\MBD_{\Bbeta_1}+\tilde{\MBw}^{\tp}\MBV\MBB\BOmega\MBB^{\tp}\MBV\tilde{\MBw}+2\tilde{\MBw}^{\tp}\MBV\MBB\BOmega\MBH\MBQ_{11}^{-1}\MBD_{\Bbeta_1}.
\end{align*}

\end{proof}

\section*{Appendix B}
In this appendix, we detail the density functions $\pi(\eta)$ and the computational procedures for the implied shrinkage weights $\omega(t)$ corresponding to the four priors discussed in the main text. Following the standard WALS framework, the Laplace prior is given by $\pi(\eta) = \frac{c}{2}\exp(-c|\eta|)$ with $c=\ln(2)$. Because it yields a closed-form solution, its implied shrinkage weight is analytically expressed as $\omega(t) = 1 - \frac{c}{t} h(t)$, where $h(t) = \frac{\exp(-ct)\Phi(t-c)-\exp(ct)\Phi(-t-c)}{\exp(-ct)\Phi(t-c)+\exp(ct)\Phi(-t-c)}$, and $\Phi(\cdot)$ is the standard normal cumulative distribution function. 

For the heavy-tailed priors, the density functions and parameter choices are specified as follows. The Cauchy prior requires no additional parameters, with density $\pi(\eta) = \frac{1}{\pi(1+\eta^2)}$. For the Pareto prior, we adopt the preferred values from the literature, setting the shape parameter $a=0.0862$ and scale parameter $c=0.0676$ for its density $\pi(\eta) = \frac{c(1-a)}{2a}(1+c|\eta|)^{-1/a}$. Similarly, to ensure sufficient tail heaviness and smoothness for the Weibull prior, we set $b=0.8876$ and $c=\ln(2)$ in its density $\pi(\eta) = \frac{bc}{2} |\eta|^{b-1} \exp(-c|\eta|^b)$. 

Unlike the Laplace prior, combining these heavy-tailed priors with the normal likelihood $\phi(t-\eta)$ does not yield a closed-form posterior mean. Therefore, we evaluate the posterior mean, $m(t) = \int_{-\infty}^{\infty} \eta \phi(t-\eta)\pi(\eta) d\eta \big/ \int_{-\infty}^{\infty} \phi(t-\eta)\pi(\eta) d\eta$, using Gaussian quadrature. We truncate the integration interval to $[-20, 20]$, which captures the vast majority of the probability mass and ensures computational stability. The implied weight is then computed as $\omega(t) = m(t)/t$. Finally, to prevent numerical division-by-zero errors near the origin ($t \to 0$), our implementation imposes a strictly positive lower bound (e.g., $10^{-10}$) on $t$.

\section*{Appendix C}

\begin{figure}[!htpb]
    \centering
    \includegraphics[width=0.4\linewidth]{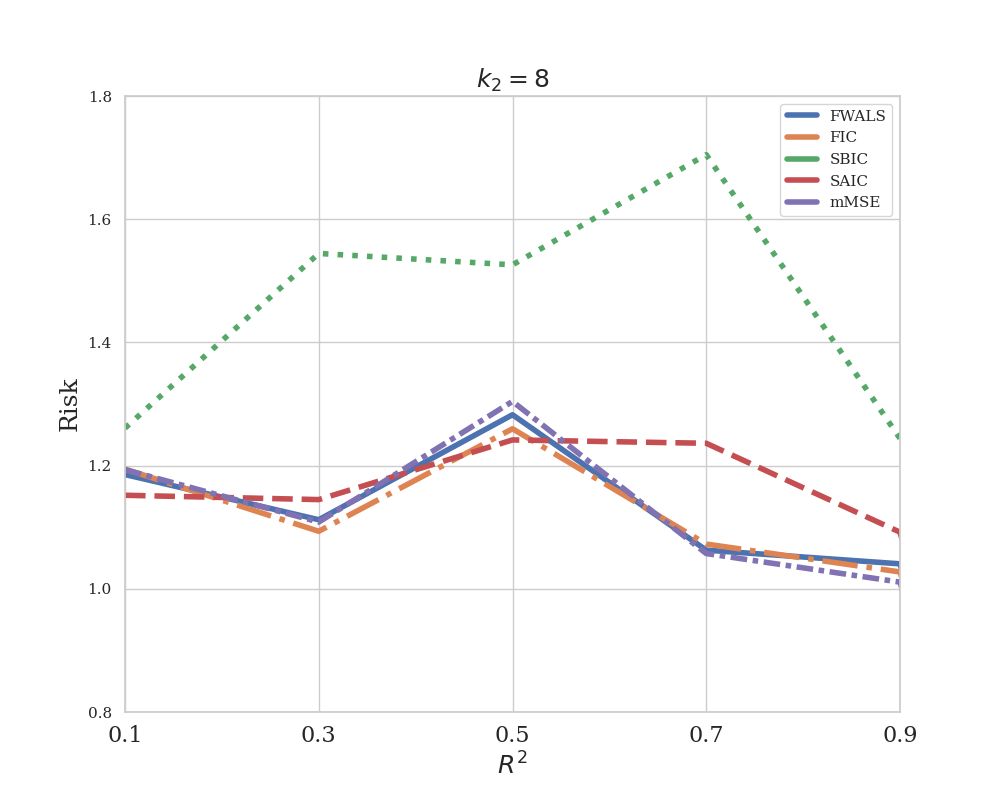}
    \includegraphics[width=0.4\linewidth]{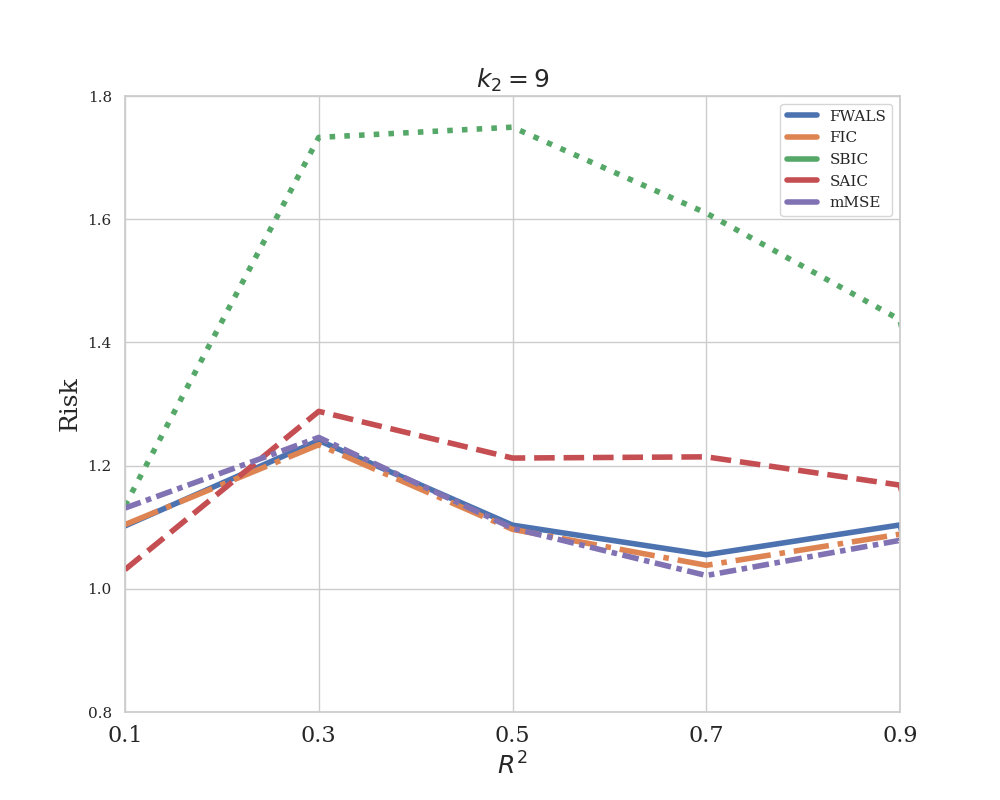}
    \includegraphics[width=0.4\linewidth]{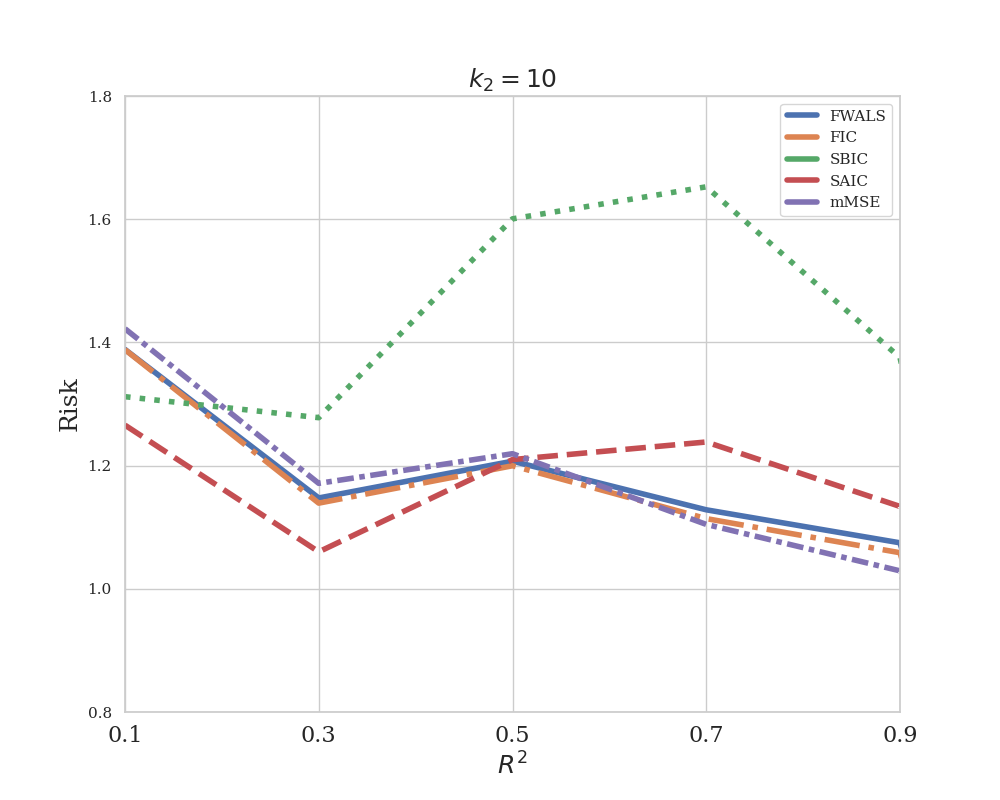}
    \includegraphics[width=0.4\linewidth]{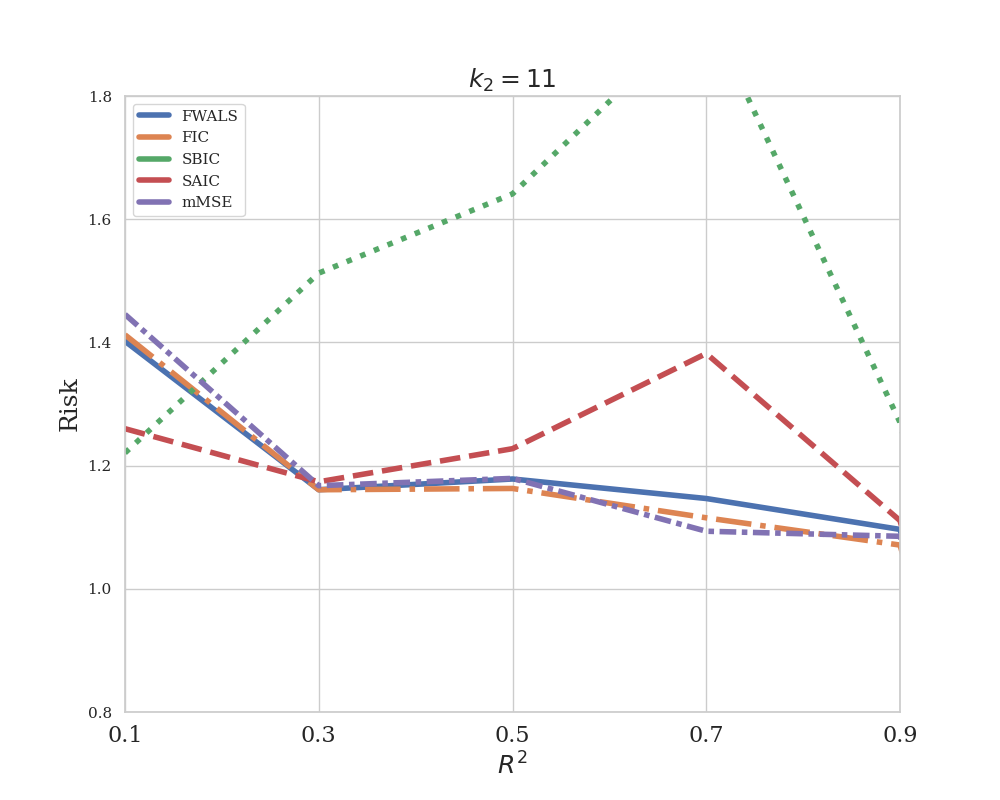}
    \caption{Risk with Different $k_2$s~($N=100$,$\tau=0.3$)}
    \label{fig10}
\end{figure}

\begin{figure}[!htpb]
    \centering
    \includegraphics[width=0.4\linewidth]{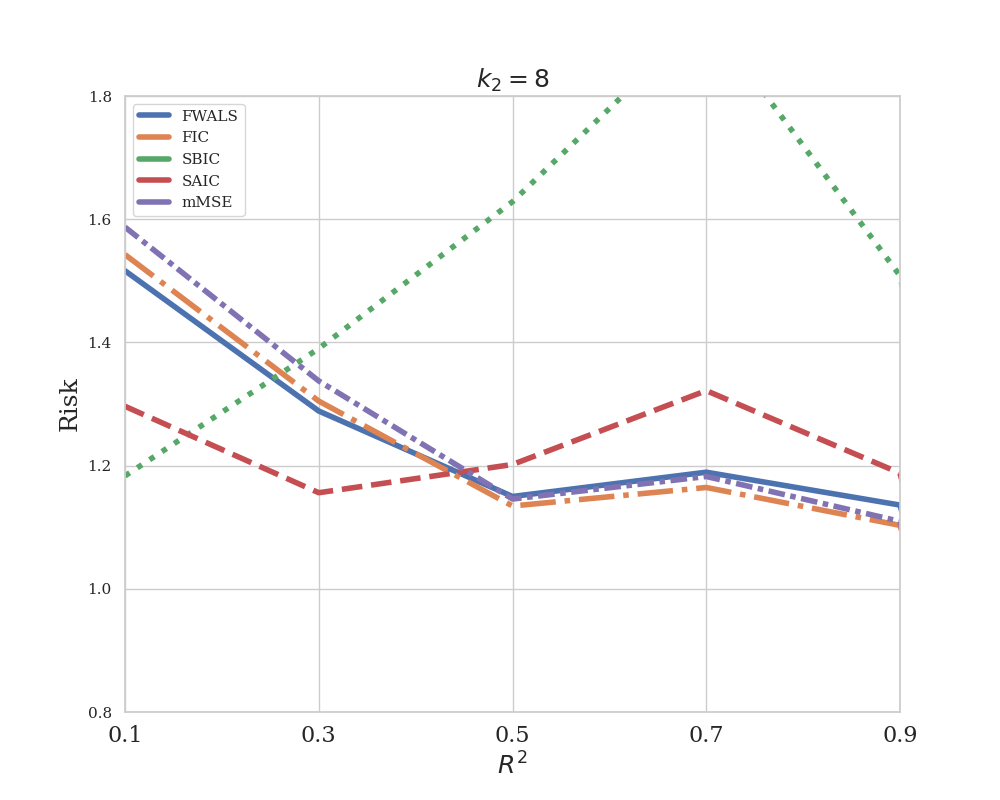}
    \includegraphics[width=0.4\linewidth]{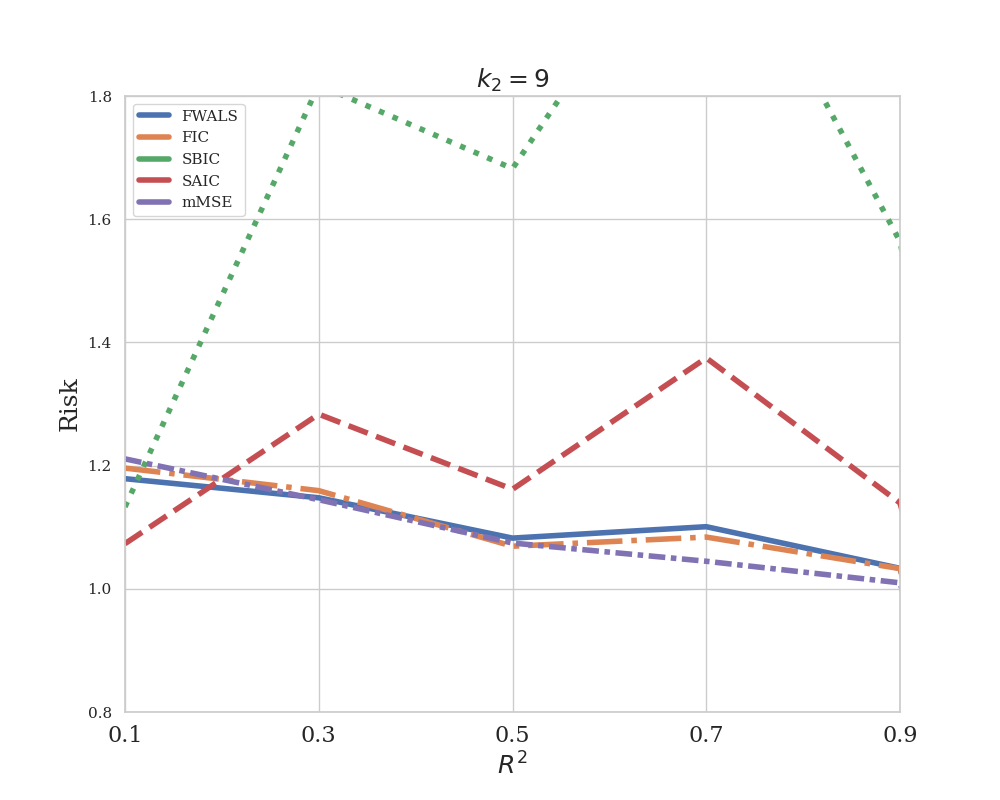}
    \includegraphics[width=0.4\linewidth]{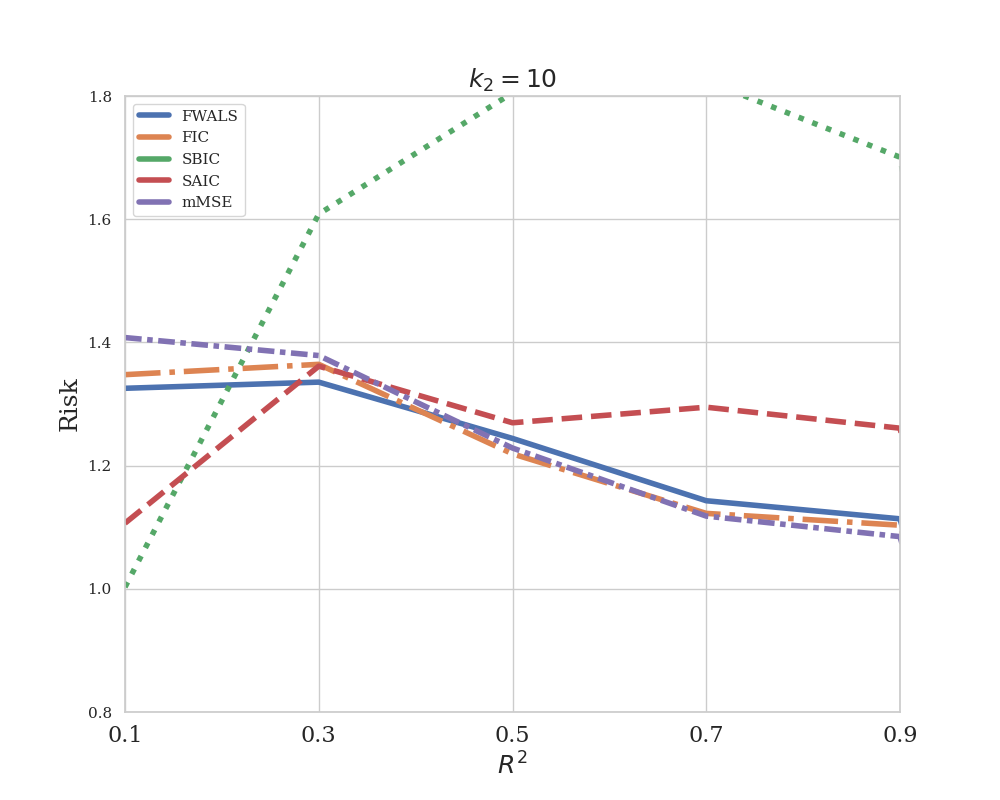}
    \includegraphics[width=0.4\linewidth]{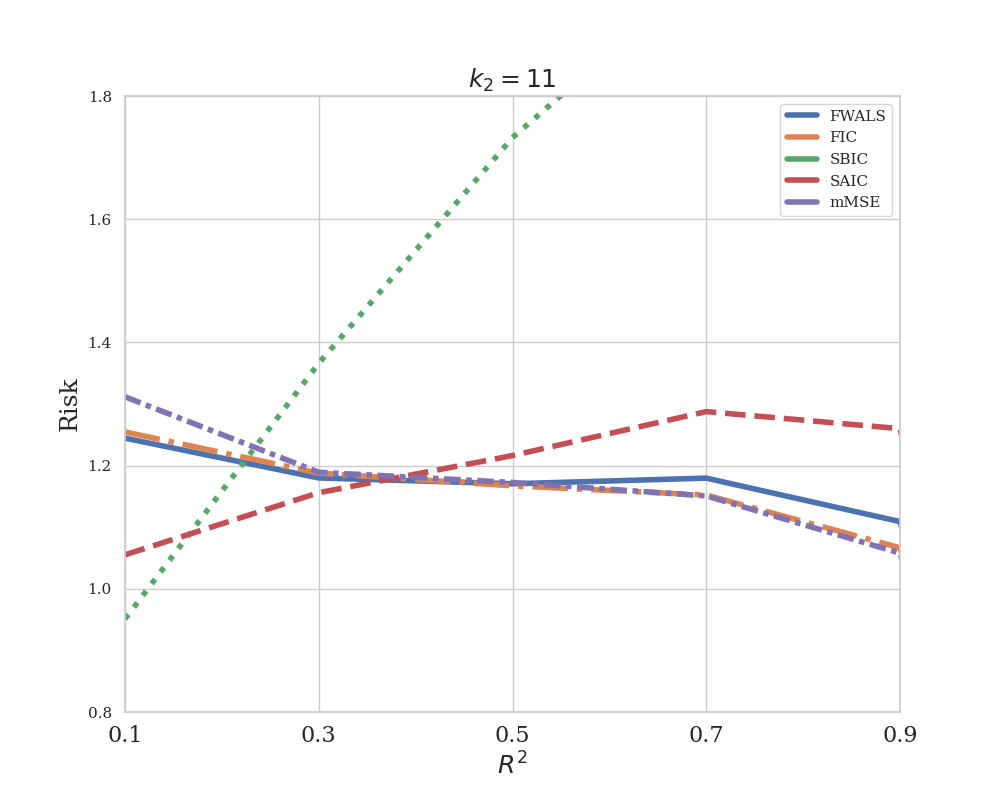}
    \caption{Risk with Different $k_2$s~($N=100$,$\tau=0.5$)}
    \label{fig11}
\end{figure}

\begin{figure}[!htpb]
    \centering
    \includegraphics[width=0.4\linewidth]{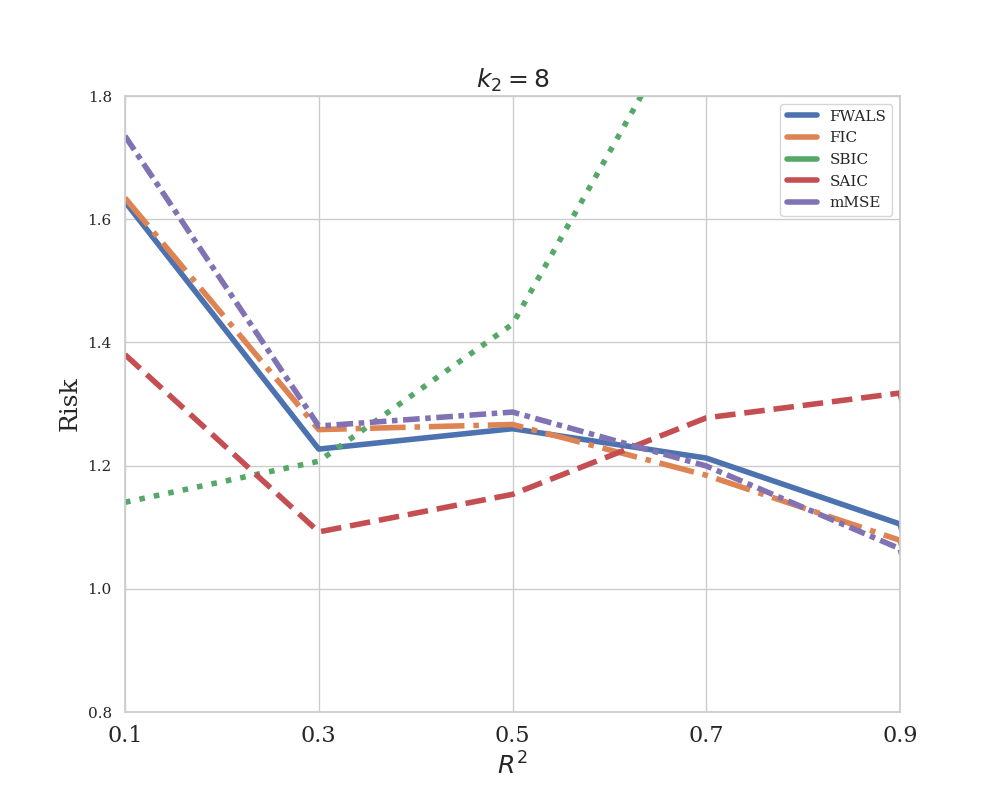}
    \includegraphics[width=0.4\linewidth]{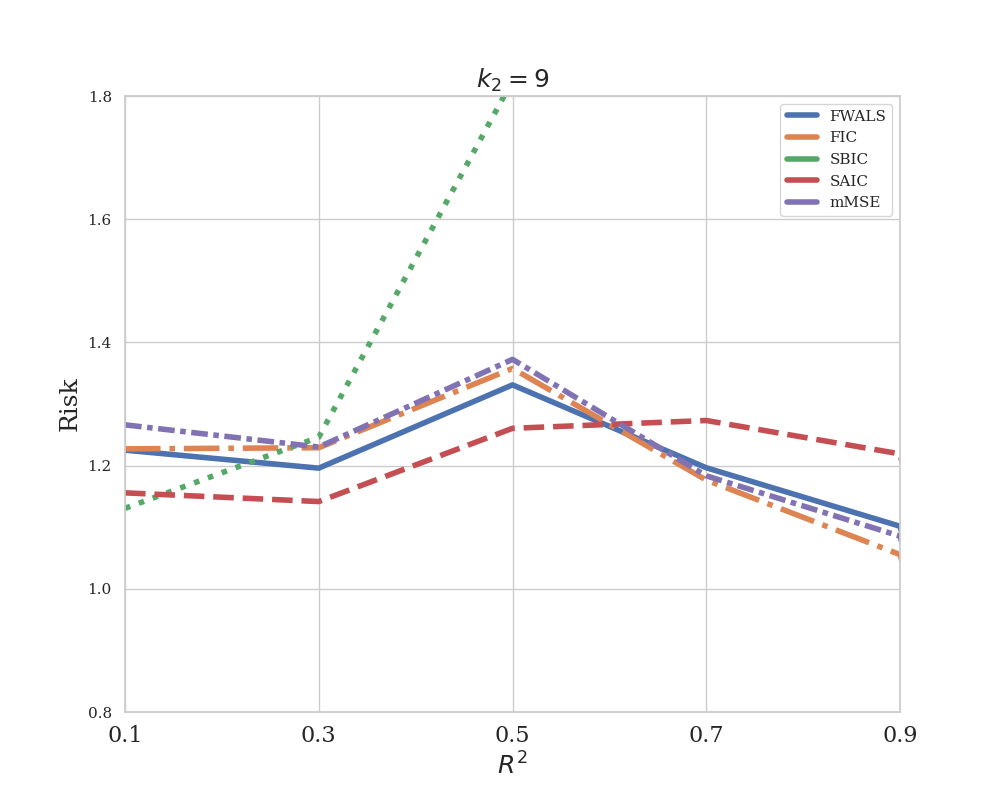}
    \includegraphics[width=0.4\linewidth]{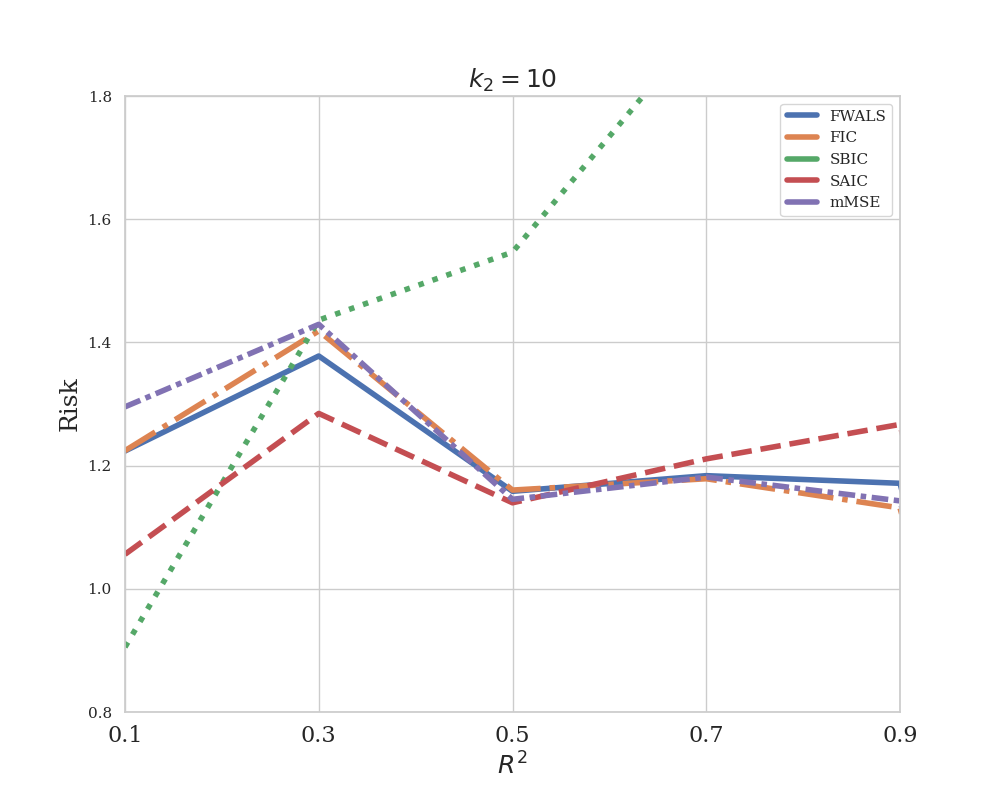}
    \includegraphics[width=0.4\linewidth]{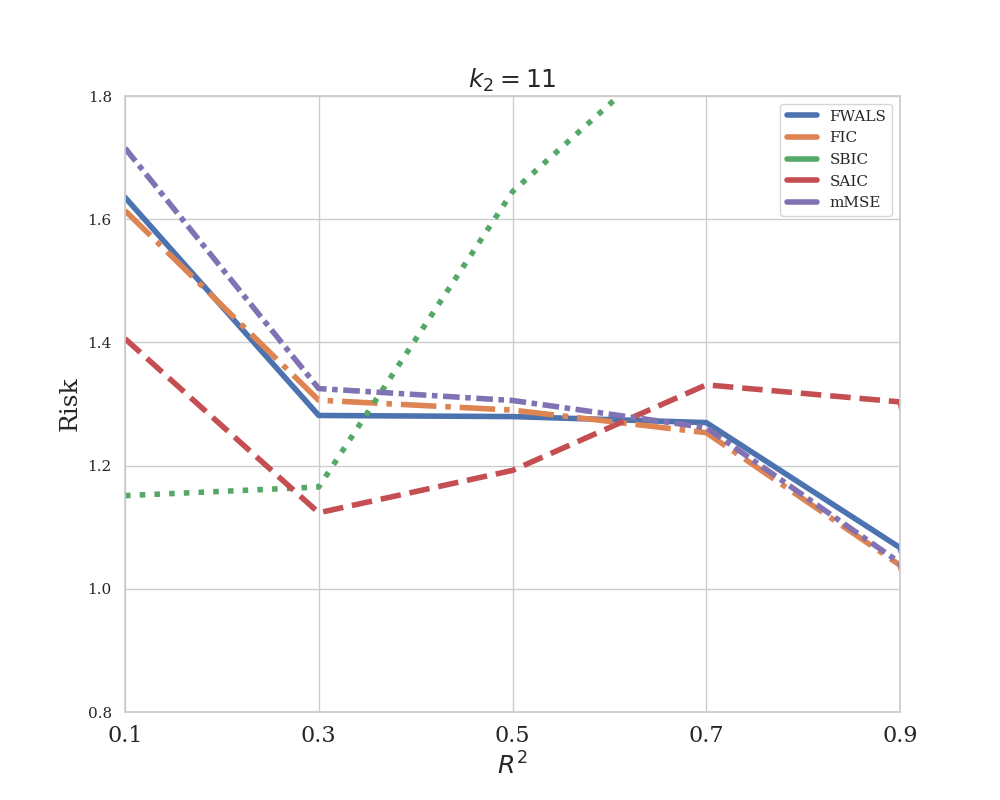}
    \caption{Risk with Different $k_2$s~($N=100$,$\tau=0.7$)}
    \label{fig12}
\end{figure}

\end{document}